\newcommand{\comment}[1]
{\ifthenelse{\boolean{commentson}}
   {{\par\noindent\mbox{}{\small[ *** #1 ]\par}\noindent\par}}{}}
\newcommand{\PP}{\mathbb P}
\newcommand{\EE}{\mathbb E}
\newcommand{\Var}{{\mathbb V}\!{\rm ar}}
\DeclareMathAlphabet{\pazocal}{OMS}{zplm}{m}{n}
\newcommand{\cc}[1]{\textnormal{#1}} % Complexity Class
\renewcommand{\text}[1]{\mbox{\rm \,#1\,}}        % To write text in mathmode
\renewcommand{\emptyset}{\varnothing}  % The usual empty set,  not LaTeX std.
\newcommand{\Q}{\mbox{$\mathbb Q$}}
\newcommand{\N}{\mbox{$\mathbb N$}}
\newcommand{\Ordo}[0]{\mathcal{O}}
\newcommand{\opt}[0]{\mbox{\sf Opt}}
\def\Gc{{G^c}}
\def\tdn{{\gamma^t}}
\newtheorem{theorem}{Theorem}[section]
\newtheorem{lemma}[theorem]{Lemma}
\newtheorem{proposition}[theorem]{Proposition}
\newtheorem{corollary}[theorem]{Corollary}
\newtheorem{conjecture}[theorem]{Conjecture}
\newtheorem{problem}{Problem}
\newtheorem{remark}{Remark}
\theoremstyle{remark}
\newtheorem{example}{Example}
\newcommand{\ignore}[1]{}
\title{Tropical Dominating Sets in Vertex-Coloured Graphs}
\author{ J.-A. Angl\`es d'Auriac$^1$, Cs.\ Bujt\'as$^2$,
 A. El Maftouhi$^1$, M. Karpinski$^3$, \\ Y. Manoussakis$^1$, L. Montero$^1$, N. Narayanan$^1$, L. Rosaz$^1$,\\ J. Thapper$^4$, \vspace{1ex} Zs.\ Tuza$^{2,5}$\\
  \small $^1$ Universit\'e Paris-Sud, L.R.I., B\^at. 650, 91405 Orsay Cedex, France.
  \\ \small $^2$ Department of Computer Science and Systems Technology,\\
  \small University of Pannonia, 8200 Veszpr\'em, Egyetem u.~10, Hungary.
 \\ \small
$^3$ University of Bonn, Department of Computer Science,\\ \small
 Friedrich-Ebert-Allee 144, 53113 Bonn, Germany.\\
\small $^4$ Universit\'e Paris-Est, Marne-la-Vall\'ee, LIGM, B\^at. Copernic,\\ \small 5 Bd Descartes, 77454 Marne-la-Vall\'ee Cedex 2, France.
 \\ \small
$^5$ Alfr\'ed R\'enyi Institute of Mathematics,
 Hungarian Academy of Sciences,\\ \small 1053 Budapest, Re\'altanoda u.\ 13--15, Hungary.\\
 \small angles@lri.fr, bujtas@dcs.uni-pannon.hu, hakim.maftouhi@orange.fr,\\
 \small
  marek@cs.uni-bonn.de, yannis@lri.fr, lmontero@lri.fr,\\ \small narayana@gmail.com,
  rosaz@lri.fr, thapper@u-pem.fr, tuza@dcs.uni-pannon.hu}
\date{}
\begin{document}

\maketitle
\begin{abstract}
Given a vertex-coloured graph, a dominating set is said to be tropical
if every colour of the graph appears at least once in the set.
Here, we study minimum tropical dominating sets
 from structural and algorithmic points of view.
First, we prove that the tropical dominating
set problem is NP-complete even when restricted to a simple path.
Then, we establish upper bounds related to various parameters of the graph such as minimum
degree and number of edges. We also give an optimal upper bound for random graphs. Last, we give approximability and inapproximability results for
general and restricted classes of graphs, and establish a FPT algorithm for interval graphs.
\end{abstract}

\vspace*{5mm}
\textbf{Keywords}: Dominating set, Vertex-coloured graph, Approximation, Random graphs

\section{Introduction}
%In this work we deal with tropical substructures in vertex-coloured graphs first
%introduced by one of the authors in.
Vertex-coloured graphs are useful in various situations. For instance, the Web graph may be considered as
a vertex-coloured graph where the colour of a vertex represents the content of
the corresponding page (red for mathematics, yellow for physics, etc).
Given a vertex-coloured graph $G^c$, a subgraph $H^c$ (not necessarily induced)
 of $G^c$ is said to be tropical if and only if
each colour of $G^c$ appears at least once in $H^c$. Potentially, any kind of usual structural problems
(paths, cycles, independent and dominating sets, vertex covers, connected components, etc.) could be studied in their tropical version.
This new tropical concept is close to, but quite different from, the colourful concept used for paths in vertex-coloured graphs~\cite{AkLiNi,Li,Lin}.
It is also related to (but again different from) the concept of \emph{colour patterns} used
in bio-informatics~\cite{FeFeHeVi}. Here, we study minimum tropical dominating sets in vertex-coloured graphs.
Some ongoing work on tropical connected components, tropical paths and tropical homomorphisms can be found in~\cite{enpre2,enpre1,enpre3}.
A general overview on the classical dominating set problem can be found in~\cite{Domination}.

Throughout the paper let $G = (V,E)$ denote a simple undirected non-coloured graph. Let $n=|V|$ and $m=|E|$.
Given a set of colours $\mathcal{C}=\{1,...,c\}$, $\Gc= (V^c,E)$ denotes a vertex-coloured graph
where each vertex has precisely one colour from $\mathcal{C}$ and each colour
 of $\mathcal{C}$ appears on at least one vertex.
%whose vertices are coloured by one of the colours in $\mathcal{C}$.
The colour of a vertex $x$ is denoted by $c(x)$.
A subset $S\subseteq V$ is a \emph{dominating set} of $\Gc$ (or of $G$),
if every vertex either belongs to $S$ or has a neighbour in $S$.
The \emph{domination number} $\gamma(\Gc)$ ($\gamma(G)$) is the
size of a smallest dominating set of $\Gc$ ($G$).
A dominating set $S$ of $\Gc$ is said to be
\emph{tropical} if each of the $c$ colours appears at least once among the
vertices of $S$. The \emph{tropical domination number} $\tdn(\Gc)$ is the size of
a smallest tropical dominating set of $\Gc$.
A \emph{rainbow dominating set} of $\Gc$ is a tropical dominating set with exactly $c$ vertices.
More generally, a $c$-element set with precisely one vertex from each colour
 is said to be a \emph{rainbow set}.
We let $\delta(\Gc)$ (respectively $\Delta(\Gc)$) denote the minimum (maximum) degree of $\Gc$.
When no confusion arises, we write $\gamma$, $\tdn$, $\delta$ and $\Delta$ instead of $\gamma(G)$, $\tdn(\Gc)$, $\delta(\Gc)$ and $\Delta(\Gc)$,
respectively.
We use the standard notation $N(v)$ for the (open) neighbourhood of vertex $v$,
 that is the set of vertices adjacent to $v$, and write $N[v]=N(v)\cup\{v\}$ for its
 closed neighbourhood.
The set and the number of neighbours of $v$ inside a subgraph $H$ is denoted by $N_H(v)$
 and by $d_H(v)$,  independently of whether $v$ is in $H$ or in $V(\Gc) - V(H)$.
Although less standard, we shall also write sometimes $v\in \Gc$  to abbreviate $v\in V(\Gc)$.
\vspace{5pt}

Note that tropical domination in a vertex-coloured graph $G^c$ can also be
 interpreted as ``simultaneous domination'' in two graphs which have a common
 vertex set. One of the two graphs is the non-coloured $G$ itself, the other one is
 the union of $c$ vertex-disjoint cliques each of which corresponds to a colour
 class in $G^c$. The notion of simultaneous dominating set\footnote{Also known under the names
 `factor dominating set' and `global dominating set' in the literature.}
 was introduced by Sampathkumar~\cite{Sampathkumar1989} and independently by Brigham and
 Dutton~\cite{Brigham1990}.
It was investigated recently by Caro and Henning~\cite{Caro2014} and also by further
 authors. Remark that $\delta \ge 1$ is regularly assumed for each factor graph in the
 results of these papers that is not the case in the present manuscript, as we
 do not forbid the presence of one-element colour classes.

The Tropical Dominating Set problem (TDS) is defined as follows.
\begin{problem}TDS\\
Input: A vertex-coloured graph $\Gc$ and an integer $k \geq c$.\\
Question: Is there a tropical dominating set of size at most $k$?
\end{problem}

The Rainbow Dominating Set problem (RDS) is defined as follows.
\begin{problem}RDS\\
Input: A vertex-coloured graph $\Gc$.\\
Question: Is there a rainbow dominating set?
\end{problem}

The paper is organized as follows.
In Section~\ref{npcomplet} we prove that RDS is NP-complete even when graphs are restricted to simple paths.
In Section~\ref{bounds} we give upper bounds for $\tdn$ related to the minimum degree and the number of edges. We give upper
bounds for random graphs in Section~\ref{randomtrop}. In Section~\ref{aprox} we give approximability and inapproximability results for TDS.
We also show that the problem is FPT (fixed-parameter tractable)
 on interval graphs when parametrized by the number of colours.

\section{NP-completeness}\label{npcomplet}
In this section we show that the RDS problem is NP-complete. This implies that the TDS problem is NP-complete too.

\begin{theorem}
The RDS problem is NP-complete, even when the input is restricted to vertex-coloured paths.
\end{theorem}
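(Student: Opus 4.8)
The plan is to verify membership in \NP\ and then to prove hardness by a polynomial reduction from \textsc{3-Sat}. Membership is immediate: a candidate subset $S\subseteq V(\Gc)$ can be tested in linear time for being a dominating set and for containing exactly one vertex of each of the $c$ colours, so a rainbow dominating set is a polynomial-size, polynomially verifiable certificate. Since RDS fixes the budget to $c$ by definition, the reduction must produce, from a $3$-CNF formula $\phi$, a vertex-coloured path $\Gc$ such that $\phi$ is satisfiable if and only if $\Gc$ has a dominating set of size exactly $c$ that is also a rainbow set.

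The first ingredient is a \emph{backbone} of \emph{forced} vertices: every such vertex receives a private colour occurring only once, so in any rainbow set all of them must be selected. Placing forced vertices every third position along long ``neutral'' stretches dominates those stretches for free, while calibrating the total number of vertices to be close to $3c$ makes the budget tight: on a path, $c$ selected vertices dominate at most $3c$ vertices, so almost every selected vertex must cover three new ones. This tightness rigidifies the feasible configurations and prevents any selection from being ``wasted,'' which is what lets local choices encode logic faithfully.

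The second ingredient encodes $\phi$. For each variable $x_i$ I would introduce a two-element colour class $\{a_i,b_i\}$ inside a local \emph{variable gadget}; since exactly one of $a_i,b_i$ is chosen, this models the truth value of $x_i$, both choices dominating the gadget itself but leaving different ``holes'' to be patched elsewhere. For each clause $C_j$ I would introduce a \emph{clause vertex} $w_j$ whose only possible dominators along the path are three \emph{literal vertices}, one per literal of $C_j$, so that $w_j$ is dominated exactly when some literal of $C_j$ is made true. Because a variable may occur in several clauses while only one vertex of its colour may be selected, I would handle multiplicity by creating one copy per occurrence and chaining the copies with \emph{consistency (equality) gadgets} that, under the tight budget, force all copies of a variable to agree; this is what routes a truth value along the one-dimensional host path from its variable gadget to every clause that uses it.

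Correctness then splits into the two standard directions. From a satisfying assignment one selects, in each gadget, the vertices dictated by the assignment and checks that all of $V(\Gc)$ is dominated by exactly $c$ vertices, one per colour; conversely, from a rainbow dominating set one reads off an assignment, using the budget tightness to argue that the choices inside the variable and consistency gadgets cannot deviate from the two intended patterns without leaving some $w_j$ or some backbone gap undominated. The main obstacle is precisely this one-dimensional wiring: laying out the variable, clause, and consistency gadgets along a single path, and tuning the lengths of the neutral stretches and the placement of the private colours, so that the global budget equals $c$ \emph{exactly} and the only feasible rainbow dominating sets are those induced by satisfying assignments. Verifying the equality gadgets and the exact budget count is where the bulk of the technical work will lie.
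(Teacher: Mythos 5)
Your high-level strategy matches the paper's---membership in \NP\ is immediate, and hardness comes from \textsc{3-Sat} via a path built from forced private-colour vertices, clause vertices that can only be dominated by literal vertices, and some mechanism forcing consistency among the occurrences of a variable. But what you have written is a plan, not a proof: you explicitly defer ``the bulk of the technical work'' (designing the variable, clause and equality gadgets, and verifying the budget), and for this theorem that deferred part \emph{is} the theorem. On a path every gadget must be wired through its two endpoints only, and the whole difficulty is showing that such one-dimensional gadgets exist and compose correctly; asserting that they can be built does not establish it.

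Two concrete points in the plan are problematic. First, a clause vertex $w_j$ ``whose only possible dominators are three literal vertices, one per literal of $C_j$'' is geometrically impossible on a path: a vertex has at most two neighbours, so at most three dominators including itself. The paper's construction resolves this by making the three literal-occurrence vertices of a clause \emph{consecutive}, so that the middle one is dominated only by the triple itself (the outer two are dominated by flanking forced vertices); your formulation as stated cannot be realised. Second, your rigidification mechanism---calibrating $n\approx 3c$ so that the domination budget is tight---is not what the paper uses and is not obviously compatible with the rainbow constraint; the paper's construction has $n\approx 2c$ and derives all rigidity from \emph{colour sharing}: each non-private colour is placed on exactly two carefully chosen vertices, so selecting one excludes the other, and a chain of such exclusions (colours $i_0,i_1,\dots$ threading through the gadgets for the occurrences of a literal, plus a shared ``link'' colour between each antithetic pair $l_i,\overline{l_i}$) propagates a truth value and forbids selecting both a literal and its negation. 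Note also that the paper does not force all occurrences of a variable to be \emph{equal} (your equality gadgets); it only forbids antithetic occurrences from both being selected, which is weaker but suffices to extract a coherent assignment. Until you exhibit explicit path gadgets and verify both directions of the reduction on them, the argument has a genuine gap at its core.
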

\begin{proof}
Clearly the RDS problem is in NP.
% We will prove a stronger result, that is, we restrict the input graph to be just simple path.
The reduction is obtained from the 3-SAT problem.
Let $(I,Y)$ be an instance of 3-SAT where $I=(l_1\vee l_2\vee l_3)\wedge(l_4\vee l_5\vee l_6)\wedge\ldots\wedge(l_{X-2}\vee l_{X-1}\vee l_X)$
is a collection of $\tau=X/3$ clauses on a finite set $Y=\{y_1,\ldots,y_m\}$ of boolean variables.
From this instance, we will define a vertex-coloured
path $\mathcal{P}$ such that $\mathcal{P}$ contains a rainbow dominating set if and only if $(I,Y)$ is satisfiable.

In order to define $\mathcal{P}$, we first construct a segment
$\mathcal{P}_{0}=v v' v_0 v_1 \ldots v_{4\tau}$, and we colour
its vertices as follows.
Vertices $v$ and $v'$ are coloured black. Vertices $v_0, v_4, v_8, \ldots, v_{4\tau-4}, v_{4\tau}$ are each coloured with a unique colour.
The remaining vertices, that will be henceforth called \emph{clausal}, are coloured from $v_1$ to $v_{4\tau-1}$ with colours
$1_0, 2_0, 3_0, \ldots, X_0$. Figure~\ref{examplep0} shows $\mathcal{P}_0$ if $X=6$.

\FloatBarrier
\begin{figure}[h]
	\centering
	\includegraphics[scale=.65]{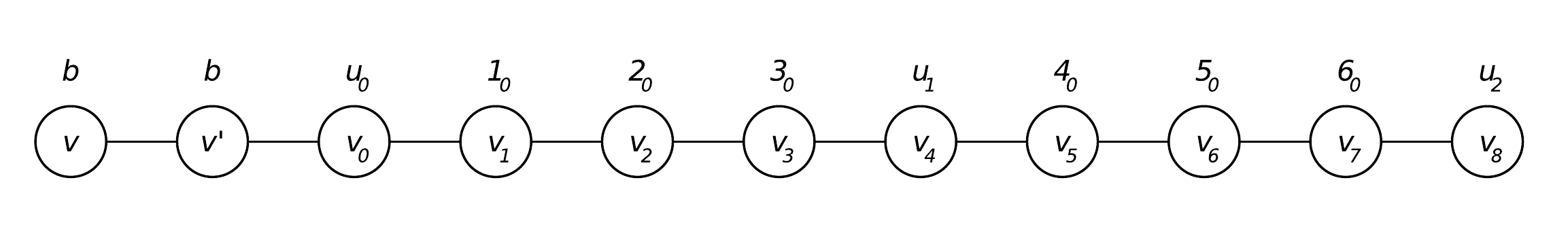}
	\caption{Example of $\mathcal{P}_0$ when $X=6$}
	\label{examplep0}
\end{figure}
\FloatBarrier

Next, we define a number of gadgets as follows.
If a pair of literals $l_i$ and $l_j$ satisfies that $l_i=\overline{l_j}$, we say that $l_j$ is
\emph{antithetic} to $l_i$. For each literal $l_i$, $i=1,\ldots,X$, we consider the list of all literals
$l_{i_1},l_{i_2}, \ldots, l_{i_{k_i}}$ that are antithetic to $l_i$.
Now, to each literal $l_{i_f}$, $f=1,\ldots,k_i$, we associate a \emph{constraint gadget} $w_{i,i_f}$ on five vertices defined as follows.
Vertex $A$ is an \emph{artificial} one and is coloured with a unique colour.
Vertex $P$ is the \emph{positive} one of the gadget, it
corresponds to literal $l_i$ being true and has colour $i_{f}$.
The \emph{middle} vertex $M$ is coloured with black.
Vertex $N$ is the \emph{negative} one, it corresponds to $l_i$ being false and has colour $i_{f-1}$.
Vertex $L$ is the \emph{link} vertex and represents the relation between $l_i$ and $l_{i_f}$.
If $i < i_f$, then $L$ is coloured with colour $c_{i,i_f}$, otherwise, with colour $c_{i_f,i}$. See Figure~\ref{fig:gadget}.
Finally, in order to obtain path $\mathcal{P}$, we concatenate all these gadgets to
$\mathcal{P}_0$ in a serial manner and we close the path with a final vertex $F$ of
a unique colour. Clearly, this construction is polynomial as we have $O(X^2)$ gadgets.

\FloatBarrier
\begin{figure}[h]
	\centering
	\includegraphics[scale=.8]{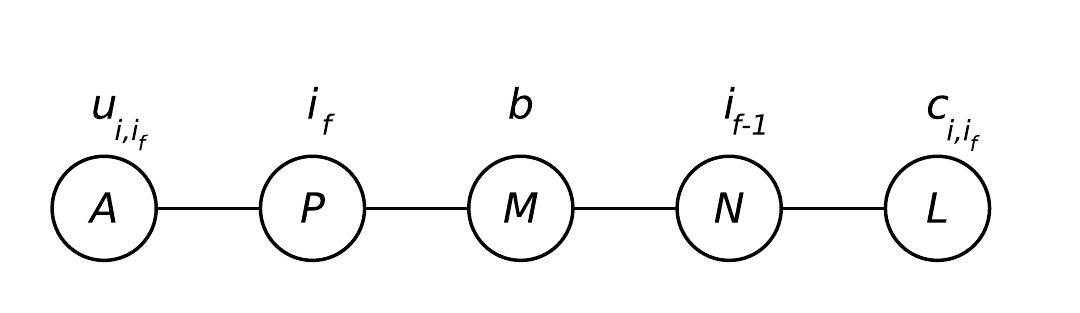}
	\caption{A constraint gagdet $w_{i,i_f}$ (where $i < i_f$)}
	\label{fig:gadget}
\end{figure}
\FloatBarrier

We first prove the \emph{"if case"}.
Consider an assignment to the variables $y_1,\ldots,y_m$ that satisfies the 3-SAT instance. From this assignment we obtain a rainbow
dominating set $\mathcal{D}$ for $\mathcal{P}$ as follows:
\begin{enumerate}
\item We add $v$ and every vertex of a unique colour to $\mathcal{D}$.
\item For each true literal $l_{i}$, we add
the clausal vertex of colour $i_{0}$ to $\mathcal{D}$ and for all $f=1,\ldots,k_i$ (if any), we add the positive vertex $P$ and
the link vertex $L$ of $w_{i,i_f}$ to $\mathcal{D}$.
\item For each false literal $l_{i}$ and all $f=1,\ldots,k_i$ (if any), we add the negative vertex $N$ of $w_{i,i_f}$ to $\mathcal{D}$.
\item If there are vertices with some colour still not present in $\mathcal{D}$, we add them to the set.
\end{enumerate}
We can check that each colour is present exactly once in $\mathcal{D}$.
In fact, this conclusion is straightforward for black colour,
for every unique colour and for each colour $i_f$.
For the colour on the link vertex $L$ of a gadget $w_{i,j}$, as literals $l_{i}$ and $l_{j}$ are antithetic,
then exactly one of them is true, so it stands that either $c_{i,j}$ is present once in $\mathcal{D}$ if $i<j$, or $c_{j,i}$ otherwise.

We show now that $\mathcal{D}$ is a dominating set.
Observe first that $v'$ is dominated by $v$ (and also by $v_0$).
Then, as every clause contains at least one true literal, every
clausal vertex is either in $\mathcal{D}$ or has some neighbour in $\mathcal{D}$.
Last, every constraint gadget has either its negative vertex or both
its positive and link vertex in $\mathcal{D}$. In both cases, all vertices of the
gadget are covered by $\mathcal{D}$.
Therefore $\mathcal{D}$ is a rainbow dominating set as claimed.
An example of the construction of $\mathcal{P}$ and its corresponding rainbow dominating set $\mathcal{D}$ for an arbitrary assignment to the
variables is shown in Figure~\ref{fullpath}.

\FloatBarrier
\begin{figure}[h]
	\centering
	\includegraphics[scale=.65]{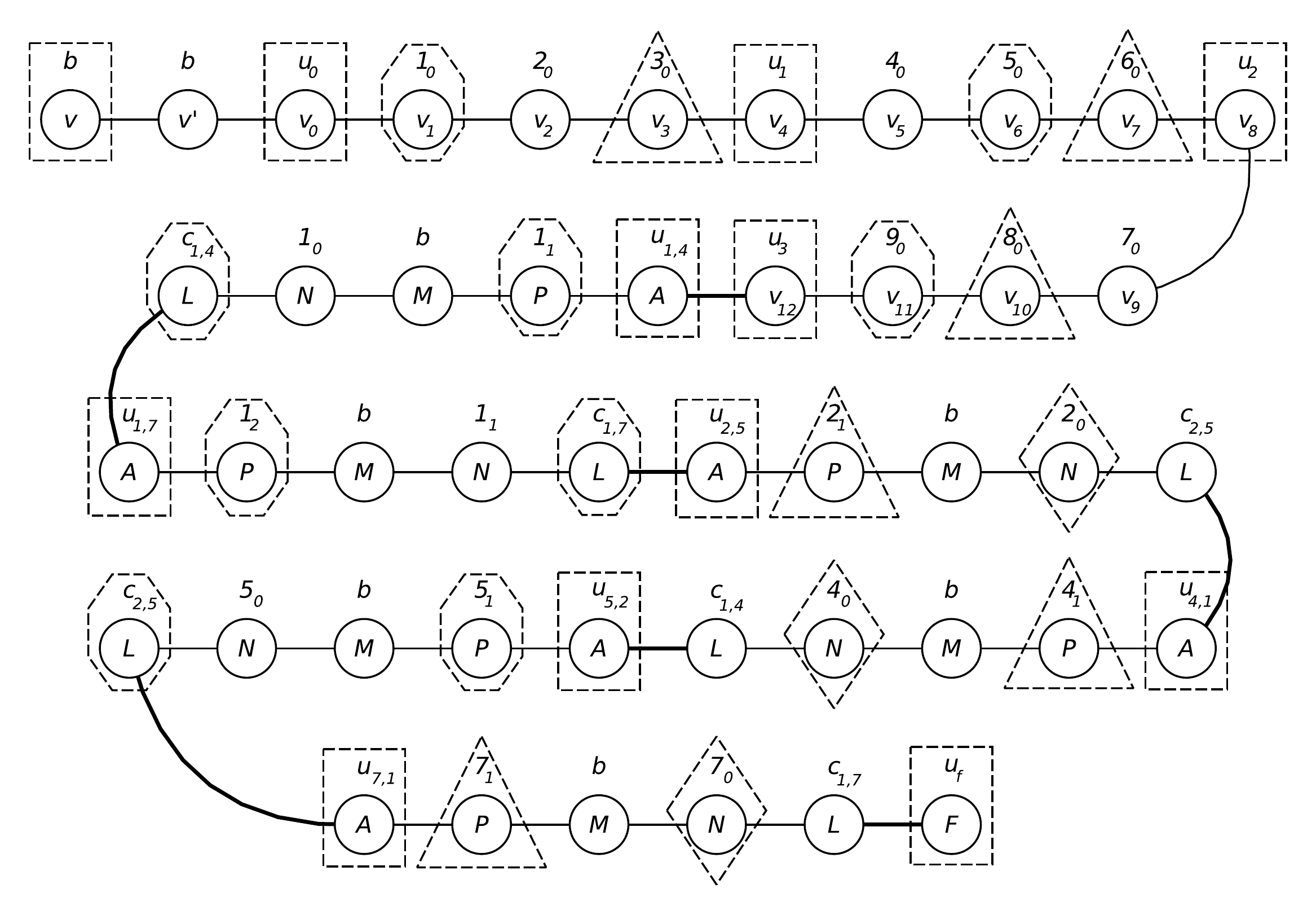}
\caption{Construction of $\mathcal{P}$ for the formula $(y_1\vee \overline{y_2}\vee y_3)\wedge(\overline{y_1}\vee y_2\vee y_3)\wedge(\overline{y_1}\vee y_3\vee y_4)$
where $\mathcal{P}=\mathcal{P_0}w_{1,4}w_{1,7}w_{2,5}w_{4,1}w_{5,2}w_{7,1}F$. The colour of the vertices is shown on top of them and the thick
edges represent the division of the gadgets. Vertices surrounded by dotted $\APLbox$,$\octagon$,$\lozenge$ and $\APLup$ are the ones taken by the steps $1,2,3$ and $4$, respectively, to obtain the
rainbow dominating set $\mathcal{D}$ corresponding to the assignment $y_1=y_2=y_4=True$ and $y_3=False$. Note that the step $4$ is not needed
for the set to be dominating but it is to become rainbow.}
	\label{fullpath}
\end{figure}
\FloatBarrier

We now prove the \emph{"only if"} case.
Given the path $\mathcal{P}$ constructed as before, let $\mathcal{D}$ be a rainbow dominating set for $\mathcal{P}$.
We consider first a partial assignment where for every clausal vertex of colour $i_0$ that is in
$\mathcal{D}$, we assign the value to the corresponding variable such that $l_i$ is true.
Suppose by contradiction that this assignment method leads to some incoherences, that is,
some variable ends up being assigned both true and false.
It implies that $\mathcal{D}$ contains two clausal vertices of colour $i_0$ and
$j_0$, respectively, such that $l_i$ and $l_j$ are antithetic. Suppose without loss of generality that $i < j$.
As $l_i$ and $l_j$ are antithetic, there exist two gadgets $w_{i,i_f}$, $i_f=j$,
and $w_{j,j_{f'}}$, $j_{f'}=i$, where the link vertices are both coloured $c_{i,j}$ (as $i<j$).
We will show that both of those vertices are in $\mathcal{D}$ lead to a contradiction.
Consider $w_{i,i_f}$ (a similar proof works for $w_{j,j_{f'}}$).

Suppose first that $f=1$. If $L$ does not belong to $\mathcal{D}$, then
$N$ must be in $\mathcal{D}$ as $M$ being black, cannot belong to $\mathcal{D}$ since $\mathcal{D}$ must already contain either $v$ or $v'$ that are
also coloured black. This is a contradiction since the colour of $N$ is $i_0$ and $\mathcal{D}$ already contains the clausal vertex on that colour.
Therefore $L$ belongs to $\mathcal{D}$.

Suppose next that $f>1$. By the same argument as before, if $L$ does not belong to $\mathcal{D}$ then $N$ must be in $\mathcal{D}$.
As the colour of $N$ is $i_{f-1}$, then the vertex $P$ of $w_{i,i_{f-1}}$ cannot belong to $\mathcal{D}$ since it has the same colour.
Therefore, as $\mathcal{D}$ must dominate the vertex $M$ of $w_{i,i_{f-1}}$, we have that the vertex $N$ on colour $i_{f-2}$ of
$w_{i,i_{f-1}}$ belongs to $\mathcal{D}$. Following the same argument, we have that the negative vertex $N$ belongs to $\mathcal{D}$
for each $w_{i,i_{k}}$, $k=f,\ldots,1$. This is a contradiction since the vertex $N$ of $w_{i,i_1}$ has colour $i_0$
and $\mathcal{D}$ already contains the clausal vertex on that colour. Therefore $L$ belongs to $\mathcal{D}$.

As the same reasoning holds for $w_{j,j_{f'}}$ as well, $\mathcal{D}$
contains two vertices of colour $c_{i,j}$. This is a contradiction to the hypothesis that $\mathcal{D}$
is rainbow. Hence our partial assignment is coherent.
We complete the assignment by setting to true every unassigned variable.

Now, it remains to see that the obtained assignment satisfies the 3-SAT instance.
Indeed, as every clausal vertex is covered by $\mathcal{D}$,
then for every clause we have one literal who was assigned to true, that is,
the assignment is a solution to the 3-SAT instance.
This completes the argument and the proof of the theorem.
\end{proof}

\section{Upper Bounds}\label{bounds}

We begin with an easy observation which intends to avoid some
trivial technical distinctions later on.

\begin{proposition}
 If $\Gc$ is a vertex-coloured graph with $c$ colours on $n$ vertices,
  and $\delta(\Gc)\ge n-c$, then every rainbow set is a rainbow dominating set of $\Gc$.
 As a consequence, $\tdn(\Gc)=c$ holds.
\end{proposition}

\begin{proof}
Let $D$ be any rainbow set, and $v\in V(G^c)-D$ any vertex. Then
 $|N[v]|\ge\delta(\Gc)+1\ge n-c+1=|V(G^c)-D|+1$, thus $D$ dominates $v$.
\end{proof}

Without further reference to this proposition, throughout the text below,
 we shall disregard whether or not any proof works for graphs of
 minimum degree at least $n-c$, because we know the tropical domination
 number of those graphs exactly.

\begin{proposition}\label{thm:first} For any graph $\Gc$, $\tdn \le
\gamma+c-1$.  Furthermore, there are extremal graphs that attain this bound.
\end{proposition}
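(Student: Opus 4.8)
The plan is to prove the inequality by padding a minimum ordinary dominating set with one vertex per missing colour, and then to exhibit a family of coloured graphs on which no tropical dominating set can beat the bound.

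\emph{Upper bound.} First I would fix a minimum dominating set $D$ of $G$, so $|D|=\gamma$. Since $\Gc$ carries $c\ge 1$ colours, each occurring somewhere, we have $n\ge 1$ and hence $\gamma\ge 1$; in particular $D\neq\emptyset$, so the vertices of $D$ already realise at least one colour, say exactly $t\ge 1$ distinct colours occur in $D$. For each of the $c-t$ colours missing from $D$, I pick an arbitrary vertex of that colour and add it; call the resulting set $D'$. Then $D'$ meets every colour class, and $D'\supseteq D$ is still dominating because any superset of a dominating set dominates. Hence $D'$ is a tropical dominating set and $\tdn\le |D'|\le \gamma+(c-t)\le \gamma+c-1$.

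\emph{Tightness.} For the extremal part I would build, for every target value $g=\gamma\ge 1$, a coloured graph attaining equality. Take $g$ vertex-disjoint stars $S_1,\dots,S_g$. Let $S_1$ be a star with centre $u_1$ and pendant leaves $w_1,\dots,w_{c-1}$ together with one further pendant leaf $z$; colour both $u_1$ and $z$ with colour $1$, and colour $w_j$ with colour $j+1$ for $j=1,\dots,c-1$, so that colours $2,\dots,c$ each occur on a single vertex. Let each remaining $S_i$ ($i\ge 2$) be a single edge with both endpoints of colour $1$. This $\Gc$ uses exactly $c$ colours, and since the components are disconnected and each needs one vertex to be dominated, $\gamma=g$. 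For the matching lower bound on $\tdn$: any tropical dominating set must contain every $w_j$, as $w_j$ is the unique vertex of its colour, giving $c-1$ forced vertices; to dominate $z$, whose only neighbour is $u_1$, it must also contain $z$ or $u_1$, i.e.\ at least one more vertex in $S_1$; and each of the $g-1$ remaining stars forces at least one vertex. Altogether $\tdn\ge (c-1)+1+(g-1)=g+c-1$, which with the upper bound yields $\tdn=\gamma+c-1$.

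\emph{Main obstacle.} The only delicate point is the construction: one must prevent the forced colour-bearing vertices from accidentally \emph{helping} domination, which would save a vertex and destroy equality. Indeed, if $S_1$ consisted of $u_1$ and the leaves $w_1,\dots,w_{c-1}$ alone, then (for $g\ge 2$, where colour $1$ is supplied by the other stars) selecting all the $w_j$ would already dominate $u_1$ for free and yield only $g+c-2$. The extra pendant $z$ of colour $1$, reachable only through $u_1$, is exactly what restores the lost unit and forces the bound. As the simplest illustration, the case $g=1$ with all leaves given distinct colours — a star whose $c=n$ vertices are rainbow-coloured — already gives $\tdn=n=\gamma+c-1$.
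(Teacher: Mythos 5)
Your upper-bound argument is exactly the paper's: pad a minimum dominating set of $G$ with one vertex for each missing colour, and observe that at least one colour is already present, so at most $c-1$ vertices are added. That part is correct and identical in approach.

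For the tightness, you take a genuinely different (but equally valid) route. The paper's extremal example is connected: a cycle $C_{3\gamma}$ with $c-1$ uniquely coloured leaves attached to a single vertex $u$; the forced leaves dominate only $u$, and the remaining $3\gamma-1$ cycle vertices still require $\gamma$ further dominators. Your example is a disjoint union of stars, with the key pendant vertex $z$ of colour $1$ hanging off the centre $u_1$ so that the forced unique-colour leaves $w_1,\dots,w_{c-1}$ cannot dominate $S_1$ by themselves. Your lower-bound count $(c-1)+1+(g-1)=\gamma+c-1$ is correct, and your explicit discussion of why $z$ is needed is a point the paper leaves to the reader (it only says the bound "can easily be checked"). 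The trade-off: your construction is simpler to verify but disconnected, whereas the paper's witnesses the bound within the class of connected graphs, which matters in the context of the rest of Section~3 where connectivity is routinely assumed. Both establish the proposition as stated.
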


\begin{proof}
The upper bound follows from the fact that taking a minimal dominating set of $\Gc$ and then adding vertices from every colour
not present in the dominating set gives a tropical dominating set. To construct extremal graphs, for $\gamma\in \N$,
consider a cycle of length $3\times \gamma$ and add $c-1$ leaves to
one vertex $u$ of the cycle. Colour the new leaves with
unique colours and the rest of the graph with a single colour.
Taking $u$ and every vertex which its distance to $u$
is a multiple of three, is a minimum tropical dominating set that attains the bound.
The reader can easily check that $\tdn = \gamma+c-1$.
\end{proof}

\begin{proposition} Given an integer $k>0$, if $m \geq {n-k+c-1 \choose 2} + n-k$ and $n> k+c-2$, then $\tdn\leq k$.
Furthermore, there are graphs with that number of edges and $\tdn = k$.
\end{proposition}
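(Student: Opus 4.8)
The plan is to derive the upper bound from Proposition~\ref{thm:first} together with the classical extremal estimate, due to Vizing, for the number of edges of a graph with prescribed domination number. First note that a tropical dominating set has at least $c$ vertices, so the statement is vacuous unless $k\ge c$: if $k<c$ then $n-k+c-1\ge n$, whence $\binom{n-k+c-1}{2}+(n-k)>\binom{n}{2}$ and the hypothesis cannot hold. So assume $k\ge c$ and argue by contraposition. Suppose $\tdn\ge k+1$. By Proposition~\ref{thm:first} we have $\tdn\le\gamma+c-1$, hence $\gamma\ge k-c+2\ge 2$. It therefore suffices to show that a graph with $\gamma\ge k-c+2$ cannot have as many as $\binom{n-k+c-1}{2}+(n-k)$ edges, which then contradicts the assumption on $m$.

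For this I would invoke Vizing's bound: any graph of order $n$ with $\gamma\ge 2$ satisfies $m\le\lfloor (n-\gamma)(n-\gamma+2)/2\rfloor$. Since the right-hand side is decreasing in $\gamma$, substituting the smallest admissible value $\gamma=k-c+2$ and writing $s=n-k+c-1$ yields $m\le\lfloor (s-1)(s+1)/2\rfloor=\lfloor (s^2-1)/2\rfloor$. It then remains to verify the purely arithmetic inequality $\lfloor (s^2-1)/2\rfloor<\binom{s}{2}+(n-k)$. Using $\binom{s}{2}=(s^2-s)/2$ and $n-k=s-c+1$, this amounts to comparing the fractional correction term $(s-1)/2$ (when $s$ is odd) or $(s-2)/2$ (when $s$ is even) against $n-k=s-c+1$, and the hypothesis $n>k+c-2$, i.e.\ $s\ge 2c-2$, is exactly what forces the inequality to be strict. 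I expect this short parity check to be the only delicate point of the forward direction; the role of the hypothesis $n>k+c-2$ is moreover sharp, since at $n=k+c-2$ Vizing's bound already permits $\binom{s}{2}+(n-k)$ edges together with $\gamma=k-c+2$, so the strict inequality on $n$ cannot be relaxed.

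For the ``furthermore'' part I would exhibit a tight example of the expected shape. Take a clique $Q$ on $s=n-k+c-1$ vertices; reserve $c-1$ of them and give these pairwise distinct unique colours $2,\dots,c$, colouring every other vertex (the remaining $s-(c-1)=n-k$ clique vertices and all vertices below) with colour $1$. The $c-1$ reserved vertices then lie in \emph{every} tropical set. There are $n-s=k-c+1$ further vertices, which I attach to $Q$ using exactly $n-k$ edges, arranged so that each of them is forced its own private dominator: if $k-c+1\ge n-k$, make $n-k$ of them pendants at the $n-k$ unreserved clique vertices and leave the other $k-c+1-(n-k)$ isolated; otherwise give the $k-c+1$ attached vertices pairwise disjoint neighbourhoods inside $Q$ whose sizes sum to $n-k$ (which exactly exhausts the $n-k$ unreserved clique vertices). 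In either regime one picks one dominator per attached vertex, and these $k-c+1$ dominators, being clique vertices, also dominate all of $Q$; disjointness shows no fewer will do, so $\gamma=k-c+1$. Combining this with the forced presence of the $c-1$ reserved vertices gives $\tdn=\gamma+c-1=k$, while the edge count is $\binom{s}{2}+(n-k)$ by construction. The main obstacle here is the bookkeeping across the two parameter regimes, but each is routine once the disjointness of the forced dominators is arranged.
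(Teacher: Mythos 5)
Your proof is correct and follows essentially the same route as the paper's: the upper bound is obtained from Vizing's edge bound for graphs of given domination number combined with $\tdn \le \gamma + c - 1$, and your extremal graph (a clique on $n-k+c-1$ vertices with $c-1$ uniquely coloured vertices, plus $k-c+1$ outside vertices whose pairwise disjoint neighbourhoods use up the $n-k$ unreserved clique vertices) coincides with the paper's construction. You merely supply more detail than the paper does --- the $\gamma\ge 2$ hypothesis in Vizing's theorem, the floor, the two attachment regimes, and the sharpness of $n>k+c-2$ --- where the paper compresses the arithmetic into a single ``we can check'' inequality.
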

\begin{proof}
We can check that $m \geq {n-k+c-1 \choose 2} + n-k > \frac{1}{2}(n-k+c)(n-k+c-2)$ for $n> k+c-2$, therefore,
by the contrapositive of Vizing's theorem stated in~\cite{Vizing}, we obtain a minimum dominating set on
at most $k-c+1$ vertices. Then we may add at most $c-1$ other vertices to this dominating set
to represent the colours that are absent. Thus obtain a tropical one.

To construct a graph that attains this bound, do the following. Consider a
clique on $n-k+c-1$ vertices. Pick some $c-1$ vertices and colour them with
$c-1$ unique colours. Every other vertex of the graph is coloured with the
remaining colour. Let $A$ be the set of the remaining $n-k$ vertices. Add a
new set $B$ of $k-c+1$ vertices.  Now make a bipartite graph between these
two sets such that each vertex in the set $A$ is adjacent to exactly one vertex of $B$.
It is easy to check that this graph has
the necessary number of edges. Also, at least $k-c+1$ vertices are needed to
dominate the vertices in $B$. This set can represent only one colour. So we need
to add the $c-1$ uniquely coloured vertices to the set to get the required
tropical dominating set.
\end{proof}

\def\Gd{{G_{\delta}}}

\begin{conjecture}\label{delta-general}
Let $\Gc$ be a connected graph with minimum degree $\delta$
and $c < n$. Then, $\tdn \leq \frac{(n-c+1) \delta}{3\delta-1} + c-1$.
\end{conjecture}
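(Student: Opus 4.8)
The plan is to build a tropical dominating set $D$ as the union of a \emph{colour part} $W$ and a \emph{domination part} $D'$, and to balance a single discharging argument between the two. As a sanity check on the shape of the bound, note that the case $c=1$ reads $\gamma\le\frac{n\delta}{3\delta-1}$ for connected graphs of minimum degree $\delta$: this interpolates Ore's $\gamma\le n/2$ at $\delta=1$, the McCuaig--Shepherd bound $\gamma\le 2n/5$ at $\delta=2$, and Reed's $\gamma\le 3n/8$ at $\delta=3$, and for large $\delta$ it follows from the probabilistic estimate $\gamma\le\frac{1+\ln(\delta+1)}{\delta+1}\,n$, which falls below $n/3<\frac{n\delta}{3\delta-1}$ once $\delta$ is a little past $10$. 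So the first step is to secure this uncoloured bound; its genuinely hard range is moderate $\delta$, where a Reed-type structural argument is needed.

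Granting the uncoloured machinery, the point of the construction is the following weighted count. Put into $W$ one vertex from each of $c-1$ colour classes, and let $D'$ dominate whatever $W$ does not. Assign every vertex of $\Gc$ one unit of charge and send each unit to a chosen dominator in $D=W\cup D'$; since $D$ dominates all of $V(\Gc)$, the total charge $n$ is redistributed onto $D$. The heart of a Reed-type argument is that each domination vertex can be made to receive at least $\frac{3\delta-1}{\delta}=3-\frac1\delta$ units (itself plus enough newly dominated neighbours), while a colour vertex is only required to receive the trivial $1$ unit. Then
\[
n\ \ge\ |D'|\cdot\frac{3\delta-1}{\delta}+(c-1),
\]
so $|D'|\le\frac{(n-c+1)\delta}{3\delta-1}$, and since $D$ is a tropical dominating set, $\tdn\le|D|\le|D'|+(c-1)\le\frac{(n-c+1)\delta}{3\delta-1}+c-1$. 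The $c$-th colour is free: it is covered by $D'$, or else one vertex of $D'$ is rechosen in that colour at no change to the count.

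Concretely I would insert $W$ before the structural domination step, choosing its vertices to be colour representatives of large degree so that their closed neighbourhoods are ``used up'' first, and only then run the discharging that certifies the $3-\frac1\delta$ efficiency on $D'$. The bookkeeping must send charge from each vertex of $N[W]$ toward $W$ and charge from the remaining vertices toward $D'$, and must verify the two per-vertex lower bounds simultaneously. The reason one cannot simply delete $W$ and recurse is that removing $c-1$ vertices may lower the minimum degree of the residual graph below $\delta$, which would break the ratio; hence the argument has to live on the whole of $\Gc$, with $W$ treated as a small, deliberately under-charged exceptional set.

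I expect two obstacles of unequal depth. The shallow one is that pre-selecting the colour vertices introduces overlaps $N[x]\cap N[x']$ and a boundary between $N[W]$ and the part handled by $D'$; controlling these so that no vertex's charge is double-counted is fiddly but routine, and the small exceptional graphs of the $C_4,\,C_7$ type (which already spoil the uncoloured bound at $\delta=2$) can be excluded using connectivity together with the hypothesis $c<n$. The deep obstacle is the uncoloured bound itself: $\gamma\le\frac{n\delta}{3\delta-1}$ is known only for $\delta\le 3$ and, trivially, for large $\delta$, and appears to be open for moderate $\delta$. This is precisely why the statement is posed as a conjecture, and any complete proof must either resolve the uncoloured case in the open range or restrict $\delta$ to the values where it is available.
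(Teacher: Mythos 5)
The statement you set out to prove is posed in the paper as a \emph{conjecture}: the paper contains no proof of it, only partial results --- the case $\delta=1$, the case $\delta\ge 9$ (via the probabilistic Theorem~\ref{delta-9}), and, for $2\le\delta\le 8$, only a weaker bound carrying an extra additive term $\frac{\delta(\delta-2)}{3\delta-1}$. Your proposal does not close this gap, and the failure point is the step you present as already available: the claim that ``each domination vertex can be made to receive at least $3-\frac{1}{\delta}$ units while a colour vertex is only required to receive $1$ unit.'' This relativized efficiency statement does not follow from the uncoloured bound $\gamma\le\frac{\delta n}{3\delta-1}$ used as a black box; it would require re-opening the structural proofs behind that bound (McCuaig--Shepherd, Reed, Sohn--Xudong, Xing--Sun--Chen) and showing that they tolerate a prescribed set $W$ of $c-1$ deliberately under-charged vertices --- equivalently, that the part of $G$ not served by $W$ can still be dominated at ratio $\frac{\delta}{3\delta-1}$ even though its minimum degree may have dropped below $\delta$. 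That is exactly the open content of the conjecture, not ``fiddly but routine'' bookkeeping: the authors, facing the same issue for $2\le\delta\le 8$, resort to deleting the colour representatives and repairing the minimum degree by adding auxiliary cliques and edges, and this repair is precisely what costs them the extra additive term in their Theorem.

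Second, your assessment of the literature inverts the actual situation. You identify the uncoloured bound as the ``deep obstacle,'' claiming it is known only for $\delta\le 3$ and large $\delta$ and open for moderate $\delta$; but Lemma~\ref{delta-general-uncol} asserts $\gamma(G)\le\frac{\delta n}{3\delta-1}$ for \emph{all} $\delta$ (with seven exceptional graphs at $\delta=2$), the moderate cases being covered by Sohn and Xudong ($\delta=4$), Xing, Sun and Chen ($\delta=5$), and Arnautov (large $\delta$). So the uncoloured inequality is not where the difficulty lies; conversely, the step you classify as shallow --- protecting $N[W]$ while certifying efficiency $3-\frac{1}{\delta}$ on $D'$ --- is where every known approach breaks. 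Your charging arithmetic, $n\ge|D'|\cdot\frac{3\delta-1}{\delta}+(c-1)$ hence $|D'|\le\frac{(n-c+1)\delta}{3\delta-1}$, does correctly explain the shape of the conjectured bound, but as a proof the proposal assumes the conjecture's essential content at its central step.
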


We are motivated to raise this conjecture by the fact that its
 particular case for $c=1$ (i.e., simple graphs without colours) holds true.
Its proof is a long story, however, taking nearly a half century, along the
 works by Ore~\cite{Ore-1962} (1962, $\delta=1$),
  Blank~\cite{Blank:1973} (1973), and independently McCuaig and Shepherd~\cite{McCuaig1989} (1989) ($\delta=2$),
  Arnautov~\cite{arnautov-1974} (1974, $\delta\geq 6$ with a stronger upper bound in general),
  Reed~\cite{reed-1996} (1996, $\delta=3$),
  Xing, Sun and Chen~\cite{XSC-2006} (2006, $\delta=5$)
  and Sohn and Xudong~\cite{SohnXudong:09} (2009, $\delta=4$).

%% the combination of results in
%%  \cite{Ore-1962,Blank:1973,reed-1996,SohnXudong:09,XSC-2006,arnautov-1974}
%% for the respective values $\delta\in\{1,2,3,4,5,\geq 6\}$

\begin{lemma}\label{delta-general-uncol}
 If $G$ is a connected graph with $n$ vertices and minimum degree $\delta$,
  then $\gamma(G)\le\frac{\delta n}{3\delta-1}$,
  with precisely seven exceptions if $\delta=2$.
\end{lemma}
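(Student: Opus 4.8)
The plan is to treat this as a compilation result rather than a single argument: the inequality $\gamma(G)\le \frac{\delta n}{3\delta-1}$ is established separately for each value of $\delta$, so I would assemble the known bounds into one statement. First I would record that the coefficient $\frac{\delta}{3\delta-1}$ specializes to the classical fractions $\tfrac12,\tfrac25,\tfrac38,\tfrac4{11},\tfrac5{14}$ for $\delta=1,2,3,4,5$ and then decreases monotonically toward $\tfrac13$; this observation is what lets a single clean formula absorb all the individual theorems.

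Next I would split into cases on the minimum degree. For $\delta=1$, connectedness guarantees that $G$ has no isolated vertex, so Ore's theorem~\cite{Ore-1962} gives $\gamma\le n/2=\frac{\delta n}{3\delta-1}$ with no exceptions. For $\delta=3,4,5$ I would invoke, respectively, Reed~\cite{reed-1996}, Sohn and Xudong~\cite{SohnXudong:09}, and Xing, Sun and Chen~\cite{XSC-2006}, each of which yields exactly the required fraction for connected graphs. For $\delta\ge 6$ I would appeal to Arnautov~\cite{arnautov-1974}, whose estimate is at least as strong as $\frac{\delta n}{3\delta-1}$ throughout this range; here I would insert a short monotonicity/numeric check confirming that Arnautov's bound does not exceed $\frac{\delta n}{3\delta-1}$ for every $\delta\ge 6$, so that no additional cases remain.

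The delicate case, and the one I expect to be the main obstacle, is $\delta=2$. Here the clean bound $\gamma\le 2n/5$ fails for a finite list of small connected graphs, and the content of the lemma is precisely that this list has exactly seven members. I would cite Blank~\cite{Blank:1973} and McCuaig and Shepherd~\cite{McCuaig1989}, whose theorem asserts that $\gamma\le 2n/5$ holds for every connected graph of minimum degree $2$ with the sole exception of seven explicitly described graphs. Identifying these seven graphs and checking that each genuinely violates the bound (while nothing else does) is the only part of the statement carrying real combinatorial substance, and it is the reason the exceptions cluster entirely at $\delta=2$.

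Finally, I would note why connectivity is needed to pin the count down to exactly seven: the underlying degree bounds already hold for any graph without isolated vertices, but permitting disconnected graphs would reintroduce violations through disjoint copies of the exceptional components, producing infinitely many exceptions. Thus the connectivity hypothesis is essential for the phrase ``precisely seven exceptions'' to be correct, and with it in force the four cited ranges together cover all $\delta\ge 1$ and complete the argument.
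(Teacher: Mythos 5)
Your proposal coincides with the paper's own treatment of this lemma: the paper gives no independent proof, justifying the statement purely by the same case-by-case compilation of known results --- Ore for $\delta=1$, Blank and McCuaig--Shepherd (with the seven exceptional graphs) for $\delta=2$, Reed for $\delta=3$, Sohn--Xudong for $\delta=4$, Xing--Sun--Chen for $\delta=5$, and Arnautov for $\delta\ge 6$. The one caution is that your planned numeric check for $\delta\ge 6$ would fail if you take Arnautov's bound in its commonly quoted form $n\bigl(1+\ln(\delta+1)\bigr)/(\delta+1)$, since that quantity drops below $\frac{\delta n}{3\delta-1}$ only for $\delta\ge 9$ (precisely why the paper's corollary deriving the conjectured bound from its probabilistic theorem assumes $\delta\ge 9$); for $\delta\in\{6,7,8\}$ one must, as the paper implicitly does, rely on the stronger statement of Arnautov's result itself rather than on this weakened form.
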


We first prove Conjecture \ref{delta-general} for $\delta = 1$.

\begin{theorem}
If $\Gc$ is a connected graph with $n>c\ge 1$, then  $\tdn \leq \frac{n+c-1}{2}$.
\end{theorem}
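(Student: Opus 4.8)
The plan is to avoid the lossy estimate $\tdn\le\gamma+c-1$ (Proposition~\ref{thm:first}) combined with $\gamma\le n/2$ (the $\delta=1$ case of Lemma~\ref{delta-general-uncol}), which only yields $\tdn\le n/2+c-1$, and instead to exploit that a connected graph carries \emph{two} disjoint dominating sets. Since $n>c\ge 1$ forces $n\ge 2$, connectivity forbids isolated vertices, so a minimal dominating set $D_1$ of $\Gc$ has the property that $D_2:=V\setminus D_1$ is again a dominating set (Ore~\cite{Ore-1962}). For $i=1,2$ let $a_i$ be the number of distinct colours occurring in $D_i$; augmenting $D_i$ by one vertex for each of the $c-a_i$ absent colours produces a tropical dominating set of size $t_i:=|D_i|+(c-a_i)$, so that $\tdn\le\min(t_1,t_2)$.

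First I would bound the sum. Since $|D_1|+|D_2|=n$ and every colour occurs in $D_1$ or in $D_2$, we have $a_1+a_2\ge c$, so $t_1+t_2=n+2c-(a_1+a_2)\le n+c$ and hence $\min(t_1,t_2)\le\frac{n+c}{2}$. If some colour occurs in both parts, then $a_1+a_2\ge c+1$, giving $t_1+t_2\le n+c-1$ and $\min(t_1,t_2)\le\frac{n+c-1}{2}$, as desired. Otherwise the colours are partitioned between the two parts and $t_1+t_2=n+c$; now if $n+c$ is odd, integrality of the $t_i$ already yields $\min(t_1,t_2)\le\frac{n+c-1}{2}$. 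Thus the only situation that can fail the bound is when $n+c$ is even and $t_1=t_2=\frac{n+c}{2}$, with the colour classes perfectly split between $D_1$ and $D_2$.

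This last case is genuinely delicate, and I expect it to be the main obstacle, since the counting above is then exactly half a unit short. To recover the missing unit I would use the hypothesis $n>c$: because $|D_1|+|D_2|=n>c=a_1+a_2$, at least one part, say $D_2$, contains two vertices of a common colour. The goal is then to exhibit a tropical dominating set of size $t_2-1$ by adding to $D_2$ the $a_1$ representatives of the colours absent from $D_2$ while simultaneously \emph{deleting} one vertex $x\in D_2$ whose colour is repeated in $D_2$ (so that tropicality survives) and which is dominated by the enlarged set (so that domination survives). Connectivity enters precisely here: a cross edge between $D_1$ and $D_2$, together with the freedom to choose the representatives of the missing colours inside $D_1$, should let me route the domination of such an $x$ through the newly added vertices. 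Turning this swap into a clean, case-free deletion — in particular guaranteeing that the repeated-colour vertex we discard is not simultaneously the unique dominator of some vertex — is the crux; everything else is bookkeeping on the quantities $|D_i|$ and $a_i$.
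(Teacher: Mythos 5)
Your argument is incomplete: everything up to the last case is correct (the averaging $t_1+t_2\le n+2c-(a_1+a_2)$, the gain of one unit when a colour repeats across the two parts, and the parity argument), but the entire difficulty of the theorem has been pushed into the case where $n+c$ is even, the colour classes are split perfectly between $D_1$ and $D_2=V\setminus D_1$, and $t_1=t_2=\frac{n+c}{2}$ --- and there you only describe a hoped-for swap rather than prove it. Concretely, you need to exhibit a repeated-colour vertex $x\in D_2$ and a choice of representatives $R\subseteq D_1$ such that $(D_2\cup R)\setminus\{x\}$ still dominates; this requires both that $x$ itself acquires a dominator and that $x$ is not the unique dominator of some vertex of $D_1\setminus R$. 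Nothing in your setup rules out that \emph{every} repeated-colour vertex of $D_2$ privately dominates some vertex of $D_1$ (note $D_2$, being the complement of a minimal dominating set, need not itself be minimal, so you cannot invoke minimality on that side either), and the symmetric deletion from $D_1$ runs into the private neighbours guaranteed by the minimality of $D_1$. You acknowledge this is ``the crux,'' which is exactly right: as written, the bound $\tdn\le\frac{n+c-1}{2}$ is not established in the one case where it is not a formality.

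For comparison, the paper avoids this half-unit shortfall by working from the other end: it fixes a rainbow set $A$ of size $c$, lets $B$ be the vertices outside $A$ with a neighbour in $A$ (nonempty by connectivity and $n>c$), and applies the bound $\gamma\le n/2$ only to $\Gc-A-B$ (or to $\Gc-A-B-D$ after splitting off the isolated vertices $D$), then balances $|B|$ against $|D|$: if $|B|>|D|$ it adds $D$, and otherwise it adds $B$ while discarding one vertex of $A$ whose colour already appears in $B$. The strict inequality $|B|\ge 1$ is what buys the $-1$ in $\frac{n+c-1}{2}$, with no extremal case left over. If you want to salvage your complementary-dominating-sets route, you would need a lemma of comparable strength for the balanced split case; until then the proof has a genuine gap.
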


\begin{proof}
Let $\Gc$ be a connected graph with minimum degree $\delta \geq 1$.
Let $A$ be a subset of vertices of $\Gc$ with each of the $c$ colours once.
Let $B = \{ v \in \Gc-A$ : $v$ has a neighbour in $A \}$.
Clearly $B \neq \emptyset$ as $\Gc$ is connected and has $n>c$.
Consider the graph $\Gc-A-B$. If this graph is empty, then $A$ is a tropical dominating set for $\Gc$ of size $c$
and we are done.

Now, suppose first that $\Gc-A-B$ has no isolated vertices. Then, since the domination number of a graph without isolated vercites is at most
half the order (cf.~\cite{Domination}),
we obtain a dominating set for $\Gc-A-B$ of size at most $\frac{n-c-|B|}{2}$. Now, as this is less than or equal to $\frac{n-c-1}{2}$,
adding the $c$ vertices of $A$ to this dominating set we obtain a tropical one for $\Gc$ of size at most $\frac{n-c-1}{2}+c=\frac{n+c-1}{2}$ as desired.

Suppose next that $\Gc-A-B$ has isolated vertices. Let $D$ be the set of isolated vertices of $\Gc-A-B$.
Let $k_1=|B|$ and $k_2=|D|$. As above, we obtain a dominating set $S$ for $\Gc-A-B-D$ of size $\frac{n-c-k_1-k_2}{2}$.
Now, if $k_1 > k_2$, then $S \cup D \cup A$ is a tropical dominating set of size $\frac{n-c-k_1-k_2}{2}+k_2+c \leq \frac{n+c-1}{2}$.
Otherwise, if $k_1 \leq k_2$, let $v \in A$ be a vertex such that its colour appears on some vertex in $B$.
Then, $S \cup B \cup (A-\{v\})$ is a tropical dominating set for $\Gc$ as $v$ is dominated by a vertex either in $A$ or in $B$.
The size of this tropical dominating set is $\frac{n-c-k_1-k_2}{2}+k_1+c-1 < \frac{n+c-1}{2}$ and this completes the proof.
\end{proof}

The validity of Conjecture~\ref{delta-general},
 for $\delta \ge 9$, will be a consequence of the following theorem.

\begin{theorem}\label{delta-9}
For any connected graph $G^c$  with minimum degree $\delta(G^c)=\delta$, either $\gamma^t (G^c)=c$ or
$\gamma^t (G^c) < \frac{1+\ln(\delta+1)}{\delta+1}(n-c+1)+c-1$ holds.
\end{theorem}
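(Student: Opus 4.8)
The plan is to adapt the classical probabilistic upper bound for the ordinary domination number (the Arnautov--Payan/Alon bound $\gamma(G)\le\frac{1+\ln(\delta+1)}{\delta+1}\,n$) so that the multiplicative constant is applied to only $n-c+1$ vertices, while the other $c-1$ colours are paid for additively. Concretely, I would first fix a set $R$ consisting of one vertex from each of $c-1$ of the colour classes, leaving one class --- the \emph{free} colour --- uncommitted; then $|R|=c-1$, and it remains only to dominate the vertices of $W:=V(\Gc)\setminus N[R]$, since every vertex of $N[R]$ is already dominated by $R$. Here $|W|\le n-c+1$, and, crucially, for each $w\in W$ we have $N[w]\cap R=\emptyset$, so the whole closed neighbourhood $N[w]$ (of size $\ge\delta+1$) lies inside $V(\Gc)\setminus R$. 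This last point is what lets the minimum degree $\delta$ survive the reduction even though the subgraph induced on $W$ may have small degrees.

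Next I would run the standard random experiment on the candidate set $V(\Gc)\setminus R$, which has $n-c+1$ vertices: include each of its vertices independently with probability $p$, obtaining a set $A$, and let $U$ be the set of those $w\in W$ with $N[w]\cap A=\emptyset$. Then $R\cup A\cup U$ dominates $\Gc$. Using $\EE[|A|]=p(n-c+1)$ and, for each $w\in W$, $\PP[w\in U]=(1-p)^{|N[w]|}\le(1-p)^{\delta+1}$, one gets
\[
\EE[\,|A\cup U|\,]\le (n-c+1)\bigl(p+(1-p)^{\delta+1}\bigr)<(n-c+1)\bigl(p+e^{-p(\delta+1)}\bigr),
\]
the strict inequality coming from $1-p<e^{-p}$ for $p\in(0,1)$. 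Choosing $p=\frac{\ln(\delta+1)}{\delta+1}$ minimises the right-hand side to $(n-c+1)\frac{1+\ln(\delta+1)}{\delta+1}$, so some outcome yields a dominating set $R\cup A\cup U$ of $\Gc$ of size strictly below $\frac{1+\ln(\delta+1)}{\delta+1}(n-c+1)+(c-1)$.

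The main obstacle is tropicality: $R$ supplies $c-1$ of the colours, but nothing so far forces the free colour to appear in $A\cup U$. If it does appear, the set is already tropical and we are done with the strict bound. The delicate case is the opposite one, where adding a fresh vertex of the free colour would cost one extra unit, while the slack $(n-c+1)\bigl(e^{-p(\delta+1)}-(1-p)^{\delta+1}\bigr)$ between the true expectation and the $e^{-p}$-relaxation exceeds $1$ only once $n-c+1$ is large. I would therefore take the free colour to be a most populous class and fold the indicator ``free colour absent from $A$'' into the expectation; since $\PP[\text{the free colour is absent from }A]=(1-p)^{n_1}$, where $n_1$ is the size of that class, this still beats the target whenever $(1-p)^{n_1}$ is below the slack above. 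For the finitely many remaining degenerate configurations (small $n-c+1$ together with large $\delta$), I expect to argue directly that a rainbow set is already dominating --- using that any vertex outside a rainbow set has at least $\delta+1$ closed neighbours among the at most $n-c+1$ vertices outside it --- which lands us in the alternative $\tdn=c$. Checking cleanly that these two cases are exhaustive is the part demanding the most care.
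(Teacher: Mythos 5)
Your random construction itself is sound: with $R$ covering $c-1$ colours, $W=V\setminus N[R]$ of size at most $n-c+1$, every $w\in W$ having $N[w]\subseteq V\setminus R$, and $p=\frac{\ln(\delta+1)}{\delta+1}$, you do obtain a dominating set $R\cup A\cup U$ of expected size strictly below $(c-1)+\frac{1+\ln(\delta+1)}{\delta+1}(n-c+1)$. The genuine gap is exactly where you suspected: paying for the free colour. In the worst case this costs a full extra vertex, so what you actually prove is the bound with $+c$ in place of $+c-1$, and neither of your two rescue mechanisms closes that unit gap over the whole parameter range. The second-order slack $(n-c+1)\bigl(e^{-p(\delta+1)}-(1-p)^{\delta+1}\bigr)$ is of order $(n-c+1)\frac{\ln^2(\delta+1)}{(\delta+1)^2}$, so it exceeds the free-colour failure probability (which can be $1-o(1)$ when even the largest colour class is small) only once $n-c+1\gtrsim(\delta+1)^2/\ln^2(\delta+1)$; while the fallback ``every rainbow set dominates'' needs $\delta\ge n-c$. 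This leaves the whole window $\delta+1\le n-c\lesssim(\delta+1)^2/\ln^2(\delta+1)$ uncovered: there the slack is $o(1)$, the free colour is missed with probability close to $1$, and rainbow sets need not dominate. So the two cases are not exhaustive, and I do not see how to make them so without a new idea.

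The paper recovers the missing unit by a deterministic preprocessing step you should adopt. Start from a full rainbow set $A$ (one vertex per colour, $|A|=c$); if it dominates, then $\gamma^t=c$ and you are in the first alternative. Otherwise pick an undominated vertex $v$ (so $N[v]\cap A=\emptyset$) and swap it in for the vertex $v'\in A$ with $c(v')=c(v)$. The resulting set $D_p$ is still rainbow --- tropicality is now automatic, there is no free colour to pay for --- and it dominates all of $D_p\cup N(v)$, i.e.\ at least $c+\delta$ vertices for free. Running your random experiment on $V\setminus D_p$ then gives $\EE(|D_0|)=p\,(n-c)$, and the set $D_1$ of still-undominated vertices ranges only over the at most $n-c-\delta$ vertices outside $D_p\cup N(v)$, so $\EE(|D_1|)<\frac{n-c-\delta}{\delta+1}$. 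The total $c+\frac{\ln(\delta+1)}{\delta+1}(n-c)+\frac{1}{\delta+1}(n-c-\delta)$ equals $\frac{1+\ln(\delta+1)}{\delta+1}(n-c+1)+c-1-\frac{\ln(\delta+1)}{\delta+1}$, strictly below the target. In short, the $-1$ is bought not from second-order slack in the exponential estimate but from the $\delta$ extra vertices dominated by the swapped-in vertex $v$.
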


\begin{proof}
 Given a graph $G^c=(V^c,E)$ satisfying the conditions, first we pick one vertex
from each colour class. If this set $A$ dominates all vertices, then
$\gamma^t=c$. Otherwise, there is a vertex $v$ which is undominated, that is $N[v]\cap A =
  \emptyset$; and further, there is a vertex $v^\prime \in A$ with $c(v')=c(v)$. Then,
the set $D_p=(A - \{v'\})\cup \{v\}$ is rainbow and dominates at
least $|N(v)|\ge \delta$ vertices from $V^c - D_p$. Observe that in
this case also $n\ge c+\delta+1$ and $\delta \ge 1$ hold.

From now on, we  apply the probabilistic method in a similar way as it is done on
 pages 4--5 of \cite{AlonSpencer} concerning $\gamma(G)$.
Let $D_0$ be a subset of $V^c - D_p$ chosen at random, such that
 independently for each $u \in V^c - D_p$,
  $${\mathbb{P}}(u \in D_0)= \frac{\ln(\delta+1)}{\delta+1}.$$
The cardinality of $D_0$ is a random variable, which is the sum of the
  random variables $\xi_u$ defined as
 $\xi_u=1$ if $u\in D_0$ and $\xi_u=0$ if $u\notin D_0$.
Therefore, the expected number of selected vertices is
$${\mathbb{E}}(|D_0|)=\frac{\ln(\delta+1)}{\delta+1}(n-c).$$
The set $D_p\cup D_0$ dominates all the at least $c+\delta$ vertices
in $D_p \cup N(v)$. Consider the set $D_1$ consisting of those
vertices which are not dominated by $D_p \cup D_0$.
 For each vertex $u\in V - (D_p\cup N(v))$, we have
$${\mathbb{P}}(u \in D_1)\le
\left(1-\frac{\ln(\delta+1)}{\delta+1}\right)^{\delta+1}< e^{-\ln
(\delta +1)}=\frac{1}{\delta +1} $$
  that implies $${\mathbb{E}}(|D_1|) < \frac{1}{\delta+1}(n-c-\delta).$$

Clearly, $D_p \cup D_0 \cup D_1$ is a tropical dominating set of $G$
and $\gamma^t(G)$ is not greater than the expected value of its
size. Therefore, we have
\begin{eqnarray*}
 \gamma^t(G) &<& c +\frac{\ln(\delta+1)}{\delta+1}(n-c)+
\frac{1}{\delta+1}(n-c-\delta)\\
              &=& \frac{1+\ln(\delta+1)}{\delta+1}(n-c+1)+c-1 +
              \frac{1}{\delta+1}-\frac{1+\ln(\delta+1)}{\delta+1}\\
              &< & \frac{1+\ln(\delta+1)}{\delta+1}(n-c+1)+c-1
\end{eqnarray*}
 which completes the proof.
\end{proof}

\begin{corollary}
For any connected graph $G^c$  with minimum degree  $\delta \ge 9$,
either $\gamma^t (G^c)=c$ or $\gamma^t (G^c)< \frac{\delta(n-c+1)}{3\delta-1}+c-1$
holds.
\end{corollary}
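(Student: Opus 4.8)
The plan is to obtain the corollary directly from Theorem~\ref{delta-9} by a purely analytic comparison of the two coefficients, with no further graph-theoretic work. Theorem~\ref{delta-9} already supplies the dichotomy: for a connected graph $G^c$ of minimum degree $\delta$, either $\gamma^t(G^c)=c$, in which case there is nothing to prove, or
$$\gamma^t(G^c) < \frac{1+\ln(\delta+1)}{\delta+1}(n-c+1)+c-1.$$
So it suffices to show that the coefficient $\frac{1+\ln(\delta+1)}{\delta+1}$ is at most the coefficient $\frac{\delta}{3\delta-1}$ of the claimed bound whenever $\delta\ge 9$. In the case $\gamma^t\neq c$, the argument inside the proof of Theorem~\ref{delta-9} already guarantees $n\ge c+\delta+1$, hence $n-c+1>0$; multiplying the coefficient inequality by this positive factor and adding $c-1$ turns the theorem's bound into the (weaker, hence still valid) bound of the corollary, and since the theorem's inequality is strict, the resulting inequality remains strict.

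The entire content therefore reduces to the scalar inequality
$$\frac{1+\ln(\delta+1)}{\delta+1}\le \frac{\delta}{3\delta-1}\qquad(\delta\ge 9),$$
which I would treat with $\delta$ ranging over the reals $x\ge 9$. Clearing the positive denominators, it is equivalent to $1+\ln(x+1)\le R(x)$ where $R(x):=\frac{x(x+1)}{3x-1}$. First I would check the base case $x=9$ by hand: there $R(9)=\frac{90}{26}\approx 3.46$ while $1+\ln 10\approx 3.30$, so the inequality holds with room to spare. It is worth recording that the same inequality fails at $\delta=8$ (one gets $\frac{72}{23}\approx 3.13$ against $1+\ln 9\approx 3.20$), which pins down exactly why the hypothesis $\delta\ge 9$ is the right threshold for this route.

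To propagate the inequality from $x=9$ to all larger $x$, I would compare derivatives. A short computation gives $R'(x)=\frac{(3x+1)(x-1)}{(3x-1)^2}$, whereas the left-hand side has derivative $\frac{1}{x+1}$. The claim $R'(x)\ge\frac{1}{x+1}$ reduces, after cross-multiplying, to $3x^3-8x^2+3x-2\ge 0$, which is transparent for $x\ge 3$ since there $x^2(3x-8)+(3x-2)>0$. Thus $R(x)-\bigl(1+\ln(x+1)\bigr)$ is nondecreasing on $[9,\infty)$ and nonnegative at $x=9$, so it stays nonnegative throughout, establishing the coefficient inequality and with it the corollary. The only mild obstacle is the bookkeeping in the derivative comparison; the conceptual point is simply that $\frac{1+\ln(\delta+1)}{\delta+1}$ decays like $\frac{\ln\delta}{\delta}$ while $\frac{\delta}{3\delta-1}\to\frac13$, so once the gap is positive at $\delta=9$ it can only widen.
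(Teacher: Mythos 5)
Your proposal is correct and follows exactly the paper's route: the corollary is deduced from Theorem~\ref{delta-9} by the single coefficient comparison $\frac{1+\ln(\delta+1)}{\delta+1} < \frac{\delta}{3\delta-1}$ for $\delta\ge 9$, which the paper asserts without justification and which you verify carefully (base case at $\delta=9$ plus monotonicity of the difference). Your computations check out, so this is the same proof with the omitted analytic detail filled in.
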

\begin{proof} For every $\delta \ge 9$,
 $\frac{1+\ln(\delta+1)}{\delta+1} < \frac{\delta}{3\delta-1}$
 holds. Then,  Theorem~\ref{delta-9} implies
  $$\gamma^t (G^c) < \frac{1+\ln(\delta+1)}{\delta+1}(n-c+1)+c-1
 <  \frac{\delta}{3\delta-1}(n-c+1)+c-1,$$
 if $\gamma^t (G^c)\neq c$.
\end{proof}

For $2\leq \delta \leq 8$ we present the following result that is slightly weaker than Conjecture~\ref{delta-general}.

\begin{theorem}
Let $\Gc$ be a connected graph with minimum degree $\delta$ such that $2\leq \delta \leq 8$
and $c < n$. Then, $\tdn \leq \frac{(n-c) \delta}{3\delta-1} + c + \frac{\delta(\delta-2)}{3\delta-1}$.
\end{theorem}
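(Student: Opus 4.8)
The plan is to follow the same skeleton as the proofs of the $\delta=1$ case and of Theorem~\ref{delta-9}, but to replace the probabilistic estimate by the deterministic bound of Lemma~\ref{delta-general-uncol}, which is exactly the tool available in the range $2\le\delta\le 8$. First I would dispose of the trivial situation: pick one vertex from each colour class to form a rainbow set $A$ of size $c$; if $A$ already dominates $\Gc$, then $\tdn=c$, and since the right-hand side equals $\frac{\delta(n-c+\delta-2)}{3\delta-1}+c$ with $n-c+\delta-2\ge 1$ whenever $n>c$ and $\delta\ge 2$, the claimed inequality holds. Otherwise, exactly as in the proof of Theorem~\ref{delta-9}, there is an undominated vertex $v$ together with a vertex $v'\in A$ satisfying $c(v')=c(v)$, so that $D_p=(A-\{v'\})\cup\{v\}$ is still rainbow, has size $c$, and dominates $D_p\cup N(v)$ with $|N(v)|\ge\delta$. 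Consequently the set $W=V^c-N[D_p]$ of still-undominated vertices has $|W|\le n-c-\delta$, and — this is the structural feature used throughout — every $w\in W$ has all of its $\ge\delta$ neighbours inside $V^c-D_p$.

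The task then reduces to dominating $W$ cheaply and paying the fixed cost $c$ for $D_p$. Rewriting the target as $\tdn\le c+\frac{\delta(n-c+\delta-2)}{3\delta-1}$, it suffices to dominate $W$ with at most $\frac{\delta(n-c+\delta-2)}{3\delta-1}$ extra vertices. I would obtain such a set by applying Lemma~\ref{delta-general-uncol} to an auxiliary graph $H$ built on the undominated region. The natural candidate is $H=\Gc-D_p=G[B\cup W]$ on exactly $n-c$ vertices, where $B=N(D_p)-D_p$: every $w\in W$ retains degree $\ge\delta$ in $H$, a dominating set of $H$ dominates all of $B\cup W\supseteq W$ using genuine vertices of $\Gc$, and $n-c\le n-c+\delta-2$, so the vertex count is within budget. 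Disconnectedness is harmless here, since the Lemma may be applied component by component and $\sum_i \frac{\delta n_i}{3\delta-1}=\frac{\delta\,|V(H)|}{3\delta-1}$.

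The genuine difficulty — and the place where the weaker constant $\frac{\delta(\delta-2)}{3\delta-1}$ is spent — is that Lemma~\ref{delta-general-uncol} demands minimum degree exactly $\delta$ on each component, whereas the vertices of $B$ may have lost neighbours to $D_p$ and can have degree $<\delta$ in $H$. Such vertices need not be dominated at all (they already are, by $D_p$), yet deleting them may drag the protected vertices of $W$ below degree $\delta$. The crux is therefore to repair the minimum-degree and connectivity hypotheses on the undominated region while keeping the total number of vertices of the auxiliary graph at most $n-c+\delta-2$: since $B\cup W$ already has $n-c$ vertices, only $\delta-2$ new helper vertices are affordable if one retains all of $B$, whereas discarding the superfluous, already-dominated part of $B$ frees additional room, so that the working slack over $|W|$ is at least $2\delta-2$. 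One prunes the low-degree vertices of $B$ that are not needed and, wherever a vertex of $W$ would otherwise drop below degree $\delta$, retains or attaches a bounded number of helper vertices — chosen so that they only certify the degree condition and are never used as actual dominators of $W$ — until each surviving component is connected with minimum degree $\delta$. Making this repair fit quantitatively within the budget is precisely what confines the statement to $2\le\delta\le 8$ and what forces the additive loss relative to Conjecture~\ref{delta-general}; I expect this step to be the main obstacle.

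Finally I would treat the seven sporadic graphs excluded from Lemma~\ref{delta-general-uncol} when $\delta=2$. These have bounded size, so for any component isomorphic to one of them the required partial dominating set can be exhibited directly and checked against the bound; note that for $\delta=2$ the correction term $\frac{\delta(\delta-2)}{3\delta-1}$ vanishes, so this finite case analysis must be carried out with essentially no slack, making it the most delicate part of the last step. Assembling $D_p$ with the dominating set produced for $W$ then yields a tropical dominating set of the asserted size.
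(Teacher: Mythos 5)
Your overall skeleton is the same as the paper's: take a rainbow set $A$ (the paper does not even need the swap producing $D_p$), form the auxiliary graph on the remaining $n-c$ vertices, apply Lemma~\ref{delta-general-uncol} there, and absorb the cost of restoring its hypotheses into the surplus term $\frac{\delta(\delta-2)}{3\delta-1}$. You also correctly locate the difficulty: the vertices adjacent to $A$ may have degree below $\delta$ in $\Gc-A$. But that is exactly the step you do not carry out --- you describe it as ``prune the low-degree vertices and attach a bounded number of helpers'' and then declare it the main obstacle --- so the proposal has a genuine gap precisely at the heart of the argument.

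The missing construction is quite specific. Let $B$ be the set of vertices of $\Gc-A$ whose degree in $\Gc-A$ is less than $\delta$ (these are exactly the vertices with a neighbour in $A$, hence already dominated), and let $b=|B|$. First one reduces to the situation where every $v\in B$ has at least two neighbours in $\Gc-A-B$, by arguing that a $B$-vertex with a unique such neighbour $w$ can be identified with $w$ for domination purposes. Then one adds edges making $B$ a clique: if $b\ge\delta-1$ every $B$-vertex now has degree at least $2+(\delta-2)=\delta$, no new vertices are needed, and the clean bound $\frac{(n-c)\delta}{3\delta-1}+c$ already follows (this is the only live case for $\delta=2$, which is why the correction term vanishes there; the seven exceptional components are handled by noting that each has a vertex dominated from $A$). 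Otherwise $b\le\delta-2$ forces $\delta\ge3$, and one adjoins a clique $C$ of $\delta-b-1$ dummy vertices, completely joined to $B$, each inheriting the outside-neighbourhood of a fixed $v\in B$; the lemma applied to this graph on $n-c+\delta-b-1$ vertices gives a dominating set in which any dummy vertex can be swapped for $v$. The worst case $b=1$ yields exactly $\frac{\delta(n-c+\delta-2)}{3\delta-1}+c$. Note also that the paper keeps all of $B$ in the auxiliary graph rather than discarding its ``superfluous'' part as you suggest --- discarding $B$-vertices would lower the degrees of the vertices you still need to dominate, which is the very problem you are trying to avoid. Without this explicit repair (or an equivalent one), the proposal does not establish the theorem.
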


\begin{proof}
Let $A$ be a subset of vertices of $\Gc$ with each of the $c$
colours once. Let $B = \{ v \in \Gc-A$~: $d_{\Gc-A}(v) < \delta \}$.
 Clearly, if $B = \emptyset$, then by Lemma~\ref{delta-general-uncol}
  we obtain a dominating set for
$\Gc-A$ of size $\frac{(n-c) \delta}{3\delta-1}$. Thus, adding $A$
to this set we obtain a tropical dominating set for $\Gc$ of size
$\frac{(n-c) \delta}{3\delta-1} + c \leq \frac{(n-c)
\delta}{3\delta-1} + c + \frac{\delta(\delta-2)}{3\delta-1}$. Assume
therefore that $B \neq \emptyset$ and let $|B|=b$.
 We intend to keep $A$ in the tropical dominating set to be constructed,
 by extending $A$ with a subset of $\Gc-A$ that dominates $\Gc-A-B$.

First we can assume that for every vertex $v \in B$, $d_{\Gc-A-B}(v)
\geq 2$. Otherwise, if there is a vertex $v \in B$ with a unique
neighbour $w \in \Gc-A-B$, we can set $N_{\Gc-A}[v] = N_{\Gc-A}[w]$
to make the degree of $v$ at least $\delta$ and therefore any
dominating set extending $A$
 that contains $v$ is equivalent to one containing $w$
instead.

Second, if $b \geq \delta-1$, then making $B$ a complete graph we obtain that for every vertex $v \in B$,
$d_{\Gc-A}(v) = d_{\Gc-A-B}(v) + d_{B}(v) \geq 2 + \delta - 2 = \delta$.
 We can therefore apply Lemma~\ref{delta-general-uncol}
to obtain a dominating set $S$ for $\Gc-A$ of size $\frac{(n-c) \delta}{3\delta-1}$
except for the case when $\delta = 2$ and some components of $\Gc-A$ are one of the seven exceptions
 (which are also listed in~\cite{Domination}).
In this special case, if at least one of the vertices of a component is dominated
from outside (as is the case in our problem since every vertex in $B$ is dominated by some vertex in $A$), then each of them satisfies the bound.
Therefore we have that $S \cup A$ is a tropical dominating set for $\Gc$ of size
$\frac{(n-c) \delta}{3\delta-1} + c \leq \frac{(n-c) \delta}{3\delta-1} + c + \frac{\delta(\delta-2)}{3\delta-1}$.

We can conclude that $b \leq \delta-2$, and consequently
 $\delta\ge 3$ holds.
Consider now the following graph obtained from $\Gc-A$ plus a complete graph $C$ on $\delta - b -1$ new vertices.
First make $B$ a complete graph. Join every vertex in $B$ to every vertex in $C$. Finally, for every vertex $w \in C$ set
$N_{\Gc-A-B}(w)=N_{\Gc-A-B}(v)$ for some vertex $v \in B$. Clearly this new graph has minimum degree $\delta$ therefore
by~\cite{Domination} we obtain a dominating set $S$ of size at most
 $\frac{(n-c+\delta-b-1) \delta}{3\delta-1}$.
% (same remark as before for $\delta = 2$ and some components of the new graph
% are one of the seven exceptions listed in~\cite{Domination}).
Now we obtain a tropical dominating set for the original graph $\Gc$ as follows. If any of the vertices of $C$ belongs to $S$ we just
delete them from $S$ and add instead the chosen vertex $v \in B$. Then add $A$ to the dominating set.
This new set is dominating as every vertex in $B$ is dominated by some vertex in $A$ in $\Gc$ and clearly it is tropical.
Finally, its size is not greater than
 $\frac{(n-c+\delta-b-1) \delta}{3\delta-1} + c$ and this number is maximum when $b$ is as small as possible, that is, $b=1$.
We obtain then that the size of the tropical dominating set is $\frac{(n-c+\delta-2) \delta}{3\delta-1} + c = \frac{(n-c) \delta}{3\delta-1} + c + \frac{\delta(\delta-2)}{3\delta-1}$
which completes the proof.
\end{proof}

\begin{proposition} Let $\Gc$ be super-dense, i.e., $\delta(\Gc)> (n-1) - \sqrt{n-1}$.
Then $\tdn\le c+1$.
\end{proposition}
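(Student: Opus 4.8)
The plan is to reduce the statement to the ordinary domination number via the earlier Proposition~\ref{thm:first}, which gives $\tdn \le \gamma + c - 1$ for any colouring. Hence it suffices to prove that every super-dense graph $G$ satisfies $\gamma(G) \le 2$: combining this with Proposition~\ref{thm:first} yields $\tdn \le 2 + c - 1 = c+1$ at once. So the entire content of the proof is the (colour-free) claim that super-density forces a dominating set of size at most two.

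To bound $\gamma(G)$, I would work with non-neighbourhoods. For a vertex $x$ write $\overline{N}(x) = V \setminus N[x]$ for the set of vertices distinct from $x$ and non-adjacent to it. The hypothesis $\delta > (n-1) - \sqrt{n-1}$ translates into $|\overline{N}(x)| = n - 1 - d(x) \le n - 1 - \delta < \sqrt{n-1}$ for every $x$, so each vertex ``misses'' fewer than $\sqrt{n-1}$ others. The key observation is that a pair $\{v, w\}$ dominates $G$ exactly when $\overline{N}(v) \cap \overline{N}(w) = \emptyset$, since this is equivalent to $N[v] \cup N[w] = V$. Thus I only need to locate two vertices whose non-neighbourhoods are disjoint.

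First I would dispose of the trivial case: if $\overline{N}(v) = \emptyset$ for some $v$, then $v$ is universal and $\gamma(G) = 1$. Otherwise fix any $v$ and count the vertices $w$ that are \emph{blocked}, i.e. satisfy $\overline{N}(v) \cap \overline{N}(w) \ne \emptyset$. Using the symmetry $z \in \overline{N}(w) \iff w \in \overline{N}(z)$, the blocked set is $\bigcup_{z \in \overline{N}(v)} \overline{N}(z)$, of size at most $\sum_{z \in \overline{N}(v)} |\overline{N}(z)| < |\overline{N}(v)| \cdot \sqrt{n-1} < (\sqrt{n-1})^2 = n-1$. So strictly fewer than $n-1$ vertices are blocked, whence at least two vertices are free; moreover $v$ itself is blocked (because $\overline{N}(v) \ne \emptyset$ puts $v$ into $\overline{N}(z)$ for each $z \in \overline{N}(v)$), so at least one free vertex $w$ differs from $v$. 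The pair $\{v,w\}$ then dominates $G$, giving $\gamma(G) \le 2$.

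The only genuinely delicate point is the concluding double count: I must make sure the strict bounds $|\overline{N}(x)| < \sqrt{n-1}$ compound to a blocked count \emph{strictly} below $n-1$, so that as an integer it is at most $n-2$ and a free vertex necessarily survives. This is precisely where the exponent $1/2$ in the super-density threshold is exploited and where the slack is thinnest, and it is also why the universal-vertex case must be separated out beforehand — that separation is what guarantees $v$ lies among the blocked vertices and is therefore excluded from the free ones.
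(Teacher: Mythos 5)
Your argument is correct, and its overall skeleton is the same as the paper's: both proofs reduce the statement to showing $\gamma(G)\le 2$ for super-dense graphs and then invoke Proposition~\ref{thm:first} to get $\tdn\le\gamma+c-1\le c+1$. Where you differ is in how $\gamma(G)\le 2$ is established. The paper passes to the complement, notes that $\Delta(\overline{G})<\sqrt{n-1}$ forces $\mathrm{diam}(\overline{G})\ge 3$, and then cites the classical theorem that $\mathrm{diam}(\overline{G})\ge 3$ implies $\gamma(G)\le 2$. You instead give a self-contained double count over non-neighbourhoods: each $\overline{N}(x)=V\setminus N[x]$ has size $<\sqrt{n-1}$, a pair dominates iff their non-neighbourhoods are disjoint, and the vertices ``blocked'' with respect to a fixed non-universal $v$ number strictly fewer than $n-1$, so a good partner $w\neq v$ survives. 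Your counting is tight but sound (the strict inequalities compound to a blocked count of at most $n-2$ as an integer, and separating the universal-vertex case correctly places $v$ among the blocked vertices). What your route buys is elementariness and completeness: it needs no external theorem, and it silently covers the case where $\overline{G}$ is disconnected or has isolated vertices, which the paper's one-line appeal to ``diameter at least 3'' glosses over. What the paper's route buys is brevity, at the cost of leaning on a cited result. Both are valid proofs of the proposition.
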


\begin{proof}
Since $\delta(\Gc)> (n-1) - \sqrt{n-1}$, then the
complement $\overline{\Gc}$ of $\Gc$ satisfies $\Delta(\overline{\Gc})<\sqrt{n-1}$.
Therefore, the diameter of $\overline{\Gc}$ is
at least $3$.  Thus according to a theorem from~\cite{Domination}, the domination number $\gamma$ of
$\Gc$ is at most $2$. Now, the result for $\tdn$ follows from Proposition~\ref{thm:first}.
\end{proof}

\section{Tropical dominating sets in random graphs}\label{randomtrop}

In this section we study the tropical domination parameter of a \emph{randomly vertex-colored random graph}.
Recall that the random graph $G(n,p)$ is the graph on $n$ vertices where each of the possible $\binom{n}{2}$ 
edges appears with probability $p$, independently. For more details on random graph theory, 
we refer the reader to \cite{bollobas:2001} and \cite{jlr:2000}. Given a positive integer $c$, 
let $G(n,p,c)$ be the vertex-colored graph obtained from $G(n,p)$ by coloring each vertex with one of the colors 
$1, 2,\dots,c$ uniformly and independently at random. The choice of colors is independent of the existence of edges. 
In what follows, we will say that $G(n,p,c)$ has a property $\pazocal{Q}$ \textit{ asymptotically almost surely} 
(abbreviated a.a.s.) if the probability it satisfies $\pazocal{Q}$ tends to $1$ as $n$ tends to infinity.
For convenience, we will use the notation $b=1/(1-p)$.

The domination number of the random graph $G(n,p)$ has been well studied, see for example 
\cite{dryer:2000}, \cite{elmaftouhi:1993} and \cite{wieland:2001}. In Particular, Wieland and 
Godbole \cite{wieland:2001} proved the following two-point concentration result.

\begin{theorem}{\label{th:1} \rm (\cite{wieland:2001})~~}
Let $p = p(n)$ be such that $10 \sqrt{(\log\log n)/\log n }\leq p < 1$. Set $b=1/(1-p)$. Then a.a.s. 
the domination number of the random graph $G(n,p)$ is equal to 
$$
\big\lfloor\log_b n-\log_b\left[(\log_b n)(\log n)\right]\big\rfloor+1 ~~~\textit{or}~~~
\big\lfloor\log_b n-\log_b\left[(\log_b n)(\log n)\right]\big\rfloor+2.
$$
\end{theorem}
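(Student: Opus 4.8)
The plan is to run the first- and second-moment methods on the random variable $X_k$ that counts the dominating sets of size exactly $k$ in $G(n,p)$. The computation of $\EE(X_k)$ is exact and is the natural starting point: a vertex $v\notin S$ fails to be dominated by a fixed $k$-set $S$ exactly when it sends no edge into $S$, an event of probability $(1-p)^k=b^{-k}$, and these failure events are mutually independent over the $n-k$ vertices outside $S$ because they involve pairwise disjoint sets of potential edges. Hence
\[
\EE(X_k)=\binom{n}{k}\bigl(1-b^{-k}\bigr)^{n-k}.
\]
Writing $\mu_k=nb^{-k}$ and using $\binom{n}{k}=\exp\bigl(k\log n-k\log k+O(k)\bigr)$ together with $(1-b^{-k})^{n-k}=\exp\bigl(-\mu_k(1+o(1))\bigr)$, the sign of $\log\EE(X_k)$ flips where $\mu_k$ is of order $\log\binom{n}{k}\asymp(\log_b n)(\log n)$; solving $\mu_k=(\log_b n)(\log n)$ gives exactly $k=g:=\log_b n-\log_b[(\log_b n)(\log n)]$, the quantity appearing in the statement.

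For the lower bound I would verify that $\EE(X_{\lfloor g\rfloor})\to 0$; since the existence of a dominating set of a given size is monotone in that size, $X_{\lfloor g\rfloor}=0$ forces $\gamma>\lfloor g\rfloor$, so by Markov's inequality $\gamma\ge\lfloor g\rfloor+1$ a.a.s. The feature that governs the whole result is the sharpness of the transition: one computes
\[
\frac{\EE(X_{k+1})}{\EE(X_k)}\ \approx\ \frac{n}{k}\,\exp\bigl(p\,\mu_k\bigr),
\]
and near the threshold $p\mu_k$ is of order $(\log n)^2$, so $\EE(X_k)$ leaps from $o(1)$ to $\omega(1)$ inside a window far narrower than one integer step. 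This is precisely why the concentration is so tight, and also why a gap of two is what one can prove rather than one: $\lfloor g\rfloor$ sits safely below threshold, $\lfloor g\rfloor+2$ sits a full step above it, while the intermediate value $\lfloor g\rfloor+1$ straddles the threshold in a way that depends on the fractional part of $g$ and cannot be decided.

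For the upper bound I would show $\gamma\le\lfloor g\rfloor+2$ by the second-moment method, establishing $\EE(X_{\lfloor g\rfloor+2})\to\infty$ and $\Var(X_k)/\EE(X_k)^2\to 0$ for $k=\lfloor g\rfloor+2$, after which Chebyshev's inequality gives $X_{\lfloor g\rfloor+2}>0$ a.a.s. The core is the expansion of $\EE(X_k^2)$ over ordered pairs $(S,T)$ of $k$-sets grouped by their overlap $j=|S\cap T|$. A vertex outside $S\cup T$ is dominated by both sets with probability $1-2b^{-k}+b^{-(2k-j)}$, which for $j=0$ factors as $(1-b^{-k})^2$ and reproduces the denominator $\EE(X_k)^2$; for $j\ge 1$ it carries the excess factor $\exp\!\bigl(\tfrac{\mu_k^2}{n}(b^{\,j}-1)\bigr)$, and I would show that after weighting by $\binom{k}{j}\binom{n-k}{k-j}/\binom{n}{k}$ the $j=0$ term dominates and the total ratio tends to $1$.

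The main obstacle is controlling this variance sum, because the per-vertex domination events for two overlapping sets are genuinely dependent: each edge inside the symmetric difference $S\triangle T$ simultaneously influences whether a vertex of $S\setminus T$ is dominated by $T$ and whether a vertex of $T\setminus S$ is dominated by $S$. One must argue that these $O(k^2)=O(\log^2 n)$ internal edges, together with the overlap corrections $b^{\,j}-1$, contribute only lower-order multiplicative factors — the $2(k-j)$ vertices of the symmetric difference are each dominated with probability tending to $1$, so their coupled contribution is harmless, and the sum over $j\ge 1$ is controlled by the combinatorial weights. This is exactly where the hypothesis $p\ge 10\sqrt{(\log\log n)/\log n}$ is spent: it keeps $\mu_k^2/n=o(1)$, guarantees that the approximation errors in locating $g$ relative to the true threshold stay $o(1)$, and forces the transition window to remain inside two consecutive integers, so that the two-point conclusion — and not a wider spread — is what emerges.
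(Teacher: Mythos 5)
This statement is not proved in the paper at all: it is Wieland and Godbole's two-point concentration theorem, imported verbatim with a citation to \cite{wieland:2001}, so there is no in-paper proof to compare yours against. That said, your outline is exactly the argument that Wieland and Godbole use, and it is also the template the paper itself deploys for the coloured analogue in Lemma 4.2 and Theorem 4.3: the first moment $\EE(X_k)=\binom{n}{k}(1-b^{-k})^{n-k}$ with the threshold located by solving $nb^{-k}\asymp(\log_b n)(\log n)$, Markov plus monotonicity for the lower bound $\gamma\ge\lfloor g\rfloor+1$, and a second-moment/Chebyshev computation at $k=\lfloor g\rfloor+2$ with the overlap decomposition $1-2b^{-k}+b^{-(2k-j)}$ for the upper bound (the paper's equation for $\EE(I_1I_k)$ is the tropical version of your $j$-overlap term). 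Your diagnosis of the dependence caused by edges inside $S\triangle T$ is correct, and the standard fix is the one the paper uses for its own variance bound: upper-bound the joint probability by requiring domination only of the vertices outside $S\cup T$, which discards the coupled events entirely. What remains at the level of a sketch rather than a proof is the actual bookkeeping: verifying $\EE(X_{\lfloor g\rfloor})\to 0$ (this is delicate when the fractional part of $g$ is near zero, where the exponent degenerates to $-k\log k$, still negative but only barely), carrying the sum over $1\le j\le k-1$ through the unimodality argument the paper performs for its $g(k)$, and tracing where $p\ge 10\sqrt{(\log\log n)/\log n}$ is genuinely needed --- its essential role is to make $p(\log_b n)(\log n)$ dominate $k\log k\asymp\log_b n\cdot\log\log n$ and to kill the correction terms in the variance, which is a slightly different accounting from the $\mu_k^2/n=o(1)$ condition you cite. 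These are computations to be executed, not missing ideas, so I would accept the proposal as a faithful sketch of the cited proof.
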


Recently Glebov, Liebenau and Szab\'o \cite{glebov} strengthened this two-point concentration result by 
extending the range of $p$ down to $(\log^2 n)/\sqrt n$ .

We are interested here in the maximum number of colors $c$ that can be used so that a.a.s. $G(n,p,c)$
has a tropical minimal dominating set. We only deal with the case when $p$ is fixed. It follows from
Theorem \ref{th:1} that the number of colors should not exceed 
$\big\lfloor\log_b n-\log_b\left[(\log_b n)(\log n)\right]\big\rfloor+2$. We show in Theorem \ref{th:2} that 
this upper bound is achieved. This result can be extended to hold 
when $p=p(n)$ tends to $0$ sufficiently slowly. This could be the subject of another study since, in this case,
the proof is very technical. 

For $c\geq 1$, let $X_c$ be the random variable counting the number of tropical dominating sets of size $c$ in $G(n,p,c)$.
\begin{equation*}
 X_c=\sum_{j=1}^{\binom{n}{c}}I_j,
\end{equation*}
where $I_j$ is the indicator random variable indicating if the \textit{j-th} $c$-set is both tropical and 
dominating in $G(n,p,c)$.

In order to prove Theorem \ref{th:2}, we need the following lemma.

\begin{lemma}
Let $0<p<1$ be fixed. Set $b=1/(1-p)$. Denote by $X_c$ the number of tropical dominating sets of size $c$ in $G(n,p,c)$.
Then
\begin{equation*}
\EE(X_c)\to\infty~~~\textit{as}~~~n\to\infty,
\end{equation*}
for $c=c(n)=\big\lfloor\log_b n-\log_b\left[(\log_b n)(\log n)\right]\big\rfloor+2.$
\end{lemma}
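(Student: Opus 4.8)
The plan is to prove the statement by a first-moment computation: evaluate $\EE(X_c)$ exactly and then analyse its asymptotics. By linearity of expectation and the symmetry of the model (all $c$-subsets of $V$ are interchangeable), $\EE(X_c)=\binom{n}{c}\,\PP(S\text{ is tropical and dominating})$ for an arbitrary fixed $c$-set $S$. The key structural observation is that the random colouring and the random edges of $G(n,p,c)$ are independent, so the two defining events factorise. A fixed $c$-set is tropical exactly when its $c$ vertices receive $c$ distinct colours, which happens with probability $c!/c^{c}$; and it is dominating exactly when each of the $n-c$ outside vertices has a neighbour in $S$, which, since the relevant edge-sets are disjoint, happens with probability $(1-(1-p)^{c})^{\,n-c}=(1-b^{-c})^{\,n-c}$. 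Hence
$$\EE(X_c)=\binom{n}{c}\frac{c!}{c^{c}}\,(1-b^{-c})^{\,n-c}=\frac{(n)_c}{c^{c}}\,(1-b^{-c})^{\,n-c},$$
where $(n)_c=n(n-1)\cdots(n-c+1)$ is the falling factorial.

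Next I would pass to logarithms and estimate each factor. Since $c=c(n)=\Theta(\log n)$, we have $(n)_c\sim n^{c}$, because $\prod_{i<c}(1-i/n)$ contributes only a $1+o(1)$ factor once $c^{2}/n\to 0$; thus $\log\bigl((n)_c/c^{c}\bigr)=c\log(n/c)+o(1)=\frac{(\log n)^{2}}{\log b}+O(\log n\,\log\log n)$, using $c\sim\log_b n$ and $\log c=O(\log\log n)$. For the domination factor I would write $c_0:=\log_b n-\log_b[(\log_b n)(\log n)]$, so that $c=\lfloor c_0\rfloor+2$ and $\theta:=c-c_0=2-(c_0-\lfloor c_0\rfloor)\in(1,2]$; this yields the clean identity $b^{-c}=\frac{(\log_b n)(\log n)}{n}\,b^{-\theta}$. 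In particular $b^{-c}\to 0$, so $\log(1-b^{-c})=-b^{-c}+O(b^{-2c})$, and since $n\,b^{-2c}\to 0$ the quadratic correction vanishes after multiplying by $n-c$. Therefore $\log\bigl((1-b^{-c})^{\,n-c}\bigr)=-(n-c)b^{-c}+o(1)=-\frac{(\log n)^{2}}{\log b}\,b^{-\theta}+o(\log n)$.

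Combining the two estimates, the leading behaviour is
$$\log\EE(X_c)=\frac{(\log n)^{2}}{\log b}\bigl(1-b^{-\theta}\bigr)+O(\log n\,\log\log n).$$
Since $\theta>1$ forces $b^{-\theta}<b^{-1}=1-p$, the coefficient satisfies $1-b^{-\theta}>p>0$, so $\log\EE(X_c)\to+\infty$ and hence $\EE(X_c)\to\infty$, as required.

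The main obstacle is precisely this last near-cancellation: the counting factor $(n)_c/c^{c}$ and the domination factor $(1-b^{-c})^{\,n-c}$ each contribute a term of order $(\log n)^{2}/\log b$ with opposite signs, so the whole argument hinges on showing that their coefficients do \emph{not} cancel. This is exactly where the chosen value of $c$ — in particular the ``$+2$'' offset, which guarantees $\theta>1$ and hence $b^{-\theta}<b^{-1}<1$ — is essential; with a smaller offset the surviving coefficient could vanish or turn negative. Care is therefore needed to track the floor function honestly (keeping $\theta\in(1,2]$ rather than replacing $c$ by $c_0$) and to verify that all the discarded terms are genuinely $O(\log n\,\log\log n)$, comfortably dominated by the surviving $(\log n)^{2}$ term.
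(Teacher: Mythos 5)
Your proposal is correct and follows essentially the same route as the paper: a first-moment computation giving $\EE(X_c)=\frac{(n)_c}{c^c}\bigl(1-(1-p)^c\bigr)^{n-c}$, followed by an asymptotic analysis of the exponent in which the counting term $c\log n$ and the domination term $(n-c)(1-p)^c$ nearly cancel, leaving a positive coefficient at least $p$ times $(\log_b n)(\log n)$. Your explicit tracking of the floor via $\theta=c-c_0\in(1,2]$ is a slightly more careful version of the paper's bound $(1-p)^c\le(1-p)\,(\log_b n)(\log n)/n$, but the argument is the same.
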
\label{lem:1}
\begin{proof}
Clearly, for $1\leq j\leq \binom{n}{c}$, we have
\begin{equation*}
 \EE(I_j)=\PP\left[ I_j=1\right]=\left(1-(1-p)^c\right)^{n-c}\frac{c!}{c^c},
\end{equation*}
where $\left(1-(1-p)^c\right)^{n-c}$ is the probability that a given $c$-set $A_j$ is dominating, and 
$c!/c^c$ is the probability that $A_j$ is tropical.  By the linearity of expectation, we have
\begin{eqnarray}
 \EE(X_c)&=&\binom{n}{c}\left(1-(1-p)^c\right)^{n-c}\frac{c!}{c^c}.\nonumber\\
	 &=&\frac{(n)_c}{c^c} \left(1-(1-p)^c\right)^{n-c}.\nonumber\\
	 &\geq&\frac{(n)_c}{c^c} \left(1-(1-p)^c\right)^{n}.\nonumber
\end{eqnarray}
Using the inequality $1-x\geq -x/(1-x)$ and the estimate $(n)_c=\left(1+o(1)\right)n^c$, we have
\begin{eqnarray}
 \EE(X_c) &\geq& (1-o(1))\frac{n^c}{c^c} \exp\left[\frac{-n(1-p)^c}{1-(1-p)^c}\right]\nonumber\\
	  &=& (1-o(1))\exp\left[\Psi(c)\right]\nonumber,
\end{eqnarray}
where
\begin{equation*}
 	  \Psi(c)=c\log n-c\log c - \frac{n(1-p)^c}{1-(1-p)^c}.
\end{equation*}
Recall that $c=\big\lfloor\log_b n-\log_b\left[(\log_b n)(\log n)\right]\big\rfloor+2$. Thus,
\begin{equation*}
(1-p)^{c}\leq (1-p)\log_bn\log n
\end{equation*}
and
\begin{equation*}
\frac{n(1-p)^c}{1-(1-p)^c}\leq (1-p)\log_b n\log n +o(1).
\end{equation*}
It follows that
\begin{equation*}
\Psi(c) \geq c\log n -c\log c - (1-p)\log_b n\log n +o(1).
\end{equation*}
Straightforward calculations show that
\begin{equation*}
 \Psi(c)\geq (p-o(1))\log_b n\log n.
\end{equation*}
Since $p$ is fixed, we conclude that $\psi(c)$, and thus also $\EE(X_c)$, tends to infinity as $n\to\infty$.
\end{proof}

\begin{theorem}\label{th:2}
 Let $0<p<1$ be fixed and set $b=1/(1-p)$. Let $c=c(n)$ be the function defined  by
$$
c(n)=\big\lfloor\log_b n-\log_b\left[(\log_b n)(\log n)\right]\big\rfloor+2.
$$
Then a.a.s. $G(n,p,c)$ contains a tropical dominating set of size $c$.
\end{theorem}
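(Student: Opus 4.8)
The plan is to apply the second-moment method to the random variable $X_c$ counting tropical dominating sets of size $c$. By Lemma~\ref{lem:1} we already know that $\EE(X_c)\to\infty$, so by Chebyshev's inequality it suffices to show that $\Var(X_c)=o\big(\EE(X_c)^2\big)$, equivalently $\EE(X_c^2)/\EE(X_c)^2\to 1$, which yields $\PP(X_c=0)\to 0$ and hence the a.a.s.\ existence of a tropical dominating set of size $c$.

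First I would write $X_c=\sum_j I_j$ as in the setup, so that
$$
\EE(X_c^2)=\sum_{j}\sum_{k}\EE(I_jI_k)=\sum_{A,B}\PP(A,B\text{ both tropical and dominating}),
$$
where $A,B$ range over $c$-subsets of $V$. The natural way to control this is to group the pairs $(A,B)$ according to the overlap $t=|A\cap B|$, for $t=0,1,\dots,c$. For each fixed $t$, the number of ordered pairs with $|A\cap B|=t$ is $\binom{n}{c}\binom{c}{t}\binom{n-c}{c-t}$. The key step is to bound $\PP(I_j=I_k=1)$ for such a pair. The two events ``$A$ is dominating'' and ``$B$ is dominating'' are not independent because a vertex outside $A\cup B$ must be dominated by both sets, and the coloring constraints for being tropical also interact on the shared vertices $A\cap B$. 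I would factor the joint probability into a domination part and a tropical (coloring) part, using independence of edges and of colors. For a vertex $u\notin A\cup B$, the probability it is dominated by $A$ and by $B$ simultaneously is $\big(1-(1-p)^{|A\setminus B|}(1-p)^{|B\setminus A|}(1-p)^{t}\cdots\big)$-type expression; more precisely, writing the complementary event via inclusion over edges into $A\cup B$, the joint domination probability over all such $u$ is $\big(1-2(1-p)^{c}+(1-p)^{2c-t}\big)^{n-2c+t}$, and the coloring factor contributes the probability that both $A$ and $B$ are rainbow given the shared block.

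The main obstacle will be showing that the off-diagonal terms (those with $1\le t\le c$) contribute negligibly compared to the leading $t=0$ term, which already reproduces $\EE(X_c)^2$. Concretely, I expect to reduce the ratio $\EE(X_c^2)/\EE(X_c)^2$ to $\sum_{t=0}^{c}g(t)$, where $g(0)=1-o(1)$ and the remaining terms must be shown to sum to $o(1)$. The delicate point is that $c=c(n)$ grows like $\log_b n$, so the sum over $t$ has a growing number of terms and one cannot simply bound each term by a constant; instead I would identify the dominant behavior of $g(t)$ as a function of $t$, show $g(t)$ is, up to lower-order factors, of the form $\big(\tfrac{c^2}{n}\big)^{t}\cdot\exp\!\big(\text{correction from the overlap in the domination exponent}\big)$, and verify that this is maximized at the endpoints $t=0$ and $t=c$. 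The boundary term $t=c$ (i.e.\ $A=B$) contributes exactly $\EE(X_c)$, which is $o\big(\EE(X_c)^2\big)$ since $\EE(X_c)\to\infty$. For the intermediate terms, the factor $(1-p)^{2c-t}$ in the domination exponent increases the probability of joint domination as $t$ grows, but this is more than offset by the combinatorial factor $\binom{c}{t}\binom{n-c}{c-t}/\binom{n}{c}\approx (c^2/n)^t/t!$, which decays because $c^2=O(\log^2 n)=o(n)$. Carefully balancing these two competing effects, using $\log_b n-\log_b[(\log_b n)(\log n)]$ as the value of $c$ to keep $\EE(X_c)$ large, is exactly where the technical work lies; once $\sum_{t\ge 1}g(t)=o(1)$ is established, Chebyshev finishes the argument.
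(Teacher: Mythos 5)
Your proposal is correct and follows essentially the same route as the paper: second-moment method, decomposition of $\EE(X_c^2)$ by the overlap size of the two $c$-sets, the inclusion–exclusion bound $\big(1-2(1-p)^c+(1-p)^{2c-t}\big)^{n-2c+t}$ on joint domination times a rainbow-colouring factor, and a unimodality/endpoint analysis of the resulting $g(t)$ showing the off-diagonal terms sum to $o\big(\EE^2(X_c)\big)$ before Chebyshev closes the argument. The only remaining work is the explicit verification that $g$ decreases then increases with $g(1)\ge g(c-1)$ and that $\binom{n}{c}c\,g(1)/\EE^2(X_c)=o(1)$, which is exactly the computation the paper carries out.
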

\begin{proof}
To prove the Theorem, we use the second moment method. 
For this, we need to estimate the variance $\Var(X_c)$ of the number of tropical dominating 
sets of size $c$.
\begin{eqnarray}
\Var(X_c)&=&\sum_{i=1}^{\binom{n}{c}}\sum_{j=1}^{\binom{n}{c}}\EE(I_iI_j)-\EE^2(X_c)\nonumber\\
	 &=&\sum_{k=0}^{c}\binom{n}{c}\binom{c}{k}\binom{n-c}{c-k}\EE(I_1I_k)-\EE^2(X_c)\nonumber,\\
	 &=&\EE(X_c) + \sum_{k=0}^{c-1}\binom{n}{c}\binom{c}{k}\binom{n-c}{c-k}\EE(I_1I_k)-\EE^2(X_c),\label{eq:1}
\end{eqnarray}
where $I_k$ is the indicator random variable of any generic $c$-set $A_k$ that intersect the first $c$-set $A_1$ 
in $k$ elements. We have
\begin{eqnarray}
\EE(I_1I_k )&=&   \PP\big[~A_1\textit{~dominates~and~} A_k\textit{~dominates~}\big]\times
		  \PP\big[~A_1\textit{~and~}A_k\textit{~are tropical~}\big]\nonumber\\
	   &\leq& \PP\big[~A_1\textit{~dominates~}\overline{(A_1\cup A_k)}~\wedge~ 
		  A_k\textit{~dominates~}\overline{(A_1\cup A_k)}~\big]\times
		  \PP\big[~A_1\textit{~and~}A_k\textit{~are tropical~}\big]\nonumber\\
	   &=&   \left(1-2(1-p)^k+(1-p)^{2c-k}\right)^{n-2c+k}\times
		  \frac{c!(c-k)!}{c^{2c-k}}\label{eq:2}		  
\end{eqnarray}
Relations (\ref{eq:1}) and (\ref{eq:2}) imply
\begin{equation}\label{eq:3}
 \Var(X_c) \leq \EE(X_c)+\EE^2(X_c)A+B,
\end{equation}
where
\begin{equation*}
A=\binom{n}{c}^{-1}\binom{n-c}{c}\Big(1-(1-p)^c\Big)^{-2c}-1
\end{equation*}
and
\begin{equation*}
B =	  \binom{n}{c}\sum_{k=1}^{c-1}\binom{c}{k}\binom{n-c}{c-k}
	  \frac{c!(c-k)!}{c^{2c-k}}
	  \Big(1-2(1-p)^k+(1-p)^{2c-k}\Big)^{n-2c+k}.
\end{equation*}
Using once again the inequality $1-x\geq -x/(1-x)$, we can bound $A$ as follows
\begin{eqnarray}
A&\leq& e^{\frac{-c^2}{n}}\exp\left[\frac{2c(1-p)^c}{1-(1-p)^c}\right]-1\nonumber\\
&\leq& \exp\left[\frac{-c^2}{n}+2c(1-p)^c(1+o(1))\right]-1.\nonumber
\end{eqnarray}
Thus, for $c=\big\lfloor\log_b n-\log_b\left[(\log_b n)(\log n)\right]\big\rfloor+2$
\begin{eqnarray}
  A &\leq& \left(2c(1-p)^c-\frac{c^2}{n}\right)(1+o(1))\nonumber\\
    &=& O\left(\frac{(\log n)^3}{n}\right).\label{eq:4}
\end{eqnarray}
\medskip
\noindent Now, we need to estimate $B$. We have
\begin{equation*}
   B\leq  \binom{n}{c}\sum_{k=1}^{c-1}g(k)\nonumber,
\end{equation*}
where
\begin{equation*}
g(k)=\frac{(c!)^2}{k!(c-k)!}\frac{n^{c-k}}{c^{2c-k}}\exp\Big[(n-2c)\left((1-p)^{2c-k}-2(1-p)^{c}\right)\Big].
\end{equation*}
We shall show that 
\begin{itemize}
 \item[(i)] there exists $k_0=k_0(n)\to \infty$ such that $g$ is decreasing if $k\leq k_0$ and increasing if 
$k\geq k_0$,
\item[(ii)] $g(1)\geq g(c-1)$,
\end{itemize}
which will imply that 
\begin{equation}\label{eq:5}
\sum_{k=1}^{c-1}g(k)\leq cg(1).
\end{equation}
Clearly,
\begin{equation*}
 g(1)\geq g(c-1)
\end{equation*}
iff
\begin{equation*}
\frac{n^{c-2}}{c^{c-2}}\exp\Big[(n-2c)\Big((1-p)^{2c-1}-(1-p)^{c+1}\Big)\Big]\geq 1
\end{equation*}
iff
\begin{equation*}
\exp\Big[(c-2)\log n-(c-2)\log c +\frac{(n-2c)}{n^2b^{3}}(\log_b n\log n)^2
-\frac{(n-2c)}{nb^{3}}\log_b n\log n\Big]
\geq 1
\end{equation*}
iff
\begin{equation*}
\exp\Big[\left( 1-\frac{1}{b^{3}}+o(1)\right)\log_b n\log n\Big]
\geq 1.
\end{equation*}
Since the left-hand side of the last inequality tends to infinity as $n\to \infty$, the above condition is thus 
satisfied and (ii) is proved.

\medskip
\noindent For $1\leq k\leq c-1$, let
\begin{equation*}
 h(k)=\frac{g(k+1)}{g(k)}=\frac{c(c-k)}{n(k+1)}\exp\Big[(n-c)p(1-p)^{2c-k-1}\Big].
\end{equation*}
It is straightforward to see that, for 
$c=\big\lfloor\log_b n-\log_b\left[(\log_b n)(\log n)\right]\big\rfloor+2$, 
\begin{equation*}
h(k)\geq 1 
\end{equation*}
iff 
\begin{equation*}
(1-p)^{k-3}\log \left[\frac{n(k+1)}{c(c-1)}\right]\leq\frac{(n-c)}{n^2}p\big(\log_b n\log n\big)^2
\end{equation*}
iff 
\begin{equation*}
(1-p)^{k-3}\log n \left(1+\delta(k)\right)\leq\frac{(n-c)}{n^2}p\big(\log_b n\log n\big)^2,
\end{equation*}
where $\delta(k)=\log \big[n(k+1)/c(c-1)\big]/\log n=O(\log c/\log n)$.
Therefore $h(k)\geq 1$ if and only if
\begin{equation*}
k\geq\log_b n-2\log_b\left(1-\frac{c}{n}\right)-2\log_b\log_b n - \log_b\log n -\log_b p
+\log_b\big(1+\delta(k)\big)+3.
\end{equation*}
The right-hand side of the above inequality is of the form $a_n+o(1)$, $a_n\to \infty$. Thus, there exists 
$k_0=k_0(n)$ such that $h(k)\geq1$ if and only if $k\geq k_0$. We have thus shown (i).
Combining (\ref{eq:3}), (\ref{eq:4}) and (\ref{eq:5}), it follows that
\begin{equation*}
 \frac{\Var(X_c)}{\EE^2(X_c)} \leq \frac{1}{\EE(X_c)}
 +O\left(\frac{(\log n)^3}{n}\right)
 +\binom{n}{c}\frac{cg(1)}{\EE^2(X_c)}.
\end{equation*}
Since, by Lemma \ref{lem:1}, $\EE(X_c)\to \infty$ as $n\to \infty$, we will have $\Var(X_c)/\EE^2(X_c)=o(1)$ if
the last term in the right-hand side of the above inequality tends to zero as $n\to \infty$. We have
\begin{eqnarray}
\binom{n}{c}\frac{cg(1)}{\EE^2(X_c)}
&=& \frac{1}{\EE^2(X_c)}\times\frac{(n)_cc!n^{c-1}}{(c-1)!c^{2c-2}}
    \exp\big[(n-2c)\left((1-p)^{2c-1}-2(1-p)^{c}\right)\big]\nonumber\\
&=& \frac{c^3n^{c-1}}{(n)_c(1-(1-p)^{c})^{2n-2c}}
    \exp\big[(n-2c)\left((1-p)^{2c-1}-2(1-p)^{c}\right)\big]\nonumber\\
&=& \frac{c^3n^{c-1}}{(n)_c}
    \exp\big[(n-2c)\left((1-p)^{2c-1}-2(1-p)^{c}\right)-(2n-2c)\log(1-(1-p)^c)\big]\nonumber\\
&=& \frac{c^3n^{c-1}}{(n)_c}
    \exp\left[ O\left(\frac{(\log n)^4}{n}\right)\right].\nonumber
\end{eqnarray}
Since $c=o(\sqrt{n})$, we have $(n)_c=(1+o(1))n^c$. Thus,
\begin{equation*}
\binom{n}{c}\frac{cg(1)}{\EE^2(X_c)} = O\left(\frac{(\log n)^3}{n}\right)=o(1).
\end{equation*}
By Chebychev's inequality,
\begin{equation*}
 \PP[X_c=0]\leq \frac{\Var(X_c)}{\EE^2(X_c)}\to 0,
\end{equation*}
which completes the proof of the theorem.
\end{proof}

\section{Approximability and Fixed Parameter Tractability}\label{aprox}

We assume familiarity with the complexity classes NPO and PO
which are optimisation analogues of NP and P.
A minimisation problem in NPO is said to be \emph{approximable} within a constant $r \geq 1$
if there exists an algorithm $A$ which, for every instance $I$, outputs a solution of measure
$A(I)$ such that $A(I)/\opt(I) \leq r$, where $\opt(I)$ stands for the measure of an optimal solution.
An NPO problem is in the class APX if it is approximable within \emph{some} constant
factor $r \geq 1$.
An NPO problem is in the class PTAS if it is approximable within $r$ for \emph{every}
constant factor $r > 1$.
An APX-hard problem cannot be in PTAS unless P = NP.
We use two types of reductions, L-reductions to prove APX-hardness,
and PTAS-reductions to demonstrate inclusion in PTAS.
 In the Appendix we give a slightly more formal introduction and a description
 of reduction methods related to approximability.
For more on these issues we refer to
 Ausiello et al.~\cite{Ausiello:etal:CA} and Crescenzi~\cite{Crescenzi97}.

A problem is said to be \emph{fixed parameter tractable} (FPT) with parameter $k \in \mathbb{N}$
if it has an algorithm that runs in time
$f(k) \left|I\right|^{\Ordo(1)}$ for any instance $(I,k)$,
 where $f$ is an arbitrary function that depends only on $k$.

In this section, we study the complexity of approximating and solving TDS
conditioned on various restrictions on the input graphs and on the number
of colours.
First, we show that TDS is equivalent to MDS (Minimum Dominating Set)
 under L-reductions.
In particular, this implies that the general problem lies outside APX.
We then attempt to restrict the input graphs and observe that if MDS
is in APX on some family of graphs, then so is TDS.
However, there is also an immediate lower bound: TDS on
any family of graphs that contains all paths is APX-hard.
We proceed by adding an upper bound on the number of colours.
We see that if MDS is in PTAS for some family of graphs with bounded degree, then
so is TDS when restricted to $n^{1-\epsilon}$ colours for some $\epsilon > 0$.
Finally, we show that TDS on interval graphs is FPT with the parameter being
the number of colours and that the problem is in PO when the number of colours is logarithmic.

\begin{proposition}\label{tds-mds}
  TDS is equivalent to MDS under L-reductions.
  It is approximable within $\ln n + \Theta(1)$ but
  NP-hard to approximate within $(1-\epsilon) \ln n$.
\end{proposition}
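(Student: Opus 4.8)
The plan is to establish the two directions of the L-reduction equivalence between TDS and MDS, and then transfer the known approximability and inapproximability bounds for MDS (which is exactly \textsc{Set Cover} in disguise, whence the $\ln n + \Theta(1)$ upper bound and the $(1-\epsilon)\ln n$ hardness of Dinur--Steurer) across the reduction. Since MDS is the special case of TDS with a single colour ($c=1$), the reduction from MDS to TDS is trivial: an instance $G$ of MDS is mapped to the vertex-coloured graph $G^c$ in which every vertex receives the same colour, so that $\tdn(G^c)=\gamma(G)$, and this is an L-reduction with constants $\alpha=\beta=1$. The content lies in the reverse direction, mapping a TDS instance back to MDS while controlling the optimum.

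For the reduction from TDS to MDS, the idea is to force the colour classes to be ``hit'' by attaching, for each colour $i$, a cheap gadget that makes any small dominating set automatically pick up a vertex of colour $i$. Concretely, I would take the coloured graph $G^c$ and, for each colour $i \in \{1,\dots,c\}$, add a new pendant vertex $p_i$ adjacent to every vertex of colour $i$ (or, to keep degrees bounded if desired, a small forced gadget per colour). A dominating set of the enlarged graph $G'$ must dominate each $p_i$, hence must contain either $p_i$ itself or a vertex of colour $i$; by the usual exchange argument one may assume it contains a genuine vertex of colour $i$, so that its restriction to $V(G^c)$ is tropical. Conversely a tropical dominating set of $G^c$ already dominates all the $p_i$. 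This gives $\gamma(G') = \tdn(G^c)$ up to an additive constant depending only on the gadget, and because the optima of TDS are at least $c$ (so at least as large as the additive overhead), the required linear relation $\opt_{\mathrm{MDS}}(G') \le \alpha\,\opt_{\mathrm{TDS}}(G^c)$ and the solution-lifting inequality of an L-reduction both hold with absolute constants.

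Once the equivalence under L-reductions is in place, the quantitative claims follow. The positive side, approximability within $\ln n + \Theta(1)$, comes directly from the greedy \textsc{Set Cover} algorithm: one first selects one vertex of each colour (paying $c \le \opt$) and then greedily dominates the remainder, so the standard $\ln n + \Theta(1)$ guarantee for MDS transfers. The negative side, NP-hardness of approximation within $(1-\epsilon)\ln n$, is inherited from the corresponding hardness for MDS (\textsc{Set Cover}) through the trivial single-colour embedding, since an L-reduction preserves membership in APX and, combined with the matching logarithmic factors, preserves the tight threshold.

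The main obstacle will be verifying that the reverse reduction is genuinely an \emph{L}-reduction rather than merely a polynomial-time reduction: one must check both L-reduction inequalities with \emph{absolute} constants, and in particular confirm that the additive gadget overhead is absorbed by the linear term because $\opt_{\mathrm{TDS}} \ge c$ always holds. The exchange argument that replaces a chosen pendant $p_i$ by a properly coloured neighbour, without increasing the solution size and without destroying domination elsewhere, is the one step that needs to be carried out carefully; everything downstream is then a routine transfer of the \textsc{Set Cover} bounds.
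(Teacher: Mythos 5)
Your reduction from TDS to MDS is broken at its core gadget. If for each colour $i$ you add a new vertex $p_i$ adjacent to \emph{every} vertex of colour $i$, then $p_i$ is not merely an element forcing colour $i$ to be hit: it is itself a dominator of the entire colour class $i$, and including it in a dominating set of $G'$ can be far cheaper than anything available in $G^c$. Concretely, let $G$ be $k$ disjoint edges $u_jv_j$ with all $u_j$ coloured $1$ and all $v_j$ coloured $2$. Then $\tdn(G^c)=\gamma(G)=k$, but in your $G'$ the set $\{p_1,p_2\}$ dominates everything, so $\gamma(G')=2$. The optimum is not preserved even up to a constant factor, so no L-reduction (indeed no approximation-preserving reduction) comes out of this construction. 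The ``usual exchange argument'' you invoke also fails for the same reason: swapping $p_i$ for a single vertex of colour $i$ leaves all the other colour-$i$ vertices that were dominated only by $p_i$ undominated, so the swap is not size-preserving.

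The paper sidesteps exactly this difficulty by not reducing to MDS through a graph gadget at all: it reduces TDS to \textsc{Set Cover} (which is known to be L-equivalent to MDS), taking ground set $U=V^c\cup C$ and, for each vertex $v$, the set $F_v=N[v]\cup\{c(v)\}$. The crucial point is that in \textsc{Set Cover} the added colour-elements are pure covering \emph{obligations} with no covering \emph{power}, so set covers and tropical dominating sets correspond exactly, with equal sizes. Your direction MDS $\to$ TDS (one colour) and your derivation of the $\ln n+\Theta(1)$ upper bound (pay $c\le\opt$ for a rainbow set, then greedily dominate, giving $(1+H_n)\opt$) are fine, and the $(1-\epsilon)\ln n$ hardness does follow from MDS being a special case; but the equivalence claim needs the Set-Cover-style reduction, or a substantially more careful graph gadget, to be salvaged.
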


\begin{proof}
  MDS is clearly a special case of TDS.
  For the opposite direction,
  we reduce an instance of TDS to an instance $I$ of the Set Cover problem
  which is known to be equivalent to MDS under L-reductions~\cite{Kannphd92}.
  In the Set Cover problem, we are given a ground set $U$ and a collection of
subsets $F_i \subseteq U$ such that $\bigcup_i F_i = U$.
The goal is to cover $U$ with the smallest possible number of sets $F_i$.
  Our reduction goes as follows.
  Given a vertex-coloured graph $G^c=(V^c,E)$, with the set of colours $C$,
   the ground set of $I$ is $U=V^c\cup C$.
%% the set of vertices $V$ together with the set of colours $C$.
  Each vertex $v$ of $V$ gives rise to a set $F_v=N[v]\cup \{c(v)\}$, a subset of $U$.
%% containing $v$, the vertices of $N(v)$, and the colour of $v$.
  Every solution to $I$ must cover every vertex $v \in V$ either by including a set
  that corresponds to $v$ or by including a set that corresponds to a neighbour of $v$.
  Furthermore, every solution to $I$ must include at least one vertex of every colour in $C$.
  It follows that every set cover can be translated back to a tropical dominating set of the same size.
  This shows that our reduction is an L-reduction.

  The approximation guarantee follows from that of the standard greedy algorithm for Set Cover.
  The lower bound follows from the NP-hardness reduction to Set~Cover in~\cite{DinSte2014}
  in which the constructed Set Cover instances contain $o(N)$ sets, where $N$ is the
  size of the ground set.
\end{proof}

When the input graphs are restricted to some family of graphs, then membership in APX
for MDS carries over to TDS.

\begin{lemma}\label{lem:inapx}
  Let $\mathcal{G}$ be a family of graphs.
  If MDS restricted to $\mathcal{G}$ is in APX, then TDS restricted to $\mathcal{G}$ is in APX.
\end{lemma}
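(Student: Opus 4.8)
The plan is to give a reduction that takes an APX-approximation algorithm for MDS on $\mathcal{G}$ and converts it into an APX-approximation algorithm for TDS on $\mathcal{G}$. The key observation is that for TDS, any tropical dominating set must contain at least one vertex of each of the $c$ colours, so $\tdn \geq c$; in particular $\opt_{\mathrm{TDS}}(G^c) \geq c$ always. On the other hand, by Proposition~\ref{thm:first} we have $\tdn \leq \gamma + c - 1$, where $\gamma = \opt_{\mathrm{MDS}}(G)$ is the size of a minimum dominating set of the underlying uncoloured graph $G$. These two inequalities together will let me control the ratio.

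The algorithm for TDS restricted to $\mathcal{G}$ runs as follows. Given a coloured graph $G^c \in \mathcal{G}$, first run the assumed $r$-approximation algorithm for MDS on the underlying graph $G$ (which also lies in $\mathcal{G}$), obtaining a dominating set $S$ with $|S| \leq r\gamma$. Then, exactly as in the proof of Proposition~\ref{thm:first}, extend $S$ to a tropical dominating set by adding, for each colour not yet represented in $S$, one vertex of that colour; this adds at most $c-1$ vertices. The resulting set $D$ is tropical and dominating, with $|D| \leq r\gamma + c - 1$.

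It remains to bound the approximation ratio. Using $\gamma \leq \opt_{\mathrm{TDS}}$ (since any tropical dominating set is in particular a dominating set, so $\gamma \leq \tdn$) and $c \leq \opt_{\mathrm{TDS}}$, I get
\begin{equation*}
|D| \leq r\gamma + c - 1 \leq r\,\opt_{\mathrm{TDS}} + \opt_{\mathrm{TDS}} = (r+1)\,\opt_{\mathrm{TDS}}.
\end{equation*}
Hence $D$ is an $(r+1)$-approximate solution for TDS, and since $r+1$ is a constant whenever $r$ is, TDS restricted to $\mathcal{G}$ is in APX. The whole procedure runs in polynomial time since the MDS subroutine does and the extension step only inspects colours.

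The step requiring the most care is the bound on the ratio rather than any single inequality: one must be sure to charge $c$ against $\opt_{\mathrm{TDS}}$ (not against $\gamma$, which could be much smaller, for instance when $\gamma$ is constant but $c$ is large). The clean way to avoid any subtlety is precisely to use the two lower bounds $\gamma \leq \opt_{\mathrm{TDS}}$ and $c \leq \opt_{\mathrm{TDS}}$ separately and add them, which is what makes the additive $c-1$ term harmless. Note that this also explains why the family $\mathcal{G}$ must be closed under forgetting colours — the MDS subroutine is applied to $G$, which is assumed to belong to $\mathcal{G}$ as an uncoloured member.
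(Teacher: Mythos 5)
Your proposal is correct and follows essentially the same argument as the paper's proof: run the $r$-approximation for MDS on the uncoloured graph, pad with one vertex per missing colour to get a set of size at most $r\gamma + c - 1$, and bound the ratio by $(r+1)$ using the two lower bounds $\gamma \leq \tdn$ and $c \leq \tdn$. The only addition is your explicit remark about charging $c$ against $\opt_{\mathrm{TDS}}$ rather than $\gamma$, which the paper leaves implicit but which is the same calculation.
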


\begin{proof}
  Assume that MDS restricted to $\mathcal{G}$ is approximable within
  $r$ for some $r \geq 1$.
  Let $G^c$ be an instance of TDS.
  We can find a dominating set of the uncoloured graph $G$ of size at most $r \gamma(G)$
  in polynomial time, and then add one vertex of each colour that is not yet present in the
  dominating set. This set is of size at most $r\gamma(G)+c-1$.
  The size of an optimal solution of $G^c$ is at least $\gamma(G)$ and at least $c$.
  Hence, the computed set will be at most $r+1$ times the size of the optimal solution of $G^c$.
\end{proof}

For $\Delta \geq 2$, let $\Delta$-TDS denote the problem of minimising
 a tropical dominating set on graphs of degree bounded by $\Delta$.
The problem MDS is in APX for bounded-degree graphs,
hence $\Delta$-TDS is in APX by Lemma~\ref{lem:inapx}.
The same lemma also implies that TDS restricted to paths is in APX.
Next, we give explicit approximation ratios for these problems.

\begin{proposition}\label{prop:path-approx}
  TDS restricted to paths can be approximated within $5/3$.
\end{proposition}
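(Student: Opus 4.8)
The plan is to exhibit a polynomial-time algorithm that produces a tropical dominating set of a path $P$ on $n$ vertices whose size is at most $\frac{5}{3}\tdn(P)$. Since $P$ is a path, $\gamma(P)=\lceil n/3\rceil$ and a minimum dominating set is easy to describe explicitly: take every third vertex. My starting point is the bound $\tdn \le \gamma + c - 1$ from Proposition~\ref{thm:first}, but this alone does not give ratio $5/3$ because $\gamma$ and $c$ could be comparable, so I need to be more careful about how the two lower bounds $\tdn \ge \gamma$ and $\tdn \ge c$ interact.

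\medskip

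First I would record the two lower bounds on the optimum: $\opt = \tdn(P) \ge \gamma(P) = \lceil n/3 \rceil$ and $\opt \ge c$. The natural algorithm is to build a standard minimum dominating set $S_0$ of the path (the greedy ``every third vertex'' set, of size $\lceil n/3\rceil$), and then repair it to become tropical by adjoining, for each colour not yet represented in $S_0$, one vertex of that colour. If $t$ colours are missing from $S_0$, the resulting set $S$ is tropical and dominating with $|S| = \lceil n/3\rceil + t \le \lceil n/3\rceil + (c-1)$. The first regime is easy: when $c \le \frac{2}{3}\lceil n/3\rceil$ (say), the additive term $c-1$ is dominated by $\frac{2}{3}\gamma$, so $|S| \le \gamma + c - 1 \le \frac{5}{3}\gamma \le \frac{5}{3}\opt$, and we are done.

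\medskip

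The remaining regime is where $c$ is large relative to $n$, and here the bound $\opt \ge c$ must do the work. The key observation I would exploit is that when many colours are present, the optimal tropical dominating set is already forced to be large on account of needing one vertex per colour, so $\opt \ge c$ is close to tight and the ``wasted'' domination vertices are comparatively cheap. Concretely, I would split the analysis at a threshold comparing $c$ against $n/3$: writing $|S| \le \lceil n/3\rceil + c - 1$, I want to show this is at most $\frac{5}{3}\max(\lceil n/3\rceil, c)$. Since $\opt \ge \max(\gamma, c)$, it suffices to verify $\lceil n/3\rceil + c - 1 \le \frac{5}{3}\max(\lceil n/3\rceil, c)$ in both cases $c \le \lceil n/3\rceil$ and $c > \lceil n/3\rceil$; in the first case the left side is at most $\lceil n/3\rceil + \lceil n/3\rceil - 1 < \frac{5}{3}\lceil n/3\rceil$ fails for large arguments, so a cruder pairing of the two bounds is actually needed.

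\medskip

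The real content, and what I expect to be the \emph{main obstacle}, is that the trivial pairing does not quite reach $5/3$ when $\gamma$ and $c$ are both near $n/3$: then $|S|$ can be near $2\gamma$ while $\opt$ is near $\gamma$, giving ratio close to $2$. To recover $5/3$ I would \emph{reuse} the domination vertices as colour representatives more cleverly, rather than adding fresh vertices for every missing colour. The plan is to choose the minimum dominating set so that it already captures as many distinct colours as possible, and when a missing colour's vertex is inserted, to charge it against an optimal solution vertex of the same colour using the fact that $\opt$ itself contains one vertex of each of the $c$ colours. A clean way to make this rigorous is to observe that any tropical dominating set contains a dominating subset together with at least $c$ distinct colours; by choosing representatives for missing colours among vertices that are \emph{also} useful for domination (e.g. spacing inserted coloured vertices three apart so each replaces, rather than supplements, a greedy vertex), one shows the total overshoot beyond $\max(\gamma,c)$ is at most $\frac{2}{3}\max(\gamma,c)$. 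Verifying this charging argument carefully on the linear structure of the path, and confirming the worst case gives exactly $5/3$, is the delicate step; I would construct a matching extremal family of coloured paths to confirm the ratio $5/3$ is not improvable by this method.
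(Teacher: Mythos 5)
There is a genuine gap: the step you explicitly defer (``verifying this charging argument carefully \dots is the delicate step'') is the entire content of the proposition. You correctly diagnose that the naive scheme --- one minimum dominating set plus one fresh vertex per missing colour --- only yields $\gamma+c-1$, which against the lower bound $\max(\gamma,c)$ gives a ratio approaching $2$ when $c\approx n/3$. But your proposed repair is an unproven assertion: the claim that the overshoot beyond $\max(\gamma,c)$ can be kept to $\frac{2}{3}\max(\gamma,c)$ by ``spacing inserted coloured vertices three apart'' is never established, and it cannot be established for a \emph{single} fixed dominating set, since an adversary can place one unique colour on every vertex of the two residue classes your greedy set avoids, forcing it to miss essentially all $c$ colours. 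The vague instruction ``choose the minimum dominating set so that it already captures as many distinct colours as possible'' is the right instinct, but without saying which candidates to compare and why one of them is good, the proof does not close.

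The paper's proof makes exactly this precise with an averaging argument over three candidates. Let $\sigma_i=\{v_j \mid j\equiv i \pmod 3\}$ for $i=1,2,3$; each (after a small fix at the endpoints) dominates the path, and together they partition $V$, so every colour lies in at least one $\sigma_i$ and hence is missing from at most two of them. Completing each $\sigma_i$ to a tropical set $S_i$ therefore costs at most $2c$ added vertices \emph{in total}, giving $|S_1|+|S_2|+|S_3|\le n+2c$ and $\min_i |S_i|\le \frac{1}{3}(n+2c)$. The matching lower bound is not $\max(\gamma,c)$ but the weighted combination $5\,\tdn(P^c)\ge n+2c$ (from $n\le 3\tdn$ and $2c\le 2\tdn$), which immediately yields the ratio $5/3$ with no charging against the optimal solution. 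Your write-up contains neither the three-shift construction nor the combined lower bound, so as it stands the argument does not reach $5/3$.
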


\begin{proof}
  Let $P^c = v_1, v_2, \dots, v_n$ be a vertex-coloured path. For $i = 1, 2, 3$ let
  $\sigma_i = \{ v_j \mid j \equiv i~($mod $3), \ 1\le j\le n\}$.
%%  One of the sets $\sigma_i$ contains at least $c/3$ distinct colours.
  Select any subset $\sigma'_i$ of $V$ that contains precisely one vertex
   of each colour missing from $\sigma_i$.
  Let $S_i = \sigma_i \cup \sigma'_i$.
By definition, $S_i$ is a tropical set.

%   Let $P^c = v_0, v_1, \dots, v_n$ be a vertex-coloured path. For $i = 0, 1, 2$, let
%   $\sigma_i = \{ v_j \mid j \equiv 0~(\text{mod}\ 3)\}$.
%   One of the sets $\sigma_i$ contains at least $c/3$ distinct colours.
%   Let $S$ be any subset of $V$ containing precisely one vertex of each colour missing from $\sigma_i$.
%   Let $\sigma = \sigma_i \cup S$.
%
%   This algorithm approximates TDS within $\alpha$, where
%   \[
%     \alpha \leq \frac{\left|\sigma\right|}{\gamma(P^c)}
%     \leq \frac{n+2c}{3} \cdot \frac{1}{\max \{n/3, c\}} \leq
%     \min \left\{
%       \frac{1}{3}\left(\frac{n}{c} + 2\right),
%       1 + 2 \frac{c}{n}
%       \right\}.
%   \]
%   Let $r = c/n$ and observe that $\frac{1}{3} (r^{-1}+2)$ is a decreasing function on $r$,
%   while $1+2r$ is an increasing one. By solving the equality $\frac{1}{3}{r^{-1}+2} = 1+2r$,
%   we obtain $r = 5/3$, as required.

Taking into account that each colour must appear in a tropical dominating set,
 moreover any vertex can dominate at most two others, we see the following
 easy lower bounds:
 \begin{eqnarray}
   n & \le & 3 \tdn (P^c) , \nonumber \\
   2c & \le & 2 \tdn (P^c) , \nonumber \\
  \frac{1}{5} (n+2c) & \le & \tdn (P^c) . \nonumber
 \end{eqnarray}

 Suppose for the moment that each of $S_1,S_2,S_3$ dominates $G^c$.
Then, since each colour occurs in at most two of the $\sigma'_i$, we have
 $|S_1|+|S_2|+|S_3| \le n+2c$ and therefore
 $$
   \tdn (P^c) \le \min (|S_1|,|S_2|,|S_3|) \le \frac{1}{3} (n+2c) .
 $$
Comparing the lower and upper bounds, we obtain that the smallest set $S_i$ provides
 a 5/3-approximation.
It is also clear that this solution can be constructed in linear time.

The little technical problem here is that the set $S_i$ does not dominate
 vertex $v_1$ if $i=3$, and it does not dominate $v_n$ if $i\equiv n-2$ (mod 3).
We can overcome this inconvenience as follows.

The set $S_3$ surely will dominate $v_1$ if we extend $S_3$ with
 either of $v_1$ and $v_2$.
This means no extra element if we have the option to select e.g.\ $v_1$
 into $\sigma'_3$.
  We cannot do this only if $c(v_1)$ is already present in $\sigma_3$.
But then this colour is common in $\sigma_1$ and $\sigma_3$; that is, although we
 take an extra element for $S_3$, we can subtract 1 from the term $2c$ when
 estimating $|\sigma'_1|+|\sigma'_2|+|\sigma'_3|$.
The same principle applies to the colour of $v_n$, too.

Even this improved computation fails by 1 when $n\equiv 2$ (mod 3) and
 $c(v_1)=c(v_n)$, as we can then write just $2c-1$ instead of $2c-2$ for
 $|\sigma'_1|+|\sigma'_2|+|\sigma'_3|$.
Now, instead of taking the vertex pair $\{v_1,v_n\}$ into $S_3$,
 we complete $S_3$ with $v_2$ and $v_n$.
This yields the required improvement to $2c-2$, unless $c(v_2)$, too,
 is present in $\sigma_3$.
But then $c(v_2)$ is a common colour of $\sigma_2$ and $\sigma_3$, while
 $c(v_1)$ is a common colour of $\sigma_1$ and $\sigma_3$.
Thus $|\sigma'_1|+|\sigma'_2|+|\sigma'_3| \le 2c-2$, and
 $|S_1|+|S_2|+|S_3| \le n+2c$ holds also in this case.
\end{proof}

\begin{remark}
 In an analogous way --- which does not even need the particular
  discussion of unfavourable cases --- one can prove that the
  square grid $P_n\Box P_n$ admits an asymptotic 9/5-approximation.
 (This extends also to $P_n\Box P_m$ where $m=m(n)$ tends to
  infinity as $n$ gets large.)
 A more precise estimate on grids, however, may require a
  careful and tedious analysis.
\end{remark}

\begin{proposition}
  $\Delta$-TDS is approximable within $\ln (\Delta+2) + \frac{1}{2}$.
  Moreover, there are absolute constants $C > 0$ and $\Delta_0 \geq 3$ such that
  for every $\Delta \geq \Delta_0$,
  it is NP-hard to approximate $\Delta$-TDS within $\ln \Delta - C \ln \ln \Delta$.
\end{proposition}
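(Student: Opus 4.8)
The plan is to prove the two halves independently, using the Set Cover formulation of Proposition~\ref{tds-mds} for the positive result and the hardness of bounded-degree domination for the negative one.

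For the upper bound I would reuse the L-reduction of Proposition~\ref{tds-mds} verbatim: a $\Delta$-TDS instance $G^c$ produces a Set Cover instance on the ground set $U=V^c\cup C$ in which the set associated with a vertex $v$ is $F_v=N[v]\cup\{c(v)\}$, and minimum covers correspond bijectively to minimum tropical dominating sets. Because $G^c$ has maximum degree $\Delta$, every such set satisfies $|F_v|=|N[v]|+1\le(\Delta+1)+1=\Delta+2$, so this is an instance of $k$-Set Cover with $k=\Delta+2\ge 4$. Plain greedy would only give ratio $H_{\Delta+2}\le\ln(\Delta+2)+1$, which overshoots the target by $\tfrac{1}{2}$. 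To recover the stated constant I would instead run the semi-local optimization algorithm of Duh and F\"urer for $k$-Set Cover, whose approximation ratio is $H_k-\tfrac{1}{2}$. Combined with $H_k\le\ln k+1$ this yields a guarantee of $H_{\Delta+2}-\tfrac{1}{2}\le\ln(\Delta+2)+\tfrac{1}{2}$, exactly as claimed; the whole procedure runs in polynomial time.

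For the lower bound the key observation is that MDS on graphs of maximum degree $\Delta$ is precisely the special case of $\Delta$-TDS in which all vertices carry a single colour: every nonempty set is then tropical automatically, so $\tdn=\gamma$ and any $\rho$-approximation for $\Delta$-TDS immediately yields a $\rho$-approximation for bounded-degree MDS. It therefore suffices to quote the known inapproximability of domination (equivalently, of Set Cover with bounded set sizes) on degree-$\Delta$ instances: there are absolute constants $C>0$ and $\Delta_0\ge 3$ such that, for all $\Delta\ge\Delta_0$, it is NP-hard to approximate MDS on degree-$\Delta$ graphs within $\ln\Delta-C\ln\ln\Delta$ (Trevisan; this can also be derived from the Dinur--Steurer bound of~\cite{DinSte2014} by standard degree-reduction). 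Transferring this hardness through the single-colour embedding gives precisely the stated bound for $\Delta$-TDS.

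The routine parts here are the degree bookkeeping in the reduction and the single-colour embedding, both of which are immediate. The only genuinely delicate point is matching the additive constant $\tfrac{1}{2}$ on the positive side, which forces one to invoke the refined $k$-Set Cover algorithm rather than the naive greedy (the latter costs an extra $\tfrac{1}{2}$). On the hardness side the care needed is to cite the bounded-degree Set Cover lower bound in terms of the maximum set size / degree $\Delta$ rather than the ground-set size, so that the $\ln\Delta-C\ln\ln\Delta$ form emerges directly under the embedding.
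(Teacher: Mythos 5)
Your proposal is correct and follows essentially the same route as the paper: the upper bound uses the identical reduction to Set Cover with sets $F_v=N[v]\cup\{c(v)\}$ of size at most $\Delta+2$ together with the Duh--F\"urer $H_k-\frac12$ algorithm, and the lower bound transfers the known inapproximability of bounded-degree MDS (which the paper obtains by citing Chleb\'{\i}k and Chleb\'{\i}kov\'a rather than Trevisan) via the observation that the single-colour case of $\Delta$-TDS is exactly bounded-degree MDS. The only difference is which reference you invoke for the hardness of bounded-degree domination, which does not affect the substance of the argument.
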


\begin{proof}
  The second assertion follows from
   \cite[Theorem~3]{ChCh08}.
  For the first part, we apply reduction from Set Cover,
   similarly as in the proof of Proposition \ref{tds-mds}.
So, for $G^c=(V^c,E)$ we define $U=V^c\cup C$ and consider
 the sets $F_v=N[v]\cup \{c(v)\}$ for the vertices $v\in V^c$.
Every set cover in this set system corresponds to a tropical
 dominating set in $G^c$.
  Moreover, the Set Cover problem is approximable within
   $\sum_{i=1}^k \frac {1}{i} - \frac{1}{2} < \ln k + \frac{1}{2}$~\cite{DuhFurer97},
    %% \cite{Chvatal79}
  where $k$ is an upper bound on the cardinality of any set of $I$.
  In our case, we have $k = \Delta + 2$ since $\left|N(v)\right| \leq \Delta$ for all $v$.
  Hence, TDS is approximable within $\ln (\Delta+2) + \frac{1}{2}$.
\end{proof}

We now show that TDS for paths is APX-complete.

\begin{theorem}\label{thm:apxc}
  TDS restricted to paths is \cc{APX}-hard.
\end{theorem}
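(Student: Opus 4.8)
The plan is to establish APX-hardness of TDS restricted to paths by exhibiting an \Lred-reduction from a problem already known to be APX-hard. The natural candidate is Minimum Vertex Cover (or equivalently, Minimum Dominating Set) on a suitably restricted class, but since the target instances must be \emph{paths}, the reduction must be self-contained: we cannot merely invoke Lemma~\ref{lem:inapx}, because MDS on paths is trivially in \P\ (hence in \APX, but that gives the wrong direction). Instead I would reduce from a bounded-occurrence version of \emph{Max~3-SAT} or, more cleanly, from MDS on bounded-degree graphs, and encode the combinatorial structure entirely through the \emph{colours} on a single long path. The guiding intuition comes from the NP-completeness construction in Section~\ref{npcomplet}: there, satisfiability was encoded on a path purely via a colouring, with constraint gadgets forcing global consistency. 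The task here is the quantitative, gap-preserving analogue of that construction.

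First I would fix the source problem to be MDS restricted to graphs of bounded degree (say, cubic graphs), which is known to be \APX-hard. Given such a graph $H$, I would build a path $P^c$ whose vertices are partitioned into blocks, one block per vertex of $H$, linked in series. The colouring is the crux: each vertex $u$ of $H$ receives a dedicated colour, and the block for $u$ is coloured so that a tropical dominating set is \emph{forced} to ``represent'' $u$ precisely when $u$ (or one of its neighbours) is selected into a dominating set of $H$. Auxiliary unique colours on isolated positions force certain vertices into every tropical dominating set at unit cost, exactly as in the earlier NP-completeness proof; these contribute a fixed additive term that must be absorbed into the linear relation of the \Lred-reduction. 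The adjacency structure of $H$ would be encoded by arranging that the colour associated with an edge $uv$ appears in both the block of $u$ and the block of $v$, so that dominating one endpoint suffices to cover that colour, mirroring domination in $H$.

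The two defining inequalities of an \Lred-reduction are the quantities I would need to verify. For the first, $\opt(P^c)\le \alpha\cdot\opt(H)$ for some constant $\alpha$: I would show that any dominating set $D$ of $H$ lifts to a tropical dominating set of $P^c$ whose size is $|D|$ plus the fixed number of unique-colour and black-colour vertices, and that this fixed overhead is itself $\Ordo(\opt(H))$ because $H$ has bounded degree (so $\opt(H)=\Theta(|V(H)|)$, and the number of colours/gadgets is $\Ordo(|V(H)|)$). For the second, $|\opt(H)-g(S)|\le\beta\,|\opt(P^c)-|S||$ for the solution-mapping $g$: given any tropical dominating set $S$ of $P^c$, I would extract a dominating set of $H$ by reading off which blocks are ``represented,'' and argue that each unit by which $S$ exceeds the optimum translates into at most a constant number of extra chosen vertices in $H$. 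The clean way to guarantee a bounded ratio is to ensure $\opt(P^c)$ and $\opt(H)$ differ only by a controlled linear function, so that the additive overhead does not inflate the approximation gap.

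The main obstacle I expect is precisely the control of the additive overhead. Because every colour \emph{must} appear in any tropical set, the construction inevitably introduces $\Theta(c)$ mandatory vertices, and an \Lred-reduction demands that this forced cost be genuinely proportional to $\opt(H)$ rather than dwarfing it; otherwise the gap is diluted and \APX-hardness is lost. The bounded degree of the source graph is what saves this, since it forces $\opt(H)=\Omega(|V(H)|)$ and keeps the number of gadget-induced colours linear in $|V(H)|$. The secondary difficulty is the boundary/parity bookkeeping at the path's two ends and at the seams between consecutive blocks --- exactly the sort of ``unfavourable cases'' handled by hand in Proposition~\ref{prop:path-approx}; I would isolate these into a fixed additive constant so they do not affect the linear relations. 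Once the gadget is designed so that representing a block costs one unit and edge-colours are shared between adjacent blocks, the verification of both \Lred\ inequalities should reduce to a counting argument over the blocks.
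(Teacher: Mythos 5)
Your plan follows essentially the same route as the paper's proof: an L-reduction from an APX-hard bounded-degree problem (the paper uses Vertex Cover on subcubic graphs) to a vertex-coloured path built from per-vertex blocks in series, with incidence colours shared between the blocks of adjacent vertices, and with the bounded degree guaranteeing $\opt = \Omega(|V(H)|)$ so that the $\Theta(c)$ mandatory colour-representatives do not dilute the gap --- exactly the paper's $\opt_{VC}(G)\ge n/4$ step. One remark: the mechanism you describe (an edge colour appearing only in the blocks of its two endpoints, so that ``activating'' one endpoint covers it) enforces the \emph{covering} constraint rather than domination, so your construction is really a reduction from bounded-degree Vertex Cover --- which is what the paper uses and is the natural fit with tropicality --- and the part left unexecuted in your sketch is precisely where the paper's work lies: designing the blocks so that activating a vertex costs exactly one extra unit, and normalising an arbitrary tropical dominating set into block-respecting form so that the second L-reduction inequality follows by counting.
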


\begin{proof}
  We apply an L-reduction from the Vertex Cover problem (VC): Given a graph $G = (V,E)$,
  find a set of vertices $S \subseteq V$ of minimum cardinality
   such that, for every edge
  $uv \in E$, at least one of $u \in S$ and $v \in S$ holds.
  We write 3-VC for the vertex cover problem restricted to graphs
  of maximum degree three (subcubic graphs).
  The problem 3-VC is known to be APX-complete~\cite{AK00}.
  For a graph $G$, we write $\opt_{VC}(G)$ for the minimum
  size of a vertex cover of $G$.

   Let $G = (V,E)$ be a non-empty instance of 3-VC, with
   $V=\{v_1,\dots,v_n\}$ and $E=\{e_1,\dots,e_m\}$.
Assume that $G$ has no isolated vertices.
  The reduction sends $G$ to an instance $\phi(G)$ of TDS which will have $m+n+1$ colours:
  $B$ (for black),
  $E_i$ with $1 \leq i \leq m$ (for the $i$th edge), and
  $S_j$ with $1 \leq j \leq n$ (for the $j$th vertex).
 The path has $9n+3$ vertices altogether, starting with three black
  vertices of  Figure~\ref{fig:gadgets}$(a)$, we call this triplet $V_0$.
 Afterwards blocks of 6 and 3 vertices alternate, we call the latter
   $V_1, \dots, V_n$, representing the vertices of $G$.
 Each $V_j$ (other than $V_0$) is coloured as shown in Figure~\ref{fig:gadgets}$(c)$.
 Assuming that $v_j$ ($1\le j\le n$) is incident to the edges
  $e_{j_1}$, $e_{j_2}$, and $e_{j_3}$, the two parts
    $V_{j-1}$ and $V_j$ are joined by a path
    representing these three incidences, and
   coloured as in Figure~\ref{fig:gadgets}$(b)$.
 If $v_j$ has degree less than 3, then the vertex in place of $E_{j_3}$ is black;
  and if $d(v_j)=1$, then also $E_{j_2}$ is black.

\FloatBarrier
\begin{figure}[h]
	\centering
	\includegraphics[scale=.5]{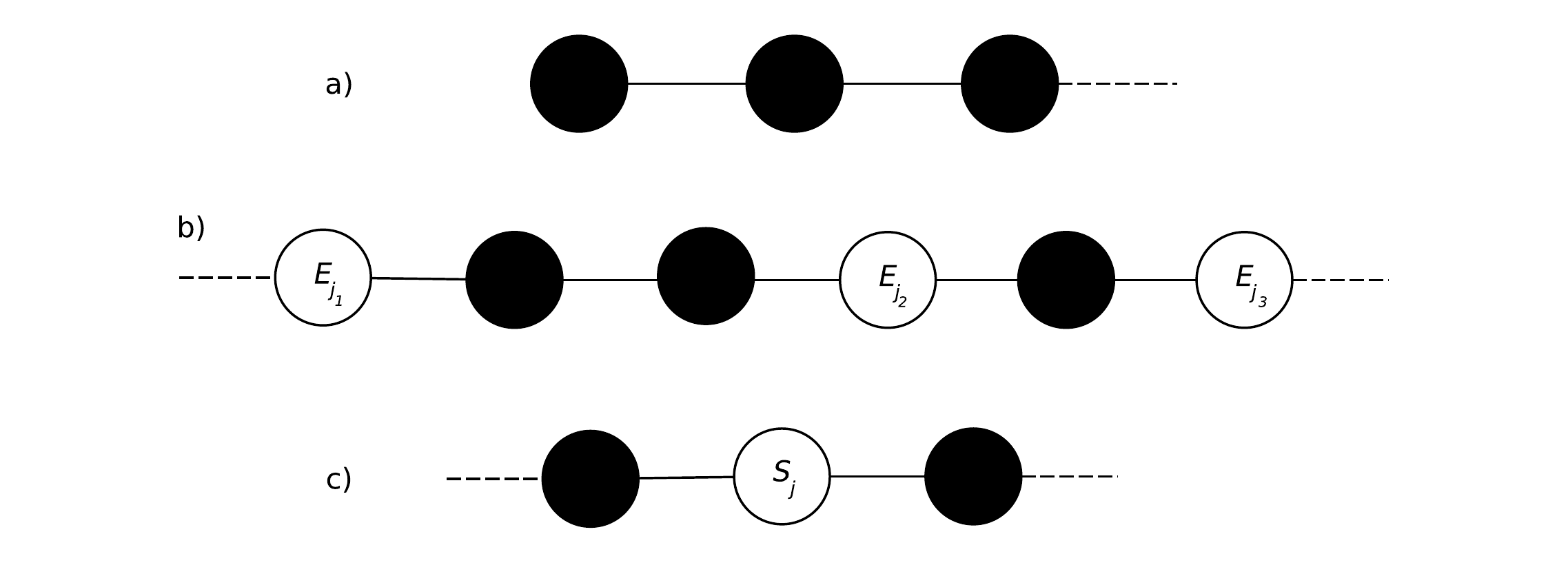}
	\caption{Gadgets for the reduction of Theorem~\ref{thm:apxc}}
	\label{fig:gadgets}
\end{figure}
\FloatBarrier

  Let $\sigma \subseteq V$ be an arbitrary solution to $\phi(G)$.
  First, we construct a solution $\sigma'$ from $\sigma$ with more structure,
  and with a measure at most that of $\sigma$.
  For every $j$, $\sigma$ contains the vertex coloured $S_j$.
  Let $\sigma'$ contain these as well.
  At least one of the first two vertices coloured $B$ must also be in $\sigma$.
  Let $\sigma'$ contain the second vertex coloured $B$.
Now, if any $V_j$ ($0\le j\le n$) has a further (first or third) vertex which
 is an element of $\sigma$, then we can replace it with its predecessor
 or successor, achieving that they dominate more vertices in the path.
This modification does not lose any colour because the first and third
 vertices of any $V_j$ are black, and B is already represented in $\sigma\cap V_0$.

 Now we turn to the 6-element blocks connecting a $V_{j-1}$ with $V_j$.
Since the third vertex of $V_{j-1}$ and the first vertex of $V_j$ are
 surely not in the modified $\sigma$, which still dominates the path,
 it has to contain at least two vertices of the 6-element block.
And if it contains only two, then those necessarily are the second and fifth,
 both being black.
Should this be the case, we keep them in $\sigma'$.
 Otherwise, if the modified
 $\sigma$ contains more than two vertices of the 6-element block,
 then let $\sigma'$ contain precisely $E_{j_1}$, $E_{j_2}$, and $E_{j_3}$.
  Since $\sigma$ is a tropical dominating set, the same holds for $\sigma'$.
  It is also clear that $\left|\sigma'\right| \leq \left|\sigma\right|$.

  Next, we create a solution $\psi(G,\sigma)$ to the vertex cover problem on $G$, using $\sigma'$.
  Let $v_j \in \psi(G,\sigma)$ if and only if
   $\{E_{j_1},E_{j_2},E_{j_3}\} \subseteq \sigma'$.
  Then,
  $
  \left|\psi(G,\sigma)\right| = \left|\sigma'\right|-1-3n \leq \left|\sigma\right|-1-3n,
  $
  and when $\sigma$ is optimal, we have the equality
  $
  \opt_{VC}(G) = \tdn(\phi(G)) -1-3n.
  $
  Therefore,
  \begin{equation}\label{eq:beta}
  \left|\psi(G,\sigma)\right|-\opt_{VC}(G) \leq
  \left|\sigma\right|-\tdn(\phi(G)).
  \end{equation}

  We may assume that $G$ does not contain any isolated vertices.
  Under this assumption, we prove the lower bound
  $
  \opt_{VC}(G) \geq n/4
  $
  by induction, as follows:
  The bound clearly holds for an empty graph.
  Suppose that the bound holds for all graphs without isolated vertices
  with fewer than $n$ vertices.  
  Let $\sigma^\ast$ be a minimal vertex cover of $G$ and let $v \in V \setminus \sigma^\ast$.
  Then, all of $v$'s neighbours are in $\sigma^\ast$.
  Let $G'$ be the graph $G$ with $N[v]$ removed as well as any isolated vertices resulting
  from this removal.
  Let $n'$ be the number of vertices in $G'$.
  If $v$ has $1 \leq n_v \leq 3$ neighbours,
  then $0 \leq n_i \leq 2n_v$ vertices become isolated when $N[v]$ is removed,
  so
  $\opt_{VC}(G) = n_v+\opt_{VC}(G') \geq n_v+ n'/4 = n_v+(n-1-n_v-n_i)/4 \geq 
  n_v+(n-1-3n_v)/4 \geq n/4$.

  This allows us to upper-bound the optimum of $\phi(G)$:
  \begin{align}
    \tdn(\phi(G))
    & = \opt_{VC}(G)+1+3n \notag \\
    & \leq \opt_{VC}(G)+1+12 \cdot \opt_{VC}(G) \leq 14 \cdot \opt_{VC}(G). \label{eq:alpha}
  \end{align}
  It follows from (\ref{eq:beta}) and (\ref{eq:alpha}) that $\phi$ and $\psi$ constitute an L-reduction.
  % with parameters $\alpha = 14, \beta = 1$.
\end{proof}

%%{\Huge\sf
%%\hrulefill
%%
%%\centerline{current corrections are stopped here}
%%
%%\hrulefill
%%  }

\begin{corollary}\label{cor:neps}
  Fix $0 < \epsilon \leq 1$, and
  let $\mathcal{P}$ be the family of all vertex-coloured paths
  with at most $n^{\epsilon}$ colours, where $n$ is the number of vertices.
  Then TDS restricted to $\mathcal{P}$ is NP-hard.
\end{corollary}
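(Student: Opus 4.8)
The plan is to build on the NP-hardness of TDS on paths that we already have (the reduction of Section~\ref{npcomplet}, whose constructed path $\mathcal{P}$ admits a rainbow dominating set, i.e.\ satisfies $\tdn(\mathcal{P})\le c_0$, exactly when the underlying $3$-SAT instance is satisfiable). That reduction produces a path on $N$ vertices with $c_0=O(N)$ colours, so the colour-to-vertex ratio is only bounded by a constant. To land in the family $\mathcal{P}$ of the corollary I would \emph{dilute} the colours: append to $\mathcal{P}$ a long monochromatic tail in a brand-new colour, thereby increasing the number of vertices while keeping the number of colours essentially fixed, and argue that this padding shifts $\tdn$ by a known, instance-independent additive constant. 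Two features of the base construction make the bookkeeping clean: its path ends in a vertex $F$ of a \emph{unique} colour (hence $F$ lies in every tropical dominating set), and all of its colours already occur on the non-tail part.

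Concretely, fix a fresh colour $\beta$ not used in $\mathcal{P}$ and append at the endpoint $F$ a path $u_1u_2\cdots u_{3L}$ (with $F\sim u_1$) all of whose vertices receive colour $\beta$; call the result $\mathcal{P}^+$. It has $N+3L$ vertices and $c_0+1$ colours. Choosing $L=\big\lceil\tfrac13\big((c_0+1)^{1/\epsilon}-N\big)\big\rceil$ (and $L=0$ if this is negative) forces $c_0+1\le(N+3L)^{\epsilon}$, so $\mathcal{P}^+$ belongs to $\mathcal{P}$; since $\epsilon$ is a fixed constant and $c_0=O(N)$, we get $L=\mathrm{poly}(N)$ and the whole map is polynomial.

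The heart of the argument is the identity $\tdn(\mathcal{P}^+)=\tdn(\mathcal{P})+L$. The upper bound is immediate: an optimal tropical dominating set $D_0$ of $\mathcal{P}$ (which contains $F$), enlarged by $\{u_2,u_5,\dots,u_{3L-1}\}$, dominates the whole tail and supplies the colour $\beta$, giving a tropical dominating set of $\mathcal{P}^+$ of size $\tdn(\mathcal{P})+L$. For the lower bound, let $D$ be optimal for $\mathcal{P}^+$ and put $T=\{u_1,\dots,u_{3L}\}$. Since $N[u_j]\subseteq T$ for every $j\ge 2$, the set $D\cap T$ must dominate the $3L-1$ vertices $u_2,\dots,u_{3L}$, whence $|D\cap T|\ge\lceil(3L-1)/3\rceil=L$. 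Now $D'=D\setminus T$ is a tropical dominating set of $\mathcal{P}$ of size $\le|D|-L$: the forced endpoint $F\in D$ dominates itself and is the only $\mathcal{P}$-vertex adjacent to the tail, so no domination ``leaks'' across the junction and $D'$ still dominates all of $\mathcal{P}$; and because every colour of $\mathcal{P}$ occurs only on $\mathcal{P}$-vertices (the tail is entirely $\beta$), $D'$ automatically retains all of those colours and is tropical for $\mathcal{P}$. Hence $\tdn(\mathcal{P})\le\tdn(\mathcal{P}^+)-L$, and equality follows.

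Granting the identity, the reduction closes: the $3$-SAT instance is satisfiable iff $\tdn(\mathcal{P})\le c_0$ iff $\tdn(\mathcal{P}^+)\le c_0+L$, so deciding TDS with threshold $c_0+L$ on the family $\mathcal{P}$ solves $3$-SAT. The step I expect to require the most care is the lower bound for the additive shift: one must ensure it is \emph{exactly} $L$ for every instance, since an instance-dependent slack of $\pm 1$ would break the many-one reduction. The two devices above are precisely what remove the two potential sources of such slack --- the uniquely coloured endpoint $F$ eliminates the domination leak at the junction, and the fresh tail colour $\beta$ eliminates the tropical-bookkeeping leak --- after which the polynomiality of $L$ and the colour bound $c_0+1\le(N+3L)^{\epsilon}$ are routine.
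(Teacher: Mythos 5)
Your proof is correct and takes essentially the same route as the paper's: both pad the hard path instance with a long appended tail carrying fresh colour(s) so that the colour count falls below $n^{\epsilon}$, and both argue that the optimum decomposes additively across the junction. The only differences are cosmetic --- the paper reduces from TDS on arbitrary vertex-coloured paths (NP-hard by Theorem~\ref{thm:apxc}) and anchors the junction by giving one tail vertex a unique colour, whereas you reduce from the Section~\ref{npcomplet} construction and anchor with its uniquely coloured endpoint $F$; your explicit identity $\tdn(\mathcal{P}^+)=\tdn(\mathcal{P})+L$ is a slightly more careful rendering of the same bookkeeping.
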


\begin{proof}
  We reduce from TDS on paths with an unrestricted number of colours which is NP-hard by Theorem~\ref{thm:apxc}.
  Let $P^c$ be a vertex-coloured path on $n$ vertices with $c \leq n$ colours.
  Let $Q^{c'}$ be the instance obtained by adding a path $v_1, v_2, \dots, v_N$ with 
  $N = \lceil (n+2)^{1/\epsilon} \rceil$ vertices to the end of $P^c$
  (this is a polynomial-time reduction for any fixed constant $\epsilon > 0$).
  Let $A$ and $B$ be two new colours.
  In the added path $v_1, v_2, \dots, v_N$, let $v_2$ have colour $A$ and all the other
  vertices have colour $B$.
  The instance $Q^{c'}$ has $n' = n+N$ vertices and 
  $c' = c+2 \leq n+2 \leq N^{\epsilon} \leq (n')^{\epsilon}$ colours, so $Q^{c'} \in \mathcal{P}$.

  Given a minimum tropical dominating set $\sigma$ of $Q^{c'}$, we see that $v_2$ must be in $\sigma$
  to account for the colour $A$.
  We may further assume that $v_1$ is not in $\sigma$. If it were, then we could modify $\sigma$ by
  removing $v_1$ and adding the last vertex of $P^c$ instead.
  It is now clear that
  taking $\sigma$ restricted to $\{v_1, v_2, \dots, v_N\}$ together with a tropical dominating set of $P^c$
  yields a tropical dominating set of $Q^{c'}$ and that
  $\sigma$ restricted to $P^c$ is a tropical dominating set of $P^c$.
  Hence, $\sigma$ restricted to $P^c$ is a minimum tropical dominating set of $P^c$.
\end{proof}

We have seen that restricting the input to any graph family that contains at least the paths
can take us into APX but not further.
To find more tractable restrictions, we now introduce an additional
restriction on the number of colours.
The following lemma says that if the domination number grows
asymptotically faster than the number of colours, then we can lift
PTAS-inclusion of MDS to TDS.

\begin{lemma}
\label{lem:ptas-red}
  Let $\mathcal{G}$ be a family of vertex-coloured graphs.
  Assume that there exists a computable function $f \colon \Q \cap (0,\infty) \to \N$
  such that for every $r > 0$, $\gamma(G) > c/r$ whenever $G^c \in \mathcal{G}$ and $n(G^c) \geq f(r)$.
  Then,
  TDS restricted to $\mathcal{G}$ PTAS-reduces to
  MDS restricted to $\mathcal{G}$.
\end{lemma}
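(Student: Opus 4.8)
The plan is to set up a PTAS-reduction in the standard two-map format: a polynomial-time map sending a TDS-instance $G^c \in \mathcal{G}$ (together with an accuracy parameter) to an MDS-instance, and a polynomial-time back-map that turns an approximate dominating set into a tropical dominating set whose error is controlled. The key idea is that the hypothesis forces the number of colours $c$ to be asymptotically negligible compared to the optimum $\gamma(G)$, so the additive cost of "repairing" a dominating set into a tropical one (at most $c-1$ extra vertices, as in Proposition~\ref{thm:first}) can be absorbed into any desired multiplicative error. Since $G$ is literally the uncoloured version of $G^c$ and lies in $\mathcal{G}$, the forward map is essentially the identity (forget the colours); the work is entirely in the back-map and the error analysis.

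First I would recall the formal requirement of a PTAS-reduction: given a target ratio $1+\varepsilon > 1$ for TDS, I must produce a ratio $1+\delta(\varepsilon) > 1$ for MDS, and a back-map, such that any $(1+\delta)$-approximate MDS-solution is converted into a $(1+\varepsilon)$-approximate TDS-solution. I would set $r = 1/\varepsilon$ (or a comparably small constant) and invoke the hypothesis to obtain the threshold $f(r)$. For inputs with $n(G^c) < f(r)$ the instance has bounded size, so an optimal tropical dominating set can be found by brute force in time depending only on $\varepsilon$; this disposes of the finitely many small cases and lets me assume $n(G^c) \geq f(r)$, hence $\gamma(G) > c/r = \varepsilon c$, i.e.\ $c < \gamma(G)/\varepsilon$.

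Next I would describe the back-map. Given an MDS-solution $D$ for the uncoloured graph $G$ with $|D| \leq (1+\delta)\gamma(G)$, I add at most $c-1$ vertices, one for each colour missing from $D$, to obtain a tropical dominating set $D'$ with $|D'| \leq |D| + c - 1 < (1+\delta)\gamma(G) + c$. Using $\gamma(G) \leq \tdn(G^c)$ and the bound $c < \varepsilon\,\gamma(G) \leq \varepsilon\,\tdn(G^c)$, I would estimate
\begin{equation*}
  |D'| < (1+\delta)\gamma(G) + \varepsilon\,\tdn(G^c) \leq (1+\delta)\tdn(G^c) + \varepsilon\,\tdn(G^c) = (1+\delta+\varepsilon)\,\tdn(G^c).
\end{equation*}
Choosing $\delta = \delta(\varepsilon)$ appropriately small (and shrinking the constant in $r$ so that the colour term contributes at most, say, $\varepsilon/2$ rather than $\varepsilon$) makes the combined factor at most $1+\varepsilon$, which is exactly the required TDS-approximation guarantee.

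The step I expect to be the main obstacle is the bookkeeping to make the two multiplicative slacks combine cleanly into a single $(1+\varepsilon)$ factor: the MDS-error $\delta$ and the colour-induced additive slack both need to be folded into one ratio, so I would need to commit at the outset to splitting the budget (e.g.\ demand a $(1 + \varepsilon/2)$-approximation for MDS via $\delta = \varepsilon/2$, and choose $r$ so that $c < (\varepsilon/2)\gamma(G)$, giving $|D'| \leq (1+\varepsilon/2)\tdn + (\varepsilon/2)\tdn = (1+\varepsilon)\tdn$). A secondary technical point is confirming that $G$, the uncoloured instance handed to the MDS-oracle, genuinely belongs to $\mathcal{G}$ as a graph family in the sense required, so that the MDS-PTAS is actually applicable; this is why it is convenient that $\mathcal{G}$ is described as a family of (vertex-coloured) graphs closed under forgetting colours, and I would make that identification explicit rather than leave it implicit.
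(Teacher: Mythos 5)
Your proposal is correct and follows essentially the same route as the paper's proof: forget the colours, brute-force instances below the threshold $f(\varepsilon/2)$, repair an approximate dominating set by adding one vertex per missing colour, and combine the two slacks via $\gamma(G) \le \tdn(G^c)$ with the budget split $\varepsilon/2 + \varepsilon/2$. The only blemish is your initial choice $r = 1/\varepsilon$, which yields the useless bound $c < \gamma(G)/\varepsilon$; the correction you make later to take $r = \varepsilon/2$ (so that $c < (\varepsilon/2)\gamma(G)$) is exactly the paper's choice and is what makes the estimate close.
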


\begin{proof}
 To design a polynomial-time $(1+\varepsilon)$-approximation for any rational $\varepsilon > 0$,
  we pick $r=\varepsilon/2$; hence let $n_0 = f(\varepsilon/2)$.
  Let $G^c \in \mathcal{G}$ be a vertex-coloured graph.
  The reduction sends $G^c$ to $\phi(G^c) = G$,
  the instance of MDS obtained from $G^c$ by simply forgetting the colours.
  Let $\sigma$ be any dominating set in $G$.
  Assuming that $\sigma$ is a good approximation to $\gamma(G)$,
  we need to compute a good approximation $\psi(G^c, \sigma)$ to $\tdn(G^c)$.
  If $n(G^c) < n_0$, then we let $\psi(G^c,\sigma)$ be an optimal tropical dominating set of $G^c$.
  Otherwise, let $\psi(G^c,\sigma)$ be $\sigma$ plus a vertex for each remaining non-covered colour.
  Since $n_0$ depends on $\varepsilon$ but not on $G^c$ or $\sigma$,
  it follows that $\psi$ can be computed in time that is polynomial in $|V(G^c)|$ and $|\sigma|$.

  We claim that $\phi$ and $\psi$ provide a PTAS-reduction.
  This is clear if $n(G^c) < n_0$ since $\psi$ then computes an optimal solution to $G^c$.
  Otherwise, assume that $n(G^c) \geq n_0$ and that $\left|\sigma\right|/\gamma(G) \leq 1+\varepsilon/2$,
  i.e., $\sigma$ is a good approximation.
  Then,
  \[
  \frac{\left|\psi(G^c, \sigma)\right|}{\tdn(G^c)}
  %\leq \frac{\left|\sigma\right|+c}{\max \{ \gamma(G), c \}}
  \leq \frac{\left|\sigma\right|+c}{\gamma(G)}
  \leq \frac{2+\varepsilon}{2} + \frac{c}{\gamma(G)} < 1+\varepsilon,
  \]
  where the last inequality follows from $n(G) \geq n_0$ and the definition of $f$.
\end{proof}

\begin{example}
  The problem MDS is in \cc{PTAS} for planar graphs~\cite{Baker94},
  but \cc{NP}-hard even for planar subcubic graphs~\cite{GareyJohnson}.
  Let $\mathcal{G}$ be the family of planar graphs of maximum degree $\Delta$,
  for any fixed $\Delta \geq 3$,
  and with a number of colours $c < n^{1-\epsilon}$ for some fixed $\epsilon > 0$.
  Let $f(r) = \lceil ( \frac{\Delta+1}{r} )^{1/\epsilon} \rceil$ and note that
  $\gamma(G) \geq n/(\Delta+1) > cn^\epsilon/(\Delta+1) \geq c f(r)^\epsilon/(\Delta+1) \geq c/r$
  whenever $n \geq f(r)$.
  %% $c/\gamma(G) \leq n^{1-\epsilon} (\Delta+1)/n \leq (\Delta+1)/f(r)^{\epsilon} \leq r$ whenever $n \geq f(r)$.
  It then follows from Lemma~\ref{lem:ptas-red} that TDS is in \cc{PTAS}
  when restricted to planar graphs of fixed maximum degree.
\end{example}

\begin{example}
As a second example, we observe how the complexity of TDS on a path varies when
we restrict the number of colours.
For an arbitrary number of colours, it is APX-complete by Lemma~\ref{lem:inapx} and
Theorem~\ref{thm:apxc}.
If the number of colours is $\Ordo(n^{1-\epsilon})$ for some $\epsilon > 0$, then
it is in PTAS by Lemma~\ref{lem:ptas-red}, but NP-hard by Corollary~\ref{cor:neps}.
Finally, if the number of colours is $\Ordo(\log n)$, then it can be shown to be in PO by a simple
dynamic programming algorithm.
\end{example}

In the rest of this section, we look at the restriction where we consider the number of colours
as a fixed parameter.
We prove the following result.

\begin{theorem}\label{thm:interval-fpt}
  There is an algorithm for TDS restricted to interval graphs that runs in time $\Ordo(2^cn^2)$.
\end{theorem}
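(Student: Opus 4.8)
The plan is to design a dynamic programming algorithm that exploits the linear structure of interval graphs. First I would fix an interval representation of the input graph $G^c$ and order the intervals $v_1, \dots, v_n$ by their right endpoints (assume all endpoints distinct by a standard perturbation). The key structural fact about interval graphs is that if we process vertices in this order, then at any point the ``boundary'' that future decisions must interact with is a single clique: the set of intervals currently open. This suggests a left-to-right sweep in which the state records just enough information to (a) verify that the partial solution dominates every vertex whose interval has already closed, and (b) remember which of the $c$ colours have already been used.

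The main obstacle, and the source of the $2^c$ factor, is tracking the set of colours already chosen. I would therefore let the state carry a subset $T \subseteq C$ of colours used so far. For the domination bookkeeping, the crucial observation is that because the live intervals form a clique, we only need to know whether the vertex with the \emph{smallest right endpoint among the still-undominated open intervals} can yet be dominated by a future selection; more simply, we process right endpoints one at a time and, whenever an interval $v_i$ closes, we must already have committed to dominating it (either $v_i$ itself or some $v_j$ whose interval overlaps $v_i$ has been put in the set). I would define the DP table entry $D[i][T][s]$ as the minimum size of a partial tropical-style dominating set restricted to the first $i$ intervals, where $T$ is the set of colours used and $s$ encodes the domination status of the currently open cluster. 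The transition at step $i$ decides whether $v_i$ is placed in the solution: placing it adds $1$ to the size, adds $c(v_i)$ to $T$, and marks as dominated every currently open interval as well as $v_i$ itself.

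The reason the domination status $s$ does not blow up the running time is that, by the clique property of open intervals, domination of the whole open cluster is witnessed by a single chosen vertex intersecting it; so $s$ need only be a small amount of information (for instance, the index of the latest-closing undominated interval, or a boolean flag per open interval that can be collapsed because all open intervals are mutually adjacent). I expect the careful part of the argument to be showing that this compressed status is sufficient, i.e.\ proving by a cut-and-paste exchange argument that an optimal tropical dominating set restricted to the first $i$ vertices is faithfully represented by some reachable table entry, and conversely that every reachable entry corresponds to a genuine partial solution. Once correctness is established, the size bound is immediate: there are $O(n)$ positions, $2^c$ colour-subsets, and a polynomial (indeed $O(n)$) number of status values, with each transition computable in constant or linear time, giving the claimed $\Ordo(2^c n^2)$ bound. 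The final answer is read off from $D[n][C][\text{all dominated}]$, which is exactly $\tdn(G^c)$.
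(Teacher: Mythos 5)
Your overall architecture --- a left-to-right sweep in the right-endpoint order, with a DP state consisting of the subset of colours used so far plus a compressed domination status --- is the same kind of algorithm as the paper's, and it can be made to run in the claimed $\Ordo(2^cn^2)$ time. The paper parametrizes the table differently: its function $f(S,i)$ is indexed by the colour set $S$ and the \emph{last selected} vertex $i$ rather than the current sweep position, it restricts attention to ``proper'' solutions (no selected interval contained in another), shows that such a prefix-dominating set dominates exactly what the whole prefix $[1,i]$ dominates (Lemma~\ref{lem:rightmost}), and accounts for missing colours by adding $|C-S|$ only at the very end (Lemma~\ref{lem:interval}) instead of requiring the DP to reach $T=C$. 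Your version replaces the properness/last-selected-vertex bookkeeping by an explicit domination-status coordinate; both yield $\Ordo(2^c n^2)$ states and transitions.

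The genuine issue is that the status coordinate is where the proof actually lives, and your specification of it is internally inconsistent and partly wrong. First, the claim that ``whenever an interval $v_i$ closes, we must already have committed to dominating it'' is false: a vertex selected later can still dominate $v_i$, since any $v_j$ with $j>i$ and $l_j\le r_i$ is adjacent to $v_i$; forcing the decision at closing time would make the algorithm reject valid solutions. Second, the correct compression records the \emph{earliest}-closing (minimum right endpoint) vertex among \emph{all} processed, still-undominated vertices --- not only the currently ``open'' ones, and not the ``latest-closing'' one as you write later. The reason a single index suffices is that a future selection $v_j$ dominates an undominated processed $v_k$ iff $l_j\le r_k$, and these conditions are nested in $r_k$: dominating the minimum-$r_k$ vertex dominates all of them, whereas dominating the maximum-$r_k$ one does not. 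With that fix, and with the cut-and-paste correctness argument you explicitly defer actually carried out (this is precisely the content of the paper's Lemmas~\ref{lem:rightmost}--\ref{lem:recursive}), your DP is correct and achieves the stated bound.
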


This shows that TDS for interval graphs is FPT and, furthermore, that if
$c = \Ordo(\log n)$, then TDS is in PO.

Let $G^c$ be a vertex-coloured interval graph with vertex set $V = \{1, \dots, n\}$ and colour set $C$, and fix
some interval representation $I_i = [l_i,r_i]$ for each vertex $1 \leq i \leq n$.
Assume that the vertices are ordered non-decreasingly with respect to $r_i$.
For $a, b \in V$, we use (closed) intervals $[a,b] = \{ i \in V \mid a \leq i \leq b \}$ to denote subsets of vertices with respect to this order.

Define an \emph{$i$-prefix dominating set} as a subset $U \subseteq V$ of vertices that contains $i$ and dominates $[1,i]$ in $G^c$.
We say that $U$ is \emph{proper} if, for every $i, j \in U$,
we have neither $I_i \subseteq I_j$ nor $I_j \subseteq I_i$.

Let $f \colon \mathcal{P}(C) \times \left[0, n \right] \to \N \cup \{\infty\}$ be the function defined so that,
given a subset $S \subseteq C$ of colours and a vertex $i \in V$,
$f(S,i)$ is the least number of vertices in a proper $i$-prefix dominating set
that covers precisely the colours in $S$, or $\infty$ if there is no such set.
The value of $f(S,0)$ is defined to be $0$ when $S = \emptyset$ and $\infty$ otherwise.
Our proof is based on a recursive definition of $f$ (Lemma~\ref{lem:recursive})
and the fact that $f$ determines $\tdn$ (Lemma~\ref{lem:interval}).
First, we need a technical lemma.

\begin{lemma}
  \label{lem:rightmost}
  Let $U \subseteq V$ and let $i$ be the largest element in $U$.
  If $U$ is $i$-prefix dominating,
  then it dominates precisely the same vertices as $[1,i]$.
  In particular, $U$ dominates $G$ if and only if $[1,i]$ does.
\end{lemma}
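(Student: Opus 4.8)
The plan is to prove the set equality between the vertices dominated by $U$ and those dominated by $[1,i]$; the displayed ``in particular'' claim is then the special case of this equality in which the dominated set is all of $V$. Write $N[X]=\bigcup_{x\in X}N[x]$ for the set of vertices dominated by a vertex set $X$. One inclusion comes for free: since $i$ is by hypothesis the largest element of $U$, every $u\in U$ satisfies $u\le i$, so $U\subseteq[1,i]$ and hence $N[U]\subseteq N[[1,i]]$. All the content lies in the reverse inclusion $N[[1,i]]\subseteq N[U]$.

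For that reverse inclusion I would fix an arbitrary vertex $v\in N[[1,i]]$ and show $v\in N[U]$, splitting into two cases according to the right-endpoint order. If $v\in[1,i]$, there is nothing to do: the assumption that $U$ is $i$-prefix dominating says precisely that $U$ dominates $[1,i]$, so $v\in N[U]$. The substantive case, and the step I expect to be the crux, is $v>i$.

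For $v>i$ the idea is that the single vertex $i$ already dominates $v$. Since $v\in N[[1,i]]$ and $v\notin[1,i]$, some $j\le i$ is adjacent to $v$, so $I_j\cap I_v\ne\emptyset$. Because the vertices are ordered by non-decreasing right endpoint and $j\le i<v$, we have $r_j\le r_i\le r_v$. A nonempty intersection $I_j\cap I_v$ with $r_j\le r_v$ forces $l_v\le r_j$, and chaining the inequalities gives $l_v\le r_j\le r_i\le r_v$. Thus the point $r_i$ lies in $I_v$ (and trivially in $I_i$), so $I_i\cap I_v\ne\emptyset$, i.e.\ $i$ is adjacent to $v$. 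Since $i\in U$, this gives $v\in N[U]$, completing the reverse inclusion.

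Combining the two inclusions yields $N[U]=N[[1,i]]$, which is the first assertion. The ``in particular'' then follows at once: $U$ dominates $G$ iff $N[U]=V$ iff $N[[1,i]]=V$ iff $[1,i]$ dominates $G$. The only delicate point is setting up the case split on $v$ and using the right-endpoint ordering correctly; the interval arithmetic is otherwise routine.
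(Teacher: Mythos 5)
Your proof is correct and rests on the same key step as the paper's: using the right-endpoint ordering to chain $l_v \le r_j \le r_i \le r_v$ and conclude that the rightmost vertex $i \in U$ already dominates any $v > i$ dominated by some $j \le i$. The only difference is presentational — you prove the set equality directly while the paper argues by contradiction — so the two arguments are essentially identical.
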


\begin{proof}
  Assume to the contrary that there is a $j \in [1,i] - U$
  that dominates some $k > i$, and that $k$ is not dominated by $U$.
  This means that $j$ is connected to $k$ in $G$, so $l_k \leq r_j$.
  But then we have $l_k \leq r_j \leq r_i \leq r_k$, so
  $[l_i,r_i] \cap [l_k,r_k] \neq \emptyset$, hence $i \in U$ dominates $k$,
  a contradiction.
\end{proof}

\begin{lemma}
\label{lem:interval}
  For every interval graph $G^c$, we have
  \[
    \tdn(G^c) = \min \{ f(S,i) + \left|C - S\right| \mid S \subseteq C, i \in V, [1,i] \textrm{ dominates } G^c \}.
  \]
\end{lemma}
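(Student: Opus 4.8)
The plan is to prove the identity by establishing two inequalities, showing that each side is at most the other. The intuition is that $f(S,i)$ captures the cost of optimally dominating the prefix $[1,i]$ using only colours in $S$, with the constraint that $i$ itself is selected; by Lemma~\ref{lem:rightmost}, requiring $i$ to be the rightmost selected vertex and demanding that $[1,i]$ dominate all of $G^c$ guarantees that the prefix-dominating set in fact dominates the entire graph. The term $\left|C-S\right|$ then accounts for the extra vertices needed to realise the missing colours, which can always be added cheaply since each colour appears somewhere.

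For the inequality $\tdn(G^c) \le \min\{\cdots\}$, I would fix any $S$ and any $i$ such that $[1,i]$ dominates $G^c$ and $f(S,i) < \infty$. Let $U$ be a proper $i$-prefix dominating set of size $f(S,i)$ covering exactly the colours in $S$. Since $i$ is the largest element of $U$ (properness together with $i \in U$ forces this, or one argues directly) and $U$ is $i$-prefix dominating, Lemma~\ref{lem:rightmost} tells us that $U$ dominates exactly what $[1,i]$ dominates, namely all of $G^c$. Augmenting $U$ with one vertex of each colour in $C - S$ yields a tropical dominating set of size $f(S,i) + \left|C-S\right|$, so $\tdn(G^c)$ is bounded above by this quantity, and hence by the minimum.

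For the reverse inequality, I would start from an optimal tropical dominating set $D$ of $G^c$. The first step is to convert $D$ into a proper dominating set without increasing its size: whenever $I_i \subseteq I_j$ for two selected vertices, the smaller interval $i$ can be swapped out in favour of $j$ (or absorbed), since $j$ dominates everything $i$ does; care must be taken to preserve the colour set, which is where a short argument is needed. Let $i$ be the largest vertex in the resulting proper dominating set $D'$, and let $S$ be the set of colours that actually occur among the vertices of $D'$ that lie in $[1,i]$ and are ``essential'' for prefix domination. The set $D'$ restricted appropriately is then a proper $i$-prefix dominating set covering some colour set $S$, with $[1,i]$ dominating $G^c$ by Lemma~\ref{lem:rightmost}, and the vertices outside contribute at least $\left|C-S\right|$ colours, giving $\left|D'\right| \ge f(S,i) + \left|C-S\right| \ge \min\{\cdots\}$.

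The main obstacle I anticipate is the bookkeeping in the reverse direction: carefully decomposing an arbitrary optimal tropical dominating set into a proper prefix-dominating core plus colour-completing vertices, while ensuring that the colour set $S$ attributed to the core and the count $\left|C-S\right|$ for the remainder do not double-count or lose any colour. In particular, the properness-normalisation step must be shown not to destroy tropicality, and the choice of $i$ as the rightmost element must be reconciled with the requirement that $[1,i]$ alone dominate $G^c$ --- this is precisely the role played by Lemma~\ref{lem:rightmost}, so the delicate part is invoking it with the correct $U$ and verifying that the leftover vertices can be accounted for by exactly $\left|C - S\right|$ rather than something larger. Everything else reduces to routine manipulation of the definition of $f$.
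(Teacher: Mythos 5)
Your overall strategy coincides with the paper's: for the upper bound, take a witness for $f(S,i)$, use Lemma~\ref{lem:rightmost} to conclude it dominates all of $G^c$ when $[1,i]$ does, and add one vertex per colour of $C-S$; for the lower bound, normalise an optimal tropical dominating set to a proper one and read off $S$ and $i$. The first direction is fine (modulo a small point the paper itself glosses over: one needs the minimising witness for $f(S,i)$ to have $i$ as its largest element for Lemma~\ref{lem:rightmost} to apply, and properness alone does not force this, contrary to your parenthetical).

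The genuine problem is in your reverse direction. You propose to make the optimal set $D$ proper while \emph{preserving the colour set}, deferring this to ``a short argument''. No such argument exists: if a vertex $i$ carrying a unique colour has $I_i \subseteq I_j$ for another selected vertex $j$, you cannot discard $i$ without losing its colour and cannot keep it without violating properness. The correct move --- the one the paper makes --- is \emph{not} to preserve tropicality: delete every vertex whose interval is contained in that of another selected vertex, let $S$ be whatever colours survive in the resulting set $U'$, and observe that the deleted vertices $D \setminus U'$ must contain at least one vertex of each colour in $C-S$ (because $D$ was tropical), so $\left|D\right| \ge \left|U'\right| + \left|C-S\right| \ge f(S,i)+\left|C-S\right|$ with $i = \max U'$. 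Relatedly, your final inequality ``$\left|D'\right| \ge f(S,i)+\left|C-S\right|$'' misplaces the charge: the $\left|C-S\right|$ term is paid by the vertices of $D$ outside the proper core, so the inequality must be stated for $\left|D\right|$, not $\left|D'\right|$ (as written it is false in general, since $D'$ itself is only guaranteed to have size at least $f(S,i)$). The vague further restriction to vertices ``essential for prefix domination'' is unnecessary; $U'$ itself is the proper $i$-prefix dominating set you need. With these corrections your argument becomes exactly the paper's.
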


\begin{proof}
  $f(S,i)$ is the size of some set $U \subseteq V$ that covers the colours $S$ and that, by Lemma~\ref{lem:rightmost}, dominates $G^c$.
  We obtain a tropical dominating set by adding a vertex of each missing colour in $C - S$.
  Therefore, each expression $f(S,i) + \left|C - S\right|$ on the right-hand side corresponds to the size of a tropical dominating set, so $\tdn(G^c)$ is at most the minimum of these.

  For the opposite inequality,
  let $U$ be a minimum tropical dominating set of $G^c$.
  Remove from $U$ all vertices $i$ for which there is some $j \in U$ with
  $I_i \subseteq I_j$, and call the resulting set $U'$.
  By construction $U'$ still dominates $G^c$.
  Let $S$ be the set of colours covered by $U'$.
  Then $U'$ is a minimum set with these properties,
  so by the definition of $f$,
  $\left|U'\right| = f(S,i)$, where $i$ is the greatest element in $U'$.
  Since $U' \subseteq [1,i]$, it follows that $[1,i]$ dominates $G^c$.
  Therefore, the right-hand side is at most $f(S,i) + \left|C - S\right| = \left|U'\right| + \left|C - S\right| \leq \left|U\right| = \tdn(G^c)$.
\end{proof}

The following lemma gives a recursive definition of the function $f$ that permits us
to compute it efficiently when the number of colours in $C$ grows at most logarithmically.

\begin{lemma}
  \label{lem:recursive}
  For every interval graph $G^c$, the function $f$ satisfies the following recursion:
  \[
  f(S,0) = \begin{cases}
  0 & \textrm{ if $S = \emptyset$,}\\
  \infty & \textrm{ otherwise;}
  \end{cases}
  \]
  \[
  f(S,i) = 1 + \min \{ f(S',j) \mid S' \cup \{ c(i) \} = S, j \in P_i \}, \qquad \textrm{ for $i \in V,$}
  \]
  where $j \in P_i$ if and only if either $j = 0$ and $\{i\}$ is $i$-prefix dominating, or
  $j \in V$, $j < i$,  $\left[1,j\right] \cup \{i\}$ is $i$-prefix dominating, and $I_i \not\subseteq I_j$, $I_j \not\subseteq I_i$.
\end{lemma}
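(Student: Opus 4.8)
The plan is to prove the recursion by establishing the two defining cases separately, with the main work concentrated on the inductive step for $i \in V$. The base case $f(S,0)$ simply restates the definition already given just before the statement of Lemma~\ref{lem:recursive}, so nothing needs to be argued there. For the recursive case, I would fix a vertex $i \in V$ and a colour set $S$, and prove the equality $f(S,i) = 1 + \min\{f(S',j) \mid S' \cup \{c(i)\} = S,\ j \in P_i\}$ by showing the two inequalities $\leq$ and $\geq$. The intuition driving both directions is that any proper $i$-prefix dominating set $U$ covering exactly $S$ contains $i$ as its largest element (since $i \in U$ and $U$ dominates $[1,i]$, and properness forbids any $j > i$ whose interval would contain or be contained in $I_i$); removing $i$ leaves a set $U' = U - \{i\}$ whose largest element $j$ is a natural candidate for the recursive call.

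For the inequality $f(S,i) \leq 1 + \min\{\dots\}$, I would start from an optimal witness for some $f(S',j)$ with $j \in P_i$ and $S' \cup \{c(i)\} = S$, call it $U'$, add $i$ to it, and verify that $U = U' \cup \{i\}$ is a proper $i$-prefix dominating set covering exactly $S$. Properness of $U$ follows because $U'$ is proper and the membership condition $j \in P_i$ guarantees $I_i \not\subseteq I_j$ and $I_j \not\subseteq I_i$ for the largest element $j$ of $U'$; I must also check no smaller element $j'' \in U'$ violates properness with $i$, which holds because $r_{j''} \leq r_j \leq r_i$ together with properness of $U'$ rules out containment either way. The domination condition is exactly what $j \in P_i$ encodes (that $[1,j] \cup \{i\}$ dominates $[1,i]$). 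Since $|U| = |U'| + 1 = f(S',j) + 1$, this gives the desired bound.

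For the reverse inequality $f(S,i) \geq 1 + \min\{\dots\}$, I would take an optimal proper $i$-prefix dominating set $U$ covering $S$ with $|U| = f(S,i)$, observe $i$ is its largest element, set $U' = U - \{i\}$, and let $j$ be the largest element of $U'$ (or $j = 0$ if $U' = \emptyset$). I would then argue that $U'$ is a proper $j$-prefix dominating set, that it covers some $S'$ with $S' \cup \{c(i)\} = S$, and crucially that $j \in P_i$. The domination claim — that $U'$ dominates $[1,j]$ and that $[1,j] \cup \{i\}$ dominates $[1,i]$ — is where I would lean on Lemma~\ref{lem:rightmost}: since $j$ is the largest element of the $j$-prefix dominating candidate $U'$, that lemma equates the vertices dominated by $U'$ with those dominated by $[1,j]$, allowing me to translate domination by $U$ into domination by $[1,j] \cup \{i\}$. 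Finally $|U'| \geq f(S',j)$ by definition of $f$, yielding $f(S,i) = |U'| + 1 \geq f(S',j) + 1 \geq 1 + \min\{\dots\}$.

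The main obstacle I anticipate is the careful handling of the domination bookkeeping in the $\geq$ direction, specifically confirming that the interval $[1,j]$ (rather than the sparser set $U'$) is what the membership condition $j \in P_i$ refers to, and that passing between ``$U'$ dominates'' and ``$[1,j]$ dominates'' is legitimate. This is precisely the content Lemma~\ref{lem:rightmost} was designed to supply, so the proof should reduce to invoking it at the right moment; the remaining verifications about properness and the colour-set equation $S' \cup \{c(i)\} = S$ are routine once the correct $j$ and $S'$ are identified. A minor point needing attention is the boundary case $U' = \emptyset$, handled by the $j = 0$ clause in the definition of $P_i$, which must be cross-checked against the base case $f(\emptyset, 0) = 0$.
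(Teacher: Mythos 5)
Your overall strategy coincides with the paper's: both directions are handled by adding or removing the top vertex $i$, recursing on the largest element $j$ of the remainder, invoking Lemma~\ref{lem:rightmost} to pass between domination by a witness $U'$ and domination by $[1,j]$, and checking properness of the assembled set by comparing right endpoints. The $\leq$ direction and the properness bookkeeping (including the observation that $r_{j''}\leq r_j\leq r_i$ plus properness of $U'$ rules out nesting with $i$) are essentially the paper's argument, and the $j=0$ boundary case is handled the same way.

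There is, however, one genuine gap, and it sits at the crux of the $\geq$ direction: you must show that $U'=U-\{i\}$ actually dominates $[1,j]$, where $j$ is the largest element of $U'$, and you propose to get this from Lemma~\ref{lem:rightmost}. That lemma cannot do this job: its hypothesis is that the set in question already \emph{is} $j$-prefix dominating, and its conclusion only compares what such a set dominates with what $[1,j]$ dominates. Calling $U'$ a ``$j$-prefix dominating candidate'' and then applying the lemma to it is circular. The missing argument --- and the reason the properness condition is built into the definition of $f$ at all --- is a contradiction via properness: if some $m\leq j$ were dominated by $U$ but not by $U'$, it would be dominated only by $i$, so $l_i\leq r_m$; being undominated by $j$ with $r_m\leq r_j$ forces $r_m<l_j$; together with $r_j\leq r_i$ this yields $I_j\subseteq I_i$, contradicting the properness of $U$. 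This is exactly the step the paper carries out, and without it the recursion is not justified (for a non-proper $U$ the claim is simply false, e.g.\ when $I_i$ reaches far to the left and is the sole dominator of some early vertex). A second, smaller point: your justification that $i$ is the largest element of $U$ only excludes larger vertices whose intervals nest with $I_i$, so it is incomplete --- though the paper's proof makes the same assumption tacitly.
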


\begin{proof}
  The proof is by induction on $i$.
  The base case $i = 0$ holds by definition.
  Assume that the lemma holds for all $0 \leq i \leq k-1$ and all $S \subseteq C$.

  Let $U$ be a minimum proper $k$-prefix dominating set that covers precisely the colours in $S$.
  We want to show that $\left|U\right| = f(S,k)$.
  If $U = \{k\}$, then $S = \{c(k)\}$, and it follows immediately that $f(S,k) = 1$.
  Otherwise, $U - \{k\}$ is non-empty.
  Let $j < k$ be the greatest vertex in $U - \{k\}$.
  Assume that $U - \{k\}$ is not $j$-prefix dominating.
  Then, there is some $i < j$ that is not dominated by $j$ but that is
  dominated by $k$, hence $l(k) \leq r(i) < l(j)$.
  Therefore $I_j \subseteq I_k$, so $U$ is not proper, a contradiction.
  Hence, $U - \{k\}$ is a proper $j$-prefix dominating set.
  By induction, $\left|U - \{k\}\right| \geq \min \{ f(S',j) \mid S' \cup \{c(k)\} = S \}$.
  This shows the inequality $\left|U\right| \geq f(S,k)$.

  For the opposite inequality, it suffices to show that if $\left[1,j\right] \cup \{k\}$ is $k$-prefix dominating,
  $U'$ is any proper $j$-prefix dominating set, and $I_k \not\subseteq I_j, I_j \not\subseteq I_k$, then
  $U' \cup \{k\}$ is a proper $k$-prefix dominating set.
  It follows from Lemma~\ref{lem:rightmost} that $U' \cup \{k\}$ is $k$-prefix dominating.
  Since $I_k \not\subseteq I_j$, we must have $r_i \leq r_j < r_k$ for all $i < j$,
  hence $I_k \not\subseteq I_i$.
  Assume that $I_i \subseteq I_k$ for some $i < j$.
  Then, since $I_j \not\subseteq I_k$, we have $l_j < l_k \leq l_i \leq r_i \leq r_j$,
  which contradicts $U'$ being proper.
  It follows that $U' \cup \{k\}$ is proper.
\end{proof}

\noindent
{\it Proof of Theorem~\ref{thm:interval-fpt}.}
The sets $P_i$ for $i \in V$ in Lemma~\ref{lem:recursive}
can be computed in time $\Ordo(n^2)$ as follows.
Let $a_i \in V$ be the least vertex such that $i$ dominates $[a_i,i]$,
and
let $b_j \in V$ be the least vertex such that $[1,j]$ does not dominate $b_j$,
or $\infty$ if $[1,j]$ dominates $G$.
Note that $i$ does not dominate any vertex strictly smaller than $a_i$
since the vertices are ordered non-decreasingly with respect to the
right endpoints of their intervals.
Therefore, $P_i = \{ j < i \mid a_i \leq b_j, I_i \not\subseteq I_j, I_j \not\subseteq I_i \}$.
The vectors $a_i$ and $b_j$
are straightforward to compute in time $\Ordo(n^2)$,
hence $P_i$ can be computed in time $\Ordo(n^2)$ using this alternative definition.

When $P_i$ is computed for all $i \in V$,
the recursive definition of $f$ in Lemma~\ref{lem:recursive} can be used
to compute all values of $f$ in time $\Ordo(2^c n^2)$,
and it can easily be modified to compute, for each $S$ and $i$,
some specific $i$-prefix dominating set of size $f(S,i)$,
also in time $\Ordo(2^c n^2)$.
Therefore, by Lemma~\ref{lem:interval}, one can
find a minimum tropical dominating set in time $\Ordo(2^c n^2)$.
 \hfill $\Box$

\paragraph{Acknowledgements.}

The research of Cs.\ Bujt\'as and Zs.\ Tuza was supported in part by
 the European Union and Hungary, co-financed by the European Social Fund
 through the project T\'AMOP-4.2.2.C-11/1/KONV-2012-0004.

\bibliography{ref}

% For more on approximation-preserving reductions and other classes contained in \cc{NPO},
% see Ausiello et al.~\cite{Ausiello:etal:CA} and Crescenzi~\cite{Crescenzi97}.

\end{document}